\title{Approximation Algorithms for Generalized and Variable-Sized Bin Covering}
\author{Matthias Hellwig\thanks{Humboldt University of Berlin, Germany, \texttt{mhellwig@informatik.hu-berlin.de}} ~ and Alexander Souza\thanks{University of Freiburg, Germany, \texttt{souza@informatik.uni-freiburg.de}}}
\date{}
\newcommand{\ALG}[1][]{{\text{{\sc \ifthenelse{\equal{#1}{}}{alg}{#1}}}}}
\newcommand{\OPT}{\text{{\sc opt}}}
\newcommand{\opt}{\OPT}
\newcommand{\alg}[1][]{{\ALG{#1}}}
\newcommand{\R}{\mathbb{R}}
\newtheorem{theorem}{Theorem}
\newtheorem{observation}[theorem]{Observation}
\newtheorem{lemma}[theorem]{Lemma}
\renewcommand{\P}{{\cal P}}
\newcommand{\eps}{\varepsilon}
\newcommand{\command}[1]{{\bf #1}}
\newcommand{\BEGIN}[1][]{\command{begin} \ifthenelse{\equal{#1}{}}{}{\comment{ #1}} \\  }
\newcommand{\END}[1][]{\command{end} \ifthenelse{\equal{#1}{}}{}{\comment{ #1}} \\}
\newcommand{\comment}[1]{{\footnotesize // #1}}
\newcommand{\wlg}{w.\,l.\,o.\,g.}
\newcommand{\wlgs}{\wlg\ }
\DeclareMathOperator{\div}{div}
\renewenvironment{proof}[1][]{\noindent {\bf Proof\ifthenelse{\equal{#1}{}}{}{~(#1)}.}\ }{ \qed \\}
\newenvironment{proof2}[1][]{\noindent {\bf Proof\ifthenelse{\equal{#1}{}}{}{~(#1)}.}\ }{ \qed}
\newcommand{\NFD}{\ALG[nfd]}
\newcommand{\NFDS}{\NFD\ }
\newcommand{\NFDs}{\NFDS}
\newtheorem{definition}[theorem]{Definition}
\newtheorem{example}[theorem]{Example}
\newtheorem{prop}[theorem]{Property}
\newcommand{\NP}{\text{\small NP}}
\renewcommand{\P}{{\text{\small P}}}
\newcommand{\prob}[1]{{\sc #1}}
\newcommand{\bigO}[1]{\text{O}(#1)}
\begin{document}

\maketitle

\begin{abstract}
\noindent In this paper, we consider the \prob{Generalized Bin Covering} problem: We are given $m$ bin types, where each bin of type $i$ has profit $p_i$ and demand $d_i$. Furthermore, there are $n$ items, where item $j$ has size $s_j$. A bin of type $i$ is covered if the set of items assigned to it has total size at least the demand $d_i$. In that case, the profit of $p_i$ is earned and the objective is to maximize the total profit. To the best of our knowledge, only the cases $p_i = d_i = 1$ (\prob{Bin Covering}) and $p_i = d_i$ (\prob{Variable-Sized Bin Covering}) have been treated before. We study two models of bin supply: In the unit supply model, we have exactly one bin of each type, i.\,e., we have individual bins. By contrast, in the infinite supply model, we have arbitrarily many bins of each type. Clearly, the unit supply model is a generalization of the infinite supply model, since we can simulate the latter with the former by introducing sufficiently many copies of each bin. To the best of our knowledge the unit supply model has not been studied yet. It is well-known that the problem in the infinite supply model is $\NP$-hard, which can be seen by a straightforward reduction from \prob{Partition}, and this hardness carries over to the unit supply model. This also implies that the problem can not be approximated better than two, unless $\P = \NP$.

We begin our study with the unit supply model. Our results therein hold not only asymptotically, but for all instances. This contrasts most of the previous work on \prob{Bin Covering}, which has been asymptotic. We prove that there is a combinatorial $5$-approximation algorithm for \prob{Generalized Bin Covering} with unit supply, which has running time $\bigO{nm\sqrt{m+n}}$. This also transfers to the infinite supply model by the above argument. Furthermore, for \prob{Variable-Sized Bin Covering}, in which we have $p_i = d_i$, we show that the natural and fast \prob{Next Fit Decreasing} ($\NFD$) algorithm is a $9/4$-approximation in the unit supply model. The bound is tight for the algorithm and close to being best-possible, since the problem is inapproximable up to a factor of two, unless $\P = \NP$. Our analysis gives detailed insight into the limited extent to which the optimum can significantly outperform $\NFD$.

Then the question arises if we can improve on those results in asymptotic notions, where the optimal profit diverges. We discuss the difficulty of defining asymptotics in the unit supply model. For two natural definitions, the negative result holds that \prob{Variable-Sized Bin Covering} in the \emph{unit} supply model does not allow an APTAS. Clearly, this also excludes an APTAS for \prob{Generalized Bin Covering} in that model. Nonetheless, we show that there is an AFPTAS for \prob{Variable-Sized Bin Covering} in the \emph{infinite} supply model.
\end{abstract}

\section{Introduction}
\paragraph{Models and Motivation}
In this paper, we study generalizations of the \NP-hard classical \prob{Bin Covering} problem. In this problem, we have an infinite supply of unit-sized bins and a collection of items having individual sizes. The objective is to pack items into as \emph{many} bins as possible. That is, we seek to maximize the number of \emph{covered} bins, where a bin is covered if the total size of the packed items is at least the size of the bin. This problem is the dual of the classical \prob{Bin Packing} problem, where the goal is to pack the items into as \emph{few} bins as possible; see the survey~\cite{CoffmanGareyJohnson:1997}. \prob{Bin Covering} has received considerable attention in the past~\cite{AssmannJohnsonKleitmanLeung:1984,CsirikFrenkLabbeZhang:1999,CsirikFrenk:1990,CsirikTotik:1988,CsirikJohnsonKenyon:2001,JansenSolis-oba:2003}. We will survey relevant literature below.

In \prob{Generalized Bin Covering}, we have a set $I = \{1, \dots, m\}$ of \emph{bin types} and each bin $i \in I$ of some type has a \emph{profit} $p_i$ and \emph{demand} $d_i$. We denote the set of items by $J = \{1, \dots, n\}$ and define that each item $j \in J$ has a \emph{size} $s_j$. A bin is \emph{covered} or \emph{filled} if the total size of the packed items is at least the demand $d_i$ of the bin, in which case we earn profit $p_i$. The goal is to maximize the total profit gained. The special case with $p_i = d_i$ is known as \prob{Variable-Sized Bin Covering}. To the best of our knowledge, the model with general profits and demands has not been studied in the \prob{Bin Covering} setting before. Furthermore, we consider two models regarding the supply of bins: In the \emph{infinite supply model} -- as the name suggests -- we have arbitrary many bins available of each bin type. By contrast, we introduce the \emph{unit supply model}, in which we have one bin per type available, i.\,e., we speak of individual bins rather than bin types. Observe that the unit supply model is more general than the infinite supply model: By introducing $n$ copies of each bin, we can simulate the infinite supply model with the unit supply model. The converse is obviously not true. 

For motivating these generalizations, we mention the following two applications from trucking and canning. In the first application, suppose that a moving company receives a collection of inquiries for moving contracts. Each inquiry has a certain volume and yields a certain profit if it is served (entirely). The company has a fleet of trucks, where each truck has a certain capacity. The objective is to decide which inquiries to serve with the available trucks as to maximize total profit. This problem clearly maps to \prob{Generalized Bin Covering} in the unit supply model: the inquiries relate to the bins, while the trucks relate to the items. Notice that the unit supply model is essential here, since there the inquiries are individual, i.\,e., are not available arbitrarily often. Notice that all previous work on \prob{Bin Covering} exclusively considers the infinite supply model and is hence not applicable here. Also notice that the \prob{Generalized Bin Covering} problem applies especially if the profits do not necessarily correlate with the volume, but also depend on the types of goods. For example, shipping a smaller amount of valuables may yield higher profit than shipping a larger amount of books. As a second application, consider a canning factory, in which objects, e.g., fish, have to be packed into bins (of certain types having different sizes), such that the total packed weight reaches at least a certain respective threshold value. Here it is reasonable to assume that the available number of bins is arbitrary, i.\,e., the infinite supply model is suitable. If the profits are proportional to the threshold values, then we have an application for the \prob{Variable-Sized Bin Covering} problem.

Let $\mathcal{I}$ denote the family of all bin type sets and $\mathcal{J}$ the family of all item sets. Furthermore, let $\ALG(I, J)$ and $\OPT(I, J)$ be the respective profits gained by some algorithm $\ALG$ and by an optimal algorithm $\OPT$ on an instance $(I, J) \in \mathcal{I} \times \mathcal{J}$. The \emph{approximation ratio} of an algorithm $\ALG$, is defined by $\rho(\ALG) = \sup\{ \OPT(I, J)/\ALG(I, J) \mid I \in \mathcal{I}, J \in \mathcal{J} \}$. If $\rho(\ALG) \le \rho$ holds for an algorithm \prob{alg} with running time polynomial in the input size, then it is called a \emph{$\rho$-approximation}. If there is a $(1 + \eps)$-approximation for every $\eps > 0$, then the respective family of algorithms is called a \emph{polynomial time approximation scheme} (PTAS). If the running of a PTAS is additionally polynomial in $1/\eps$, then it is called a \emph{fully polynomial time approximation scheme} (FPTAS). With $\bar{\rho}(\ALG) = \lim_{p \rightarrow \infty} \sup\{ \OPT(I, J)/\ALG(I, J) \mid I \in \mathcal{I}, J \in \mathcal{J}, \OPT(I, J) \ge p \}$ we denote the \emph{asymptotic approximation ratio} of an algorithm $\ALG$. The notions of an asymptotic approximation algorithm and of asymptotic (F)PTAS (A(F)PTAS) transfer analogously.

\paragraph{Our Contribution}
In terms of results, we make the following contributions. In Section~\ref{sec:generalized_unit_supply} we consider \prob{Generalized Bin Covering} in the unit supply model. Our first main result is a $5$-approximation algorithm with running time $\bigO{nm\sqrt{m+n}}$ in Theorem~\ref{thm:gbc5approx}. The basic idea is to define an algorithm for a modified version of the problem. Even though this solution may not be feasible for the original problem, it will enable us to provide a good solution for the original problem. As a side result, which might be interesting in its own right, we obtain an integrality gap of two for a linear program of the modified problem and a corresponding integer linear program.

For \prob{Variable-Sized Bin Covering} in the infinite supply model, it is not hard to see that any reasonable algorithm (using only the largest bin type) is an asymptotic $2$-approximation. The situation changes considerably in the \emph{unit} supply model: Firstly, limitations in bin availability have to be respected. Secondly, the desired approximation guarantees are non-asymptotic (where we explain the issue concerning the asymptotics before Theorem~\ref{thm:lower_bound}). Our main result here is a tight analysis of the \prob{Next Fit Decreasing} ($\NFD$) algorithm in the unit supply model for \prob{Variable-Sized Bin Covering}, which can be found in Section~\ref{sec:variable-sized_unit_supply}. Theorem~\ref{thm:NFDisSuperb} states that $\NFD$ yields an approximation ratio of at most $9/4 = 2.25$ with running time $\bigO{n \log n + m \log m}$. The approximation guarantee is tight for the algorithm, see Example~\ref{exa:nfd_lower_bound}. The main idea behind our analysis is to classify bins according to their coverage: The bins that $\NFD$ covers with single items are -- in some sense -- optimally covered. If a bin is covered with at least two items, then their total size is at most twice the demand of the covered bin. Hence those bins yield at least half the achievable profit. Intuitively, the problematic bins are those that are not covered by $\NFD$: An optimal algorithm might recombine leftover items of $\NFD$ with other items to cover some of these bins and increase the profit gained. Our analysis gives insight into the limited extend to which such recombinations can be profitable. Firstly, our result is interesting in its own right, since $\NFD$ is a natural and fast algorithm. Secondly, it is also close to being best possible, in the following sense. 

A folklore reduction from \prob{Partition} yields that even the classical \prob{Bin Covering} problem is not approximable within a factor of two, unless $\P = \NP$. This clearly excludes the possibility of a PTAS for \prob{Bin Covering} in any of the models. The reduction crucially uses that there are only two identical bins in the \prob{Bin Covering} instance it creates. Then the question arises if one can improve in an asymptotic notion, where the optimal profit diverges. Indeed, for the classical \prob{Bin Covering} problem with \emph{infinite supply}, there actually is an A(F)PTAS~\cite{CsirikJohnsonKenyon:2001,JansenSolis-oba:2003}. 

However, since we have individual bins rather than bin types in the \emph{unit supply} model, there are difficulties for defining a meaningful asymptotics for \prob{Variable-Sized Bin Covering} therein. We discuss this issue in more detail before Theorem~\ref{thm:lower_bound}. Moreover, in Theorem~\ref{thm:lower_bound} we show that, even if there are $m > 2$ bins available and the optimal profit diverges, there are instances, for which no algorithm can have an approximation ratio smaller than $2 - \eps$ for an asymptotically vanishing $\eps > 0$, unless $\P = \NP$. Intuitively, we show that, even in this asymptotic notion, one still has to solve a \prob{Partition} instance on two ``large'' bins. Hence, for this asymptotics, there is no APTAS for \prob{Variable-Sized Bin Covering} in the \emph{unit} supply model, unless $\P = \NP$. However, this fact does \emph{not} exclude the possibility of an A(F)PTAS for \prob{Variable-Sized Bin Covering} in the \emph{infinite} supply model. Indeed, we can give an A(F)PTAS for \prob{Variable-Sized Bin Covering} with infinite supply. Our algorithm is an extension of the APTAS of Csirik et al.~\cite{CsirikJohnsonKenyon:2001} for classical \prob{Bin Covering}. We remove bin types with small demands and adjust the LP formulation and the rounding scheme used by~\cite{CsirikJohnsonKenyon:2001}. The running-time of the APTAS can then be further improved using the involved method of Jansen and Solis-Oba~\cite{JansenSolis-oba:2003} to yield the claimed AFPTAS in Theorem~\ref{thm:afptas}.

\paragraph{Related Work}

As already mentioned, to the best of our knowledge, all of the previous work considers the \prob{(Variable-Sized) Bin Covering} problem in the infinite supply model. Surveys on offline and online versions of these problems are given by Csirik and Frenk~\cite{CsirikFrenk:1990} and by Csirik and Woeginger~\cite{CsirikWoeginger:1998}.  Historically, research (on the offline version) of the \prob{Bin Covering} problem was initiated by Assmann et al.~\cite{AssmannJohnsonKleitmanLeung:1984}. They proved that \prob{Next Fit} is a $2$-approximation algorithm. Furthermore, they proved that \prob{First Fit Decreasing} is an asymptotic $3/2$-approximation and even improved on this result by giving an asymptotic $4/3$-approximate algorithm. Csirik et al.~\cite{CsirikFrenkLabbeZhang:1999} also obtained asymptotic approximation guarantees of $3/2$ and $4/3$ with simpler heuristic algorithms. 

The next breakthrough was made by Csirik, Johnson, and Kenyon~\cite{CsirikJohnsonKenyon:2001} by giving an APTAS for the classical \prob{Bin Covering} problem. The algorithm is based on a suitable LP relaxation and a rounding scheme. Later on, Jansen and Solis-Oba~\cite{JansenSolis-oba:2003} improved upon the running time and gave an AFPTAS. They reduce the number of variables by approximating the LP formulation of Csirik et al.~\cite{CsirikJohnsonKenyon:2001}, which yields the desired speed-up. Csirik and Totik~\cite{CsirikTotik:1988} gave a lower bound of $2$ for every online algorithm for online \prob{(Variable-Sized) Bin Covering}, i.\,e., items arrive one-by-one. This bound holds also asymptotically. 
Moreover, for \prob{Variable-Sized Bin Covering} any online algorithm must have an unbounded approximation guarantee, which can be seen from the following easy construction: There are two bins with demands $d_1 = n$ and $d_2 = 1$. The first arriving item has size $1$. If an online algorithm assigns this to bin $1$, no further items arrive. Otherwise, i.\,e., the item is assigned to bin $2$, an item of size $n-1$ arrives. This item can also only be assigned to bin $2$ and hence an algorithm gains profit at most $2d_2 =2$. It follows the competitive ratio is at the best equal to $n/2$. Thus this online model is only interesting from an asymptotic perspective. For online \prob{Variable-Sized Bin Covering} it is easy to see that the algorithm \prob{Next Fit} which uses only the largest bin type is already an asymptotic $2$-approximation. This in combination with the bound of Csirik and Totik~\cite{CsirikTotik:1988} already settles the online case. By contrast, there could be reasonable approximation guarantees in the offline model, both, non-asymptotically and with unit supply. However, to the best of our knowledge, no non-asymptotic offline version of \prob{Variable-Sized} or \prob{Generalized Bin Covering} has been considered previously.

\paragraph{Notation}

For any set $K \subseteq J$ define the \emph{total size} by $s(K) = \sum_{k \in K} s_k$. Note that a bin $i \in I$ is covered by a set $K \subseteq J$, if $s(K) \ge d_i$. As a shorthand, define $s = s(J)$. Any assignment of items to bins is a solution of the \prob{Generalized Bin Covering} problem. We will denote such an assignment by a collection of sets $S = (S_i)_{i \in I}$, where the $S_i \subseteq J$ are pairwise disjoint subsets of the set $J$ of items. Denote the profit of a solution $S$ by $p(S) = \sum_{i \in I:s(S_i)\geq d_i} p_i$. The profit of a solution $S$ determined by some algorithm \alg\ on an instance $(I, J)$ is denoted by $\ALG(I, J) = p(S)$. We may omit the instance $(I,J)$ in calculations, if it is clear to which instance $\ALG$ refers to. Furthermore, for a solution $S$ of an algorithm \alg, let $u_{\ALG}(i) = s(S_i)$ be the total size of the items assigned to bin $i$. If no confusion arises, we will write $u(i)$ instead of $u_{\ALG}(i)$.

\section{Generalized Bin Covering}
\label{sec:generalized_unit_supply}


\begin{theorem}\label{thm:gbc5approx}
There exists a $5$-approximation for \prob{Generalized Bin Covering} in the unit supply model, which has running time $\bigO{nm\sqrt{m+n}}$.
\end{theorem}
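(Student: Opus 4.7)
The plan is to reduce \prob{Generalized Bin Covering} to a more tractable \emph{modified} problem MOD, solve (or approximate) MOD in polynomial time, and then convert a MOD-solution into a feasible solution of the original problem while losing only a constant factor. The announced approximation factor $5$ decomposes into two losses: a factor of at most $2$ between $\OPT$ and the optimum $\OPT_{\text{MOD}}$ of MOD (corresponding to the integrality gap advertised as a side result), and an additional factor of at most $5/2$ incurred when converting a MOD-solution into a feasible integer solution of the original problem.

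I would first formulate MOD so that it still upper-bounds $\OPT$ but becomes algorithmically tractable. A natural choice is to keep the bin-selection variables $z_i \in \{0,1\}$ integer but relax the item-to-bin assignment to be fractional, with the constraints $\sum_i x_{ij} \leq 1$ for every item $j$ and $\sum_j s_j x_{ij} \geq d_i z_i$ for every bin $i$. Every feasible integer solution of the original problem is feasible for MOD with the same profit, so $\OPT \leq \OPT_{\text{MOD}}$. I would then study this ILP together with its LP relaxation and prove that the integrality gap is at most $2$. I expect a knapsack-style trade-off (``best-prefix vs.\ best-singleton'') to do the job once the structure of a vertex LP solution has been understood. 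Combining these steps gives, in polynomial time, an integer feasible solution to MOD whose profit is at least $\OPT / 2$.

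The central step is to transform this MOD-solution into a feasible solution of the original problem. Given the set $B$ of bins selected by MOD, together with its fractional assignment, I would classify the bins of $B$ according to how they are covered: those that can be covered by a single item of size at least $d_i$ are handled by a maximum-weight bipartite matching between bins and covering items, while bins that require several smaller items to be combined are filled greedily from the residual items, for instance in decreasing order of demand. A careful charging argument then has to show that the resulting feasible assignment recovers at least a $2/5$ fraction of the profit of the MOD-solution; combined with the factor $2$ from the previous step this yields the claimed $5$-approximation. The running time is dominated by the bipartite matching subroutine on a graph with $m+n$ vertices and $\bigO{nm}$ edges, which Hopcroft--Karp solves in $\bigO{nm\sqrt{m+n}}$; solving the MOD--LP and the greedy post-processing of residual items fit within this budget.

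The hard part will be the conversion from the MOD-solution to a feasible integer solution, i.\,e., showing that only a factor $5/2$ is lost. An optimal integer algorithm might ``recombine'' items across bins in ways that a MOD-solution cannot anticipate, and one has to argue that such recombinations cannot gain more than a constant factor of profit. I expect to do this by pairing up MOD-bins with bins chosen by an optimal integer algorithm, distinguishing single-item covers from multi-item covers, and bounding the profit discrepancy on each pair by a local exchange argument. Making this argument tight enough to obtain the factor $5/2$, rather than a weaker constant, is where the technical effort will sit.
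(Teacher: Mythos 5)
Your high-level architecture (relax to a tractable modified problem, bound its optimum against $\OPT$, then convert back to a feasible integer solution while controlling the loss, with a matching handling single-item covers) is the right spirit, and you correctly identified the bipartite-matching bottleneck for the running time. But the specific MOD you propose has a fatal flaw that cannot be repaired by the ``charging argument'' you defer to the end: you omit any \emph{admissibility} restriction, i.\,e., you allow an item $j$ with $s_j > d_i$ to be fractionally spread across bins $i$ with tiny demand. Consider one item of size $S$ and $k$ bins each of demand $S/k$ and profit $p$. With $z_1 = \dots = z_k = 1$ and $x_{ij} = 1/k$ your MOD is feasible with profit $kp$, and its $2$-approximation still has profit $\Theta(kp)$. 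Yet in the original problem, the single item lands on one bin, so $\OPT = p$. No conversion routine can turn a MOD-solution of profit $\Theta(kp)$ into a feasible solution of profit $\Theta(kp)$; the loss is $\Theta(k)$, not $5/2$. So the claim that the conversion loses only a constant factor is false for the MOD you define, and the plan collapses at exactly the step you flag as ``the hard part.''

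The paper avoids this by two coupled design choices you would need to add. First, the modified problem only permits an item (or item part) to be assigned to bins $i$ with $d_i \ge s_j$ --- the LP~(\ref{lp:modifiedBinCovering}) literally forces $x_{j,i}=0$ whenever $s_j > d_i$. This caps the over-fill of any bin by admissible items at roughly $2d_i$ and is what makes the ``reassemble split items'' and ``greedy shift to more efficient bins'' steps (Lemmas~\ref{lem:transformSolution1} and~\ref{lem:transformSolution2}) lose only a factor $2$ each. Second, because admissibility removes from consideration precisely the bins that an optimum covers singularly (one oversized item), the paper handles singular coverage separately via \prob{Maximum Weight Bipartite Matching} (Observation~\ref{obs:matchingSingular}) and then does a $4/5$ vs.\ $1/5$ case split on whether regular or singular coverage carries more of $\OPT$; the $5$ arises from $\OPT \le 4\,\ALG + \tfrac{1}{5}\OPT$, not from a $2\times 5/2$ product. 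So while both plans have a ``matching for single-item covers'' component and a ``relaxation $+$ conversion'' component, in the paper they interact through the case split, whereas in your sketch the matching is only a subroutine inside the conversion --- and that conversion is where the unbounded loss would occur. If you add the admissibility constraint, move singular coverage out to a separate top-level case, and redo the numerics along those lines, you would essentially reconstruct the paper's proof.
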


In terms of lower bounds, recall that the problem is inapproximable up to a factor of two, unless $\P=\NP$. In terms of upper bounds it is not hard to see that naive greedy strategies as that assign items to most profitable bins or that assign items to bins with the best ratio of profit to demand do not yield a constant approximation ratio. We firstly give an informal description of the ideas of our algorithm and define terms below. 

At the heart of our analysis for the upper bound lies the following observation. An optimal algorithm either covers a not too small fraction of bins with only one item exceeding the demand of the respective bin or a large fraction of bins is covered with more than one item, and all these items are smaller than the demand of the bin they were assigned to. We explain below why this can assumed to be true. In the former case we speak of singular coverage and in the latter of regular coverage. It is easy to see (cf. Observation~\ref{obs:matchingSingular}) that a bipartite maximum matching gives a solution being at least as good as the partial optimal solution of singularly covered bins. The more difficult case we have to handle is when a large fraction of bins is covered regularly in an optimal solution. We manage this problem by considering an appropriate modified \prob{Bin Covering} problem. In this problem items are only allowed to be assigned to bins with demand of at most their size. In this situation we say that the items are admissible to the respective bins. Further we are allowed to split items into parts and these parts may be distributed among the bins to which the whole item is admissible. Intuitively, in this modified problem the profit gained for a bin is the fraction of demand covered multiplied with the profit of the respective bin. 

In Lemma~\ref{lem:relaxedProblemOptimal} we show that the modified problem can be solved optimally in polynomial time by algorithm $\ALG^*$ defined in Figure~\ref{alg:algStar}. Algorithm $\ALG^*$ considers bins in non-increasing order of efficiency, where the efficiency of a bin is defined as the ratio of profit to demand of the respective bin. For each bin $i$ $\alg^*$ considers the largest item $j$, which is admissible to $i$. If $j$ was not assigned or only a part of $j$ was assigned previously then $j$ respectively the remaining part of $j$ is assigned to $i$. Then $\alg^*$ proceeds with the next smaller item. Once a bin is covered, the item which exceeds the bin is split so that the bin is exactly covered. Note that it can happen that during this procedure bins are assigned items, but are not covered. But due to the definition of the modified problem, these bins proportionally contribute to the objective function. A solution found by this algorithm is optimal, which we show by transforming an optimal solution to a linear program formulation of this modified problem into the solution of $\ALG^*$ without losing any profit.

A solution of $\alg^*$ can be transformed via two steps into a good solution for the \prob{Generalized Bin Covering} problem. By the way $\alg^*$ splits items we are able to reassemble the split items in Lemma~\ref{lem:transformSolution1} without losing too much profit in the modified model. The solution is further modified in a greedy way such that there are no items on a not covered bin $i$, which are admissible to another not covered bin $i'$ with larger efficiency. A solution with this property is called maximal with respect to the modified \prob{Bin Covering} problem. 

From a maximal solution we can create a solution for the \prob{Generalized Bin Covering} problem in Lemma~\ref{lem:transformSolution2}, again by losing only a bounded amount of profit. For this we move items successively from a not covered bin to the next not covered bin, which has at least the same efficiency. Since the solution was maximal the bins with higher efficiency are covered. By this procedure all bins are covered, which were not covered in the maximal solution, except the least efficient one. Either this least efficient bin or the remaining ones have at least half of the profit of all bins, which were not covered in the maximal solution. Therefore, after applying this procedure at most half of the profit is lost in comparison to the maximal solution. But now, all bins that receive items after this procedure are actually covered.

We start with the definitions we need in order to prove of Theorem~\ref{thm:gbc5approx}. Let $S=(S_1,\dots,S_m)$ be a solution. During the analysis we can assume that there are no unassigned items, i.\,e. all considered algorithms can assign all items, which is formally $S_1\cup \dots \cup S_m = J$. This is justified since we could add a dummy bin $m+1$ with $p_{m+1}=0$ and $d_{m+1}=\infty$ for sake of analysis.

A covered bin $i$ is called to be covered {\em singularly} if $S_i=\{j\}$ for some $j\in J$ with $s_j > d_i$, otherwise it is called to be covered {\em regularly}. Since we can assume that all items can be assigned to bins, we can also make the following assumptions. A bin $i$ which contains an item $j$ with $s_j > d_i$ is singularly covered. For a bin $i$ which is covered regularly it holds $s(S_i) \leq 2d_i$. The latter can be assumed to be true, since the bin $i$ does not contain an item $j$ with $s_j > d_i$ and hence in case $s(S_i) > 2d_i$ we could remove an item and the bin $i$ still would be covered.

\begin{observation}
Let $(I,J)$ be an instance and fix an optimal solution $O$ on this instance. Let $I_S$ the bins covered singularly in $O$ and $J_S$ the set of items on the bins $I_S$. There is an algorithm \prob{alg} such that $\ALG(I, J) \geq \OPT(I_S, J_S)$. The running time is $\bigO{nm\sqrt{m+n}}$.\label{obs:matchingSingular}
\end{observation}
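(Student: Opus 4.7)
The plan is to reduce the problem to a maximum weight bipartite matching on a graph that captures singular coverage. I would construct the bipartite graph $G = (I \cup J, E)$ with $E = \{(i,j) : s_j > d_i\}$ and assign weight $p_i$ to every edge incident to bin $i$. The algorithm would compute a maximum weight matching $M^*$ in $G$ and return the solution $S$ with $S_i = \{j\}$ whenever $(i,j) \in M^*$ and $S_i = \emptyset$ otherwise. Since every matched edge satisfies $s_j > d_i$, each such bin is singularly covered and contributes exactly $p_i$ to the profit, so the value of the algorithm equals $w(M^*)$.

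For correctness, I would use the singular coverage prescribed by the fixed optimum $O$: for each $i \in I_S$ there is a unique item $j_i \in J_S$ with $s_{j_i} > d_i$, and the pairs $(i,j_i)$ form a matching $M_O$ in $G$ of weight $\sum_{i \in I_S} p_i$. On the sub-instance $(I_S, J_S)$, every bin in $I_S$ can be covered at most once and therefore contributes at most $p_i$, hence $\OPT(I_S, J_S) = \sum_{i \in I_S} p_i$, equality being realized by $M_O$ itself. Since $M^*$ is a maximum weight matching in the full graph $G$, it follows that $\ALG(I, J) = w(M^*) \geq w(M_O) = \OPT(I_S, J_S)$.

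For the running time, I would exploit the fact that edge weights depend only on the bin-endpoint. In that setting, the collection of bin sets that can be simultaneously saturated by some matching in $G$ forms the transversal matroid of $G$, so processing bins in order of non-increasing profit and augmenting the current matching whenever possible produces a maximum weight matching. Implementing the augmentations in the style of Hopcroft--Karp on a graph with $|V| = n + m$ vertices and $|E| \leq nm$ edges then gives the bound $\bigO{nm\sqrt{m+n}}$.

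The main obstacle is not correctness, which follows by a direct comparison of matching weights once $\OPT(I_S, J_S)$ is identified with $\sum_{i \in I_S} p_i$, but rather the running-time guarantee. Matching the claimed bound requires using the one-sided weight structure via the transversal matroid so that it suffices to solve an unweighted augmentation sequence to which the Hopcroft--Karp $\sqrt{|V|}$-factor speed-up applies; a black-box invocation of a generic maximum weight bipartite matching algorithm would be too slow.
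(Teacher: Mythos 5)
Your reduction, graph construction, and correctness argument coincide with the paper's: both build the bipartite graph $G$ with edges $\{ij : s_j > d_i\}$, put weight $p_i$ on every edge incident to bin $i$, compute a maximum weight matching, and observe that the singular pairs $(i, j_i)$ from $O$ restricted to $(I_S, J_S)$ form a matching of weight $\OPT(I_S, J_S)$, so maximality gives $\ALG(I,J) \geq \OPT(I_S, J_S)$.

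Where you differ is the running-time discussion, and your instinct there is sound: the paper simply invokes Hopcroft--Karp~\cite{HopcroftKarp:1973} for $\bigO{nm\sqrt{m+n}}$, but Hopcroft--Karp is a maximum-\emph{cardinality} algorithm and does not directly return a maximum-\emph{weight} matching (two bins with profits $10$ and $1$ competing for one item already show a maximum-cardinality matching can saturate the wrong bin). Your observation that the one-sided weights make the saturable bin sets a transversal matroid, so a Kruskal-style greedy on bins in non-increasing profit order is optimal, is the right structural reason this instance is easier than generic weighted matching. However, the last step of your argument --- that these priority-ordered augmentations can be batched ``in the style of Hopcroft--Karp'' to regain the $\sqrt{m+n}$ factor --- is not justified as stated: the matroid greedy prescribes one independence/augmentation test per bin in a fixed order, and the Hopcroft--Karp phase structure of batched shortest augmenting paths does not respect such an order in any obvious way; as written this gives only $\bigO{nm \cdot \min(m,n)}$. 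To make the bound airtight you would need, for instance, to characterize the accepted bins as $\{k : f(k) > f(k-1)\}$ where $f(k)$ is the matching number of bins $\{1,\dots,k\}$, show how to locate these breakpoints within the claimed time, and then run a single cardinality-matching pass on the accepted bins; otherwise one should fall back to the weaker but unproblematic $\bigO{(n+m)^3}$ bound from a generic weighted-matching algorithm, which still keeps Theorem~\ref{thm:gbc5approx} polynomial.
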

\begin{proof}
We define an algorithm \prob{alg}, which simply solves the following instance for \prob{Maximum Weight Bipartite Matching} optimally: Define a bipartite graph $G = (I \cup J, E)$ with edges $E = \{ij \mid s_j > d_i \}$ and a weight function  $w : E \rightarrow \R$ given by $w_{ij} = p_i$ for $ij \in E$. Our algorithm \prob{alg} determines a \prob{Maximum Weight Bipartite Matching} $M \subseteq E$. Since our graph has $m+n$ nodes and at most $mn$ edges the algorithm of Hopcroft and Karp~\cite{HopcroftKarp:1973} gives a solution in time $\bigO{nm\sqrt{m+n}}$. The induced solution $S = (S_1,\dots,S_m)$ is $S_i = \{ j \}$ if $ij \in M$ and $S_i = \emptyset$ otherwise. 

Clearly, the singularly covered bins $I_S$ and the items $J_S$ assigned to them correspond to a matching in $G$. Thus $\ALG(I,J) \geq \ALG(I_S,J_S) = \OPT(I_S,J_S)$ by the correctness of the matching algorithm.
\end{proof}

Consider the following modified \prob{Bin Covering} problem. Items may be split into $p\geq 1$ parts. Then we consider an item $j$ as $p$ items $(j,1), \dots (j,p)$ of positive size, where we may omit the braces in indices. Denote $s_{j,i}$ the size of item part $(j,i)$ of item $j$. Formally it has to hold $s_j= \sum_{i=1}^p s_{j,i}$ and $s_{j,l} > 0$ for $1\leq l \leq p$. We refer to the $(j,l)$ as the parts of the item $j$. 

An item $j$ is said to be admissible to a bin $i$, if $s_j\leq d_i$. The parts $(j,l)$ of an item $j$ are defined to be admissible to $i$ if and only if $j$ is admissible to $i$.  Item parts can only be assigned to bins to which they are admissible.

For a fixed solution $S=(S_1,\dots,S_m)$ let $S_i$ be the set of item parts, assigned to bin $i$. Let $y_i := \min\{ s(S_i)/d_i, 1\}$ be the fill level of bin $i$ (note that the fill level of bin $i$ may be at most one, but nevertheless $s(S_i) > d_i$ is permitted, i.\,e. the sum of item sizes assigned to bin $i$ may exceed its demand).  The profit gained for bin $i$ in the modified problem is $p^*(S_i) := p_iy_i$, which is intuitively the percentage of covered demand multiplied with the profit of the bin, where the maximal profit which can be gained is bounded by $p_i$. Further for a set $I'\subseteq I$ of bins and a solution $S=(S_1,\dots,S_m)$ let $p^*(I') = \sum_{i\in I'} p^*(S_i)$. We define the efficiency $e_i$ of bin $i$ to be $e_i:=p_i/d_i$. 

We define the algorithm $\ALG^*$ in Figure~\ref{alg:algStar} for the modified \prob{Bin Covering} problem and denote as usual with $\ALG^*(I,J)$ the value of its solution for the modified problem on the instance $(I,J)$. Analogously denote $\OPT^*(I,J)$ the value of an optimal solution to the modified \prob{Bin Covering} problem.

\begin{figure}[h]
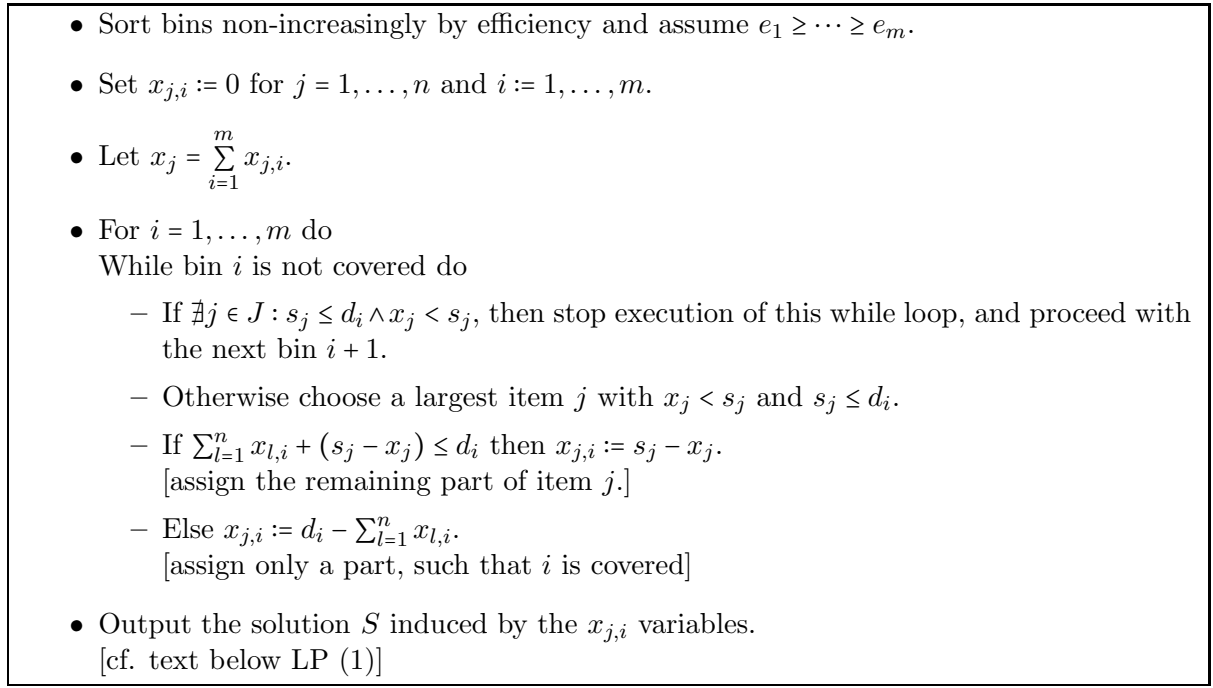

 \fbox{
 \begin{minipage}{\textwidth}
\begin{itemize}
\item Sort bins non-increasingly by efficiency and assume $e_1 \geq \dots \geq e_m$.
\item Set $x_{j,i} := 0$ for $j=1,\dots,n$ and $i:=1,\dots, m$.
\item Let $x_j = \sum \limits_{i=1}^m x_{j,i}$.
\item For $i=1,\dots,m$ do   \\
       While bin $i$ is not covered do 
    \begin{itemize} 
      \item If $\nexists j\in J: s_j \leq d_i  \wedge x_j < s_j$, then stop execution of this while loop, and proceed with the next bin $i+1$.
      \item Otherwise choose a largest item $j$ with $x_j < s_j$ and $s_j \leq d_i$.
       \item If $ \sum_{l=1}^n x_{l,i} +(s_j-x_j) \leq d_i$ then $x_{j,i} :=s_j-x_j$.\\ \ [assign the remaining part of item $j$.]
       \item Else $x_{j,i} := d_i- \sum_{l=1}^n x_{l,i}$. \\ \ [assign only a part, such that $i$ is covered]
      \end{itemize}
\item Output the solution $S$ induced by the $x_{j,i}$ variables. \\ \  [cf. text below LP~(\ref{lp:modifiedBinCovering})]
\end{itemize}
\end{minipage}
 }
\caption{The algorithm $\ALG^*$.}
\label{alg:algStar}
\end{figure}

Our modified \prob{Bin Covering} problem can be formulated as a linear program (LP), which will simplify the description of the analysis of the upcoming algorithm for the modified problem. Note that we do not need to solve this linear program. Actually, $\ALG^*$ solves it optimally. Moreover, Lemma~\ref{lem:relaxedProblemOptimal} and Lemma~\ref{lem:transformSolution1} imply a integrality gap of two for this linear program and the corresponding integer linear program.

\begin{alignat}{2}
\text{maximize} \quad &  \sum_{i=1}^m p_iy_i && \label{lp:modifiedBinCovering} \\
\text{subject to } \quad      & y_i \leq  \sum_{j=1}^n x_{j,i}/d_i \quad  && \forall i \in I  \notag \\ 
     & \sum_{i=1}^m x_{j,i} \leq s_j && \forall j\in J \notag  \\
  & 0 \leq y_i \leq 1 && \forall i\in I\notag    \\
 & 0 \leq x_{j,i}  && \forall i \in I  \forall j \in J\notag    \\
 & 0 \geq x_{j,i} && \forall i \in I  \forall j \in J\text{ with } s_j > d_i\notag    
\end{alignat}


We may identify the $x_{j,i}$ values with the sizes $s_{j,l}$, where $x_{j,i}=0$ means in fact that no part of item $j$ was assigned to bin $i$. If there are $p$ values $x_{j,l}>0$ for a fixed $j$ then this means that $\alg^*$ splits the item $j$ into $p$ parts, and there are $p$ values $s_{j,1}, \dots, s_{j,p} > 0$.

\begin{lemma}\label{lem:relaxedProblemOptimal} Algorithm $\ALG^*$ gives a solution of value $\ALG^*(I,J)=\OPT^*(I,J)$. 
\end{lemma}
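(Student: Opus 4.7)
The plan is to start from an arbitrary optimal solution $(\tilde x, \tilde y)$ of LP~\eqref{lp:modifiedBinCovering} and transform it step by step, via feasibility-preserving and objective-non-decreasing modifications, until its $x$-variables coincide with the assignment $x^*$ produced by $\alg^*$. Since the output of $\alg^*$ is clearly feasible, this yields the nontrivial direction $\alg^*(I,J) \ge \OPT^*(I,J)$ and hence the lemma. Throughout, I assume the bins are sorted so that $e_1 \ge e_2 \ge \dots \ge e_m$, matching the order in which $\alg^*$ processes them, and I set $\tilde y_i = \min(1, \sum_j \tilde x_{j,i}/d_i)$ without loss of generality. I also rewrite the objective as $\sum_i e_i \min(d_i, f_i)$ with $f_i = \sum_j x_{j,i}$, which makes explicit that mass in excess of $d_i$ on bin $i$ contributes nothing and that, among the bins to which an item is admissible, routing its mass to the most efficient such bin is the best local choice.

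The core argument is an induction on $k = 0, 1, \ldots, m$ with the invariant: there exists an optimal LP solution $\tilde x$ such that $\tilde x_{j,i} = x^*_{j,i}$ for all bins $i \le k$ and all items $j$. The base case $k = 0$ is immediate. In the inductive step $k - 1 \to k$, the invariant implies that the residual mass $r_j = s_j - \sum_{i < k} x^*_{j,i}$ of each item $j$ is the same in $\tilde x$ and $x^*$. I then reconcile $\tilde x$ with $x^*$ on bin $k$ in two sub-steps. First, I match the total mass $\tilde f_k$ on bin $k$ to $x^*_k = \sum_j x^*_{j,k}$: if $\tilde f_k < x^*_k$, I pull admissible residual mass into bin $k$ either from bins $l > k$ or from the unused pools; if $\tilde f_k > x^*_k$, which forces $x^*_k = d_k$, I push the surplus off bin $k$. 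Because $e_k \ge e_l$ for all $l > k$, each unit moved into a non-full bin $k$ gains at least as much as it loses elsewhere, and each unit of surplus pushed off bin $k$ beyond $d_k$ costs nothing because $\min(d_k, \cdot)$ is insensitive to mass above $d_k$. Second, I swap items between bin $k$ and the rest of the instance so that the identity of the items on bin $k$ matches the choice of $\alg^*$.

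The main obstacle is this second sub-step. Suppose $\alg^*$ places item $j_1$ on bin $k$ while $\tilde x$ uses less of $j_1$ there. Since $\tilde f_k = x^*_k$ after the first sub-step, $\tilde x$'s bin $k$ must contain compensating mass of some item $j_2$ that $\alg^*$ does not use on bin $k$. Because $\alg^*$ processes admissible items in non-increasing order of size and only moves to a strictly smaller item once the current one is exhausted, we must have $s_{j_2} \le s_{j_1}$. The decisive observation is the monotonicity of admissibility in item size: $s_{j_2} \le s_{j_1}$ implies that every bin to which $j_1$ is admissible -- in particular any bin $l > k$ with $\tilde x_{j_1, l} > 0$ -- is also admissible to $j_2$. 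Hence the exchange that decreases $\tilde x_{j_2, k}$ and $\tilde x_{j_1, l}$ by a common amount $\delta > 0$ and increases $\tilde x_{j_1, k}$ and $\tilde x_{j_2, l}$ by the same $\delta$ (or the analogous exchange against the unused pools of $j_1$ and $j_2$ when $j_1$'s missing mass in $\tilde x$ is unused rather than on a later bin) preserves feasibility, each $f_i$, and therefore the objective. Iterating such exchanges drives $\tilde x$'s assignment on bin $k$ to coincide with $\alg^*$'s, completing the inductive step and, at $k = m$, the proof.
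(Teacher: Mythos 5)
Your proof is correct and follows essentially the same strategy as the paper: inductively transform an optimal LP solution, bin by bin in efficiency order, first matching total mass on the current bin (using $e_k \ge e_l$ for $l>k$) and then performing size-monotone item swaps whose feasibility rests on the observation that $s_{j_2}\le s_{j_1}$ makes $j_2$ admissible wherever $j_1$ is. One small imprecision: the compensating item $j_2$ need not be one $\alg^*$ places \emph{none} of on bin $k$ — it can also be the last (partially assigned) item $\alg^*$ put there — but in either case $s_{j_2}\le s_{j_1}$ holds (this is where the paper instead explicitly picks $j_1$ as the \emph{earliest-considered} discrepant item), so the exchange goes through unchanged.
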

\begin{proof}
We show that our algorithm gives an optimal solution to LP~(\ref{lp:modifiedBinCovering}) by transforming an arbitrary optimal solution to LP~(\ref{lp:modifiedBinCovering}) into a solution found by our algorithm without losing any profit. Let $O=(O_1,\dots,O_m)$  be an optimal solution and assume $(y_i',x_{j,i}')_{1\leq i \leq m, 1 \leq j \leq n}$ are the corresponding variables describing the assignment of the item parts to bins in $O$. Let $S$ be the solution found by $\alg^*$ and let  $(y_i,x_{i,j})_{1\leq i \leq m, 1 \leq j \leq n}$ be the corresponding variables set by $\alg^*$. In iteration $i=1,\dots,m$ we set $y_i':= y_i$ and $x_{j,i}' := x_{j,i}$ for all $j\in J$ and show that the optimality is preserved.   

Since items are arbitrary splittable and the assumption that all items are assigned by $\alg^*$ and $\opt$ we can assume that $s(S_i),s(O_i)\leq d_i$. Consider bin $i$ of the optimal solution and assume the bins $1,\dots,i-1$ in the optimal solution contain only the items, which are assigned to these bins by the solution of $\alg^*$, i.\,e. we have already $x_{j,l}'=x_{j,l}$ for all $1\leq l \leq i-1$ and all $j\in J$. 

Case $s(S_i) < d_i$:  This means that all items being admissible to bin $i$ were assigned to bin $i$ or to bins with indices $1,\dots, i-1$ in the solution $S$ by construction of the algorithm $\alg^*$. Formally we have thus $x_{j,i'}=0$ for all $j\in J$ with $s_j\leq d_i$ and all $i' > i$. Since for all variables $x_{j,i'}'=x_{j,i'}$ for all $i'< i$ and $j\in J$ holds, it follows that $\sum_{j=1}^n x_{j,i}' \leq \sum_{j=1}^n x_{j,i}$. If even $\sum_{j=1}^n x_{j,i}' < \sum_{j=1}^n x_{j,i}$ then  the ``missing'' item parts must be assigned to bins $i'>i$ in the optimal solution $O$. And so, if $x_{j,i}' < x_{j,i}$ for some $j$, we set $x_{j,i}':= x_{j,i}$ and $x_{j,i'}' := 0 $ for all $i' > i$. We increase the $y_i$ variable and decrease the $y_{i'}$ for $i'> i$ corresponding to the changes. By this process the objective value can only be increased, since $e_i \geq e_{i'}$ for all $i' > i$ and the sum of $y_i$ variables maintains the same. Also note that no constraints are violated, since $\alg^*$ assigns items only to bins, to which they are admissible.

Case $s(S_i) = d_i$: If $s(O_i) < d_i$ then again the ``missing'' items have to reside on bins $i' > i$ and we can increase some $x_{j,i}$ variables as above such that $\sum_{j=1}^n x_{j,i}' = \sum_{j=1}^n x_{j,i}$ holds. Now, if $x_{j,i}' \neq x_{j,i}$ for some $j\in J$ then there have to be items $j,j' \in J$ such that we have for the corresponding variables $x_{j,i}>x_{j,i}'$ and  $x_{j',i}<x_{j',i}'$. For the transformation we proceed in the ordering as $\alg^*$ did: Let $j_1,\dots,j_k$ the items assigned to bin $i$ by $\alg^*$ in this ordering, i.\,e. $\alg^*$ changed the values of the variables $x_{j_1,i},\dots, x_{j_k,i}$ in this ordering. Then we find the item $j_l \in \{j_1,\dots,j_k\}$ with smallest index $1\leq l \leq k$ such that $ x_{j_l,i}>x_{j_l,i}'$ holds and an arbitrary item $j'$ such that $x_{j',i}<x_{j',i}'$ holds. For ease of notation we set $j:= j_l$. 

Let $i'>i$ be the index of a bin such that $x_{j,i'}' > 0$. Clearly, such an index must exist by our assumption $x_j'= x_j=s_j$ holds and since  $x_{j,i'}' =x_{j,i'}$ for all $i' < i$ and all $j\in J$. As $x_{j,i} > 0$ and $x_{j,i'}' > 0$ it follows that item $j$ is admissible to $i$ and $i'$. 

Assume we already have shown $s_j \geq s_{j'}$. Note, this is not a priori clear, since $\alg^*$ may split items arbitrarily. Then, as  $x_{j,i'}' > 0$ item $j$ is admissible to bin $i'$ and so is the item $j'$, because of $s_j \geq s_{j'}$. Let $\delta:= \min \{ x_{j,i'}',x_{j',i}'\}$. We set $x_{j,i}':= x_{j,i}'+\delta$, $x_{j,i'}':= x_{j,i'}'-\delta$, $x_{j',i}':= x_{j',i}'-\delta$ and $x_{j',i'}':= x_{j',i'}'+\delta$. This process of adjusting can be iterated until $x_{j,i} = x_{j,i}'$ for all $j\in J$. 

We are left to show $s_j \geq s_{j'}$. We argue this comes indeed from the fact that the algorithm $\alg^*$ considers items in non-increasing order of size. Assume to the contrary that $s_{j'}> s_j$. Then $\alg^*$ would have considered item $j'$ before item $j$. Because of $x_{j',i}' > 0$ it must be in fact the case that $\alg^*$ assigns a part of item $j'$ to bin $i$ otherwise we had $x_{j',i}'=0$ by the property $x_{j,i'} =x_{j,i'}'$ for all $i' <i$ and all $j\in J$. 

Then either $x_{j',i} := s_{j'}-x_{j'}$ was set (in case $ \sum_{l=1}^n x_{l,i} +(s_j-x_j) \leq d_i$), which intuitively means that the yet unassigned rest of the item $j'$ was assigned to bin $i$ by $\alg^*$. In this case $x_{j',i'}= 0$ for all $i'> i$ and hence it follows $x_{j',i}' \leq x_{j',i}$, again because of the property $x_{j',i'}' =x_{j',i'}$ for all $i' < i$. This is a contradiction to the assumption $x_{j',i}' > x_{j',i}$.

Hence it must have been the case that $x_{j,i} := d_i- \sum_{l=1}^n x_{l,i}$ was set by the algorithm. We had $x_{j'} + \sum_{l=1}^n x_{l,i} > d_i$ in this situation, i.\,e. not the whole remaining size of item $j'$ was assigned to bin $i$. Then we conclude $j'$ was the last item assigned to bin $i$ by $\alg^*$. This contradicts the fact that item $j$ is assigned to bin $i$ by $\alg^*$, since we have assumed that $x_{j,i} >x_{j,i}' \geq 0$. This concludes the proof of the lemma.
\end{proof}

\begin{observation}\label{obs:feasibleSolution}
Let $S$ be a solution with the property $s(S_i) \leq d_i$ for all $i\in I$. Let $S^*$ be a solution, with the property $s(S^*_i) \leq 2d_i$ for all $i\in I$. If for all item parts $j\in J$ there holds $j\in S_i$ and $j \in S_{i'}^*$, where $e_i \leq e_{i'}$, then $p^*(S) \leq 2 p^*(S^*)$.
\end{observation}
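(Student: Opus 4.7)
The plan is to rewrite both $p^*(S)$ and $p^*(S^*)$ as sums indexed over item parts rather than over bins, using the efficiencies $e_i = p_i/d_i$, and then compare the two sums term by term via the hypothesis $e_{i(j)} \leq e_{i^*(j)}$ (where $i(j)$ and $i^*(j)$ denote the bins containing item part $j$ in $S$ and $S^*$, respectively).

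First I would handle $p^*(S)$. Because $s(S_i) \leq d_i$ for every $i$, the cap at $1$ in the definition of the fill level is never active, so $y_i = s(S_i)/d_i$ and hence
\[
 p^*(S) \;=\; \sum_{i \in I} p_i \, \frac{s(S_i)}{d_i} \;=\; \sum_{i \in I} e_i \, s(S_i) \;=\; \sum_{j \in J} s_j \, e_{i(j)},
\]
where in the last step I swap the order of summation, using that each item part lies in exactly one bin of $S$.

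Next I would bound $p^*(S^*)$ from below. Here the cap at $1$ may be active, but the hypothesis $s(S^*_i) \leq 2 d_i$ guarantees $s(S^*_i)/(2 d_i) \leq 1$ and obviously $s(S^*_i)/(2 d_i) \leq s(S^*_i)/d_i$, so $\min\{s(S^*_i)/d_i, 1\} \geq s(S^*_i)/(2 d_i)$. Consequently,
\[
 p^*(S^*) \;\geq\; \sum_{i \in I} p_i \, \frac{s(S^*_i)}{2 d_i} \;=\; \tfrac{1}{2} \sum_{i \in I} e_i \, s(S^*_i) \;=\; \tfrac{1}{2} \sum_{j \in J} s_j \, e_{i^*(j)}.
\]
This is the one slightly non-obvious step, and it is the key use of the factor-$2$ slack on the bin fills in $S^*$; everything else is essentially bookkeeping.

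Finally, I would invoke the efficiency-monotonicity hypothesis: for every item part $j$ we have $e_{i(j)} \leq e_{i^*(j)}$. Since all $s_j$ are nonnegative, this gives $\sum_j s_j e_{i^*(j)} \geq \sum_j s_j e_{i(j)} = p^*(S)$, and combining with the previous inequality yields $2 p^*(S^*) \geq p^*(S)$, as claimed. There is no real obstacle in the proof; the only point worth flagging is that one must explicitly observe that $s(S_i) \leq d_i$ makes the $\min$ in the definition of $p^*(S)$ disappear, while on the $S^*$ side one loses a factor of at most $2$ when discarding the cap, and these two asymmetric uses of the hypotheses are precisely what produces the factor $2$ in the conclusion.
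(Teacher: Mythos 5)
Your proof is correct and follows essentially the same route as the paper's: both rewrite $p^*(S)$ itemwise using $s(S_i)\le d_i$ to drop the cap, compare term-by-term via $e_{i(j)}\le e_{i^*(j)}$, and then absorb the factor $2$ from $s(S^*_i)\le 2d_i$ when reinstating the $\min$. Your write-up merely makes the lower bound $p^*(S^*)\ge \tfrac12\sum_j s_j e_{i^*(j)}$ a bit more explicit than the paper's inequality chain, but the argument is the same.
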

\begin{proof}
If $s(S_i) \leq d_i$ for all $i\in I$ then we can compute $p^*(S)$ ``itemwise''

\begin{align}
p^*(S) & = \sum_{i \in I} s(S_i) e_i = \sum_{i \in I} \sum_{j \in S_i}  s_j e_i \notag  \\
       &\leq \sum_{i \in I} \sum_{j \in S^*_i}  s_j e_i = \sum_{i \in I}  s(S^*_i) e_i \label{eq:byPrec} \\
       & \leq  \sum_{i \in I} 2 \min \{ s(S^*_i)/d_i , 1\}d_ie_i \label{eq:boundLostProfit} \\
       & =  2 \sum_{i \in I}  p^*(S_i^*) = 2p^*(S^*), \notag
\end{align}
where Inequality~(\ref{eq:byPrec}) is by $e_i \leq e_{i'}$ for $i>i'$ and the fact that we assign items only to bins with smaller indices in $S^*$ in comparison to solution $S$ and Inequality~(\ref{eq:boundLostProfit}) holds, since by precondition we have for each $i\in I$ that $\sum_{j \in S^*_i}  s_j/2 \leq d_i$ in $S^*$. Hence the claim follows.
\end{proof}

We call a solution $S$ maximal with respect to the modified \prob{Bin Covering} problem, if there are no two distinct bins $i$ and $i'$ with $0< s(S_i) < d_i$ and $0<s(S_{i'}) <d_{i'}$ and $e_i \leq e_{i'}$, such that there is an item $j\in S_{i}$, which is admissible to bin $i'$. Note that this implies the following for such bins $i$ and $i'$. If we assign in a maximal solution only one item from a bin $i$ to a bin $i'$ then bin $i'$ is already covered by only this item. We say a solution $S$ contains no split items, if for all $i\in I$ and $j\in S_i$ we have $s_{j,1}=s_j$.

\begin{lemma}\label{lem:transformSolution1}
Let $S$ be a solution given by $\ALG^*$ for the modified problem. Then there exists a solution $S^*$ such that $p^*(S) \leq 2p^*(S^*)$ and $S^*$ contains no split items. Further $S^*$ is maximal with respect to the  modified \prob{Bin Covering} problem.
\end{lemma}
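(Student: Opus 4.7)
The plan is to construct $S^*$ from $S$ in two stages---first reassembling each split item onto a single bin, then greedily equilibrating---and to apply Observation~\ref{obs:feasibleSolution} exactly once at the end, directly to the pair $(S, S^*)$. Both stages will relocate item mass only toward bins of weakly higher efficiency and will preserve the load bound $s(S^*_i) \le 2 d_i$ that the observation requires of the target solution.

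For Stage~I (reassembly), I would set, for each item $j$ that is split in $S$, $i_1(j)$ to be the smallest-indexed (equivalently, highest-efficiency) bin carrying a part of $j$. Remove every part of $j$ and place the whole item on $i_1(j)$. The crucial structural fact underpinning the load bound on the resulting $S'$ is that $\alg^*$ produces at most one ``outgoing'' split per bin $i$: whenever a split occurs during the processing of $i$, the bin becomes exactly covered and the inner while loop terminates, ruling out any further split at $i$. Hence at most one item $j$ satisfies $i_1(j) = i$, and that item adds extra mass at most $s_j - x_{j,i} \le s_j \le d_i$ on top of an original load at most $d_i$, giving $s(S'_i) \le 2 d_i$. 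Because every part of every split item in $S$ lay on a bin of efficiency at most $e_{i_1(j)}$, the reassembly relocates mass only to bins of weakly higher efficiency.

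For Stage~II (maximization), I would iteratively look for a violating triple $(i, i', j)$---distinct bins with $0 < s(S^*_i) < d_i$, $0 < s(S^*_{i'}) < d_{i'}$, $e_i \le e_{i'}$, and $j \in S^*_i$ admissible to $i'$---and move $j$ entirely from $i$ to $i'$. Each such move again sends $j$ to a bin of weakly higher efficiency, and $s(S^*_{i'})$ grows from below $d_{i'}$ to at most $d_{i'} + s_j \le 2 d_{i'}$, so the load bound is preserved and no new splits are introduced. Termination can be enforced by a standard potential argument (e.g.\ processing bins in non-increasing efficiency and, within each, pulling admissible items from larger-indexed bins, with efficiency ties broken by index), or, in the purely existential reading of the lemma, simply by taking $S^*$ to be any element of the finite set of split-free assignments reachable by such moves that admits no further violating triple. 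By construction $S^*$ is split-free and maximal.

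To conclude, I would invoke Observation~\ref{obs:feasibleSolution} once, on the pair $(S, S^*)$: we have $s(S_i) \le d_i$ in $\alg^*$'s output, $s(S^*_i) \le 2 d_i$ by the two stages, and the composition of Stages~I and~II maps every item part of $S$ to a bin of efficiency at least that of its original bin. This directly yields $p^*(S) \le 2 p^*(S^*)$, completing the proof. The hard part will be pinning down the one-outgoing-split-per-bin structural property of $\alg^*$ that drives the Stage~I load bound; once that is in hand, the remainder is essentially bookkeeping over which item part ends up on which bin.
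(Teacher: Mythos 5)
Your proposal is correct and follows essentially the same two-stage plan as the paper's own proof: first reassemble each split item onto the highest-efficiency bin holding one of its parts, then greedily relocate items to more efficient partially-covered bins until no violating triple remains, and finally apply Observation~\ref{obs:feasibleSolution} once to the pair $(S, S^*)$. The key structural fact you isolate---that the ``else'' branch of $\alg^*$ fills a bin and hence each bin can host the first part of at most one split item---is exactly the property the paper uses to get the bound $s(S'_i) \le 2d_i$ after reassembly, and your Stage~II and the termination discussion match the paper's ordered greedy sweep over the partially-covered bins.
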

\begin{proof}
In a first step we create a solution $S'$ such that $S'$ contains no split items, i.\,e. $s_{j,1} = s_j$ for all $j\in J$. Let $j$ be an item, which is split by $\ALG^*$ into $p>1$ parts $(j,1), \dots (j,p)$. We ``merge'' the parts $(j,1),\dots,(j,p)$ the following way. We assign all the parts $(j,2),\dots, (j,p)$ to the bin, to which $(j,1)$ was assigned. The solution created is the solution $S'$. 

We argue that by this procedure each bin receives parts of at most one item. By the algorithm, an item $j$ is split only then, when it fills a bin. The bin filled is assigned a part $(j,l)$ and never receives an item (part) after that. Hence each bin $i$ receives at most one first part of an item, i.\,e. if $(j^*,1) \in S_i$ then $(j,1)\notin S_i$ for $j\neq j^*$. It follows that each bin is assigned the parts of only one item, since parts $(j,l)$ are always assigned to the bin $i$ with $(j,1)\in S_i$. 

Assume item $j$ was split into $p>1$ parts $(j,1),\dots,(j,p)$ by $\ALG^*$. If a part $(j,1)$ was assigned to a bin $i$ then part $(j,1)$ was admissible and so was item $j$. Thus $s_j \leq d_i$. It follows that $s_{j,1}+\dots +s_{j,p} \leq d_i$. As with the above argumentation each bin $i$ receives only parts of one item $j$ and we had $s(S_i) \leq d_i$, by the algorithm $\ALG^*$, it follows $s(S'_i) \leq s(S_i)-s_{j,1}+s_{j,1}+s_{j,2}+\dots +s_{j,p}< 2d_i$, because $s_{j,1} >0$. Hence we have shown that $s(S'_i)< 2d_i$ for each bin $i\in I$ in the solution $S'$. 

As already described, when $\ALG^*$ splits an item $j$ into $p>1$ parts, then the first part $(j,1)$ fills a bin. Since $\ALG^*$ considers the items in non-increasing order of efficiency, we have that the parts $(j,2),\dots,(j,p)$ are assigned to bins with at most the same efficiency. Hence in the solution $S'$ each item is assigned to a bin $i$ with at least the same efficiency as the bin $i'$, to which it was assigned in the solution $S$.

The solution $S'$ can now be transformed into the maximal solution $S^*$ preserving the both mentioned properties, which are the preconditions of Observation~\ref{obs:feasibleSolution}. For this let $T :=\{ i \in I \mid 0 < s(S'_i) < d_i \}$. Let $T = \{e_{i_1},\dots,e_{i_l}\}$ and assume as usual $e_{i_1} \geq \dots \geq e_{i_l}$. For $j=1,\dots,l$ do the following. While bin $i_j$ is not covered and one of the bins $i_{j+1},\dots, i_l$ contains an item $j'$, which is admissible to $i$, assign $j'$ to bin $i_j$. If $i_j$ is covered or there are no items left on the bins $i_{j+1},\dots, i_l$, which are admissible to $i_j$, then proceed with bin $i_{j+1}$ and so on. The so created solution is the solution $S^*$.

Clearly, items are only assigned to more efficient bins. Further, an item $j$ is only assigned to a bin $i$, when $j$ is admissible to $i$ and $i$ is not yet covered. Hence, it follows $s(S_i^*) \leq 2d_i$. Moreover, if $s(S_i^*)< d_i$ for a bin $i$, then by construction all bins $i'\in T$ with $e_i'\leq e_i$ do not contain an item $j'$ which is admissible anymore. Hence $S^*$ is also maximal w.\,r.\,t. the modified \prob{Bin Covering} problem. Applying Observation~\ref{obs:feasibleSolution} to the solutions $S$ and $S^*$ gives the claim of the lemma.
\end{proof}

\begin{lemma}\label{lem:transformSolution2} Let $S$ be a solution being maximal with respect to the modified \prob{Bin Covering} problem and containing no split items. Then there exist a solution $S^*$ for the \prob{Generalized Bin Covering} problem, such that $p^*(S) \leq 2p(S^*)$.
\end{lemma}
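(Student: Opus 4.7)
The plan is to construct $S^*$ by keeping the already-covered bins of $S$ intact and rearranging items on the partially filled bins so that most of their modified profit turns into real profit. I would partition the bins as $C = \{i : s(S_i) \geq d_i\}$ (the bins covered by $S$), $T = \{i : 0 < s(S_i) < d_i\} = \{i_1, \ldots, i_l\}$ ordered by non-increasing efficiency $e_{i_1} \geq \cdots \geq e_{i_l}$, and the remaining empty bins. Setting $S^*_i = S_i$ for every $i \in C$ preserves the contribution $\sum_{i \in C} p_i$ to both $p^*(S)$ and $p(S^*)$, so only the bins in $T$ need attention. The crucial consequence of the maximality of $S$ is that, for all $k < j$, no item in $S_{i_j}$ is admissible to $i_k$ (because $e_{i_k} \geq e_{i_j}$); equivalently, every item $a \in S_{i_j}$ has $s_a > d_{i_k}$, so a single such item placed alone on $i_k$ already covers it.

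With that observation in hand, I would define two candidate solutions and keep whichever yields the larger profit. In \emph{Strategy A}, I would cover every bin of $T$ except the least efficient $i_l$: processing $k = 1, \ldots, l-1$ in order, at step $k$ I move one item from $S_{i_{k+1}}$ to $i_k$. The structural fact above guarantees that this one item alone covers $i_k$, and the move is always feasible since $|S_{i_{k+1}}| \geq 1$ and $S_{i_{k+1}}$ has not been touched in any earlier step. This gives $p(S^*_A) \geq \sum_{i \in C} p_i + \sum_{k=1}^{l-1} p_{i_k}$. In \emph{Strategy B}, I would instead repack items from $T$ onto $i_l$ so as to cover it, which yields $p(S^*_B) \geq \sum_{i \in C} p_i + p_{i_l}$.

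A simple averaging then closes the argument:
\[
\max\bigl(p(S^*_A),\, p(S^*_B)\bigr) \,\geq\, \tfrac{1}{2}\bigl(p(S^*_A) + p(S^*_B)\bigr) \,\geq\, \sum_{i \in C} p_i + \tfrac{1}{2}\sum_{k=1}^{l} p_{i_k} \,\geq\, \tfrac{1}{2}\, p^*(S),
\]
using $p_{i_k} \geq p_{i_k} y_{i_k}$ for each $k$ together with $\sum_{i \in C} p_i \geq 0$. Choosing $S^*$ to be whichever of $S^*_A, S^*_B$ attains the maximum then gives the desired $p^*(S) \leq 2\, p(S^*)$.

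The step I expect to be the main obstacle is ensuring that Strategy B is realizable in every case, i.e., showing that the items currently sitting on the bins of $T$ can really be repacked onto $i_l$ so that the total reaches $d_{i_l}$. This will rely both on the consequences of the maximality of $S$ (so that items borrowed from more efficient partially filled bins are not too small relative to $d_{i_l}$) and on the assumption that $S$ contains no split items; Strategy A, by contrast, is straightforward to execute sequentially.
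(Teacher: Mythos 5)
Your overall plan --- two candidate solutions and an averaging argument --- is exactly the paper's, and your Strategy~A is correct and coincides with the paper's solution $S''$ (the bins $i_1,\dots,i_{l-1}$ and those in $C$ get covered). The gap is in Strategy~B, and it is not a loose end you can fill in: in general one \emph{cannot} cover $i_l$ while keeping the covered bins $C$ intact, because the items sitting on the bins of $T$ may have total size far smaller than $d_{i_l}$. Maximality does not rescue this: it constrains items on the \emph{less} efficient of two partially filled bins (they must be too large to be admissible to the more efficient one), but it says nothing about items on a \emph{more} efficient bin $i_k$ relative to $d_{i_l}$, since the condition $e_i \le e_{i'}$ in the definition points the other way. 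Concretely, take $T=\{i_1,i_2\}$ with $(p_{i_1},d_{i_1})=(1,5)$ and $(p_{i_2},d_{i_2})=(1,100)$, so $e_{i_1}>e_{i_2}$, and $S_{i_1}=\{a\}$, $S_{i_2}=\{b\}$ with $s_a=4$, $s_b=6$. This solution is maximal ($b$ is not admissible to $i_1$ since $6>5$), yet $s_a+s_b=10<d_{i_2}$, so $i_l=i_2$ cannot be covered from $T$'s items alone; the volume needed to reach $d_{i_l}$ resides on bins in $C$.

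The paper's fix is to give up on preserving $C$ in the second candidate: it sets $S'_l=J$ and $S'_i=\emptyset$ for $i\neq l$, covering only bin $l$ (this uses the standing assumption that $s(J)\ge d_i$ for every bin $i$), so $p(S')=p_l$. The averaging still closes because $C$ contributes to both $p^*(S)$ and $p(S'')$, giving $p(S')+p(S'')\ge \sum_{i\in C}p_i+\sum_{k=1}^{l}p_{i_k}\ge p^*(S)$, hence $\max\{p(S'),p(S'')\}\ge p^*(S)/2$. Replacing your Strategy~B by this $S'$ (and dropping the stronger but unattainable bound $\sum_{i\in C}p_i+p_{i_l}$) makes the rest of your argument go through verbatim.
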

\begin{proof}
Let $R := \{ i \in I \mid 0< s(S_i) < d_i\}$. Assume \wlgs that $R=\{1,\dots, l\}$ and $e_1\geq \dots \geq e_l$. Construct two solutions $S'$ and $S''$. We set $S_l'=J$ and $S_i' = \emptyset$ for $1 \leq i \leq m, i \neq l$. Further we set $S''_{i-1} = S_{i}$ for $2\leq i \leq l$, $S''_l = \emptyset$ and $ S''_i := S_i$ for $l < i \leq m$.  

In $S'$ the only bin $l$ which is assigned items is covered, since we may assume \wlgs that each bin $i$ is covered, when all items are assigned to it. The same holds true for $S''$: The bins from $\{l+1,\dots,m\}$, which contain items, are covered since they were already covered in $S$. The bins $1,\dots,l-1$ are covered, since $S$ is a maximal solution with respect to the modified \prob{Bin Covering} problem and $e_i \geq e_{i+1}$ holds for $1 \leq i \leq l-1$. Finally we have $S''_l = \emptyset$.

As none of the solutions $S'$ and $S''$ contains split items and all bins are covered, we have that $p^*(S') =p(S')$ and $p^*(S'') =p(S'')$. We output $S^* := S'$ if $p(S')=\max\{p(S'), p(S'')\}$ and $S^*:= S''$ otherwise.

To see the claim about the approximation guarantee distinguish the cases $p_l > p^*(S)/2$ and $p_l \leq p^*(S)/2$, where the index $l$ is as above the index of the bin with $S'_l= J$. If $p_l > p^*(S)/2$ then $p(S^*)\geq p(S') = p_l > p^*(S)/2$. If $p_l \leq p^*(S)/2$ then $p(S^*) \geq  p(S'') = p^*(S) - p_l \geq  p^*(S) - p^*(S)/2 = p^*(S)/2$, which concludes the proof of the lemma.
\end{proof}

\begin{proof2}[of Theorem~\ref{thm:gbc5approx}]
Let $(I,J)$ be the given instance. Our algorithm works as follows. We use Observation~\ref{obs:matchingSingular} to find a solution $S_1$. Then we run $\ALG^*$ on the instance $(I,J)$ and let $S$ be the solution output. We transform the $S$ into a solution $S'$ as done in Lemma~\ref{lem:transformSolution1} and then solution $S'$ into solution $S_2$ as done in Lemma~\ref{lem:transformSolution2}. We output the better solution from $\{S_1,S_2\}$. The running time is clearly dominated by the algorithm for \prob{Maximum Weight Bipartite Matching}. 

We give the proof on the approximation guarantee. Fix an optimal solution $O$ to the instance $(I,J)$. Let $I_R \subseteq I$ be the set of bins covered regularly by the solution $O$ and $J_R=\{j\in J \mid \exists i\in I_R: j\in O_i \}$, the set of items on these bins. Let $I_S \subseteq I$ be the set of bins covered singularly by the solution $O$ and $J_S=\{j\in J \mid \exists i\in I_S: j\in O_i \}$, the set of items on these bins. We have $\OPT(I,J)  = \OPT(I_R,J_R) + \OPT (I_S,J_S)$. Thus $\OPT(I,J) - \OPT(I_R,J_R) = \OPT (I_S,J_S)$.

Case $\OPT(I_R,J_R) < 4/5 \cdot \OPT(I,J)$. Then $\OPT(I_S, J_S) > 1/5 \cdot \OPT(I,J)$ by the above. Hence in this case we output a solution such that $\OPT(I,J) \leq 5\ALG(I,J)$ by Observation~\ref{obs:matchingSingular}.

Case $\OPT(I_R,J_R) \geq 4/5 \cdot \OPT(I,J)$. It follows $\OPT (I_S,J_S) \leq 1/5 \cdot \OPT(I,J)$. We find
\begin{align}
\OPT(I,J) & = \OPT(I_R,J_R) + \OPT (I_S,J_S) \notag \\
	  & \leq \OPT^*(I_R,J_R) +  1/5\cdot  \OPT (I,J) \label{eq:gbc_step1} \\
	  & \leq \OPT^*(I,J)+ 1/5\cdot  \OPT (I,J) \notag  \\
	  & = \ALG^*(I,J)+ 1/5\cdot \OPT (I,J) \label{eq:gbc_step2} \\
	  & \leq 4\ALG(I,J)+1/5\cdot \OPT (I,J). \label{eq:gbc_step3}
	  \end{align}

In Inequality~(\ref{eq:gbc_step1}) we use $\OPT^*(I_R,J_R) \geq \OPT(I_R,J_R)$ and the assumption of the case. In Inequality~(\ref{eq:gbc_step2}) we use Lemma~\ref{lem:relaxedProblemOptimal}. In Inequality~(\ref{eq:gbc_step3}) we have accounted for transforming the fractional solution to the modified problem into a solution for the \prob{Generalized Bin Covering} problem with Lemmas~\ref{lem:transformSolution1} and~\ref{lem:transformSolution2}. It follows $\OPT(I,J) \leq 5\ALG(I,J)$.
\end{proof2}

\section{Variable-Sized Bin-Covering}
\label{sec:variable-sized_unit_supply}

\subsection{A Tight Analysis of Next Fit Decreasing in the Unit Supply Model}
\label{sec:nfd}

In this subsection, we have $d_i = p_i$ for all $i$ in the unit supply model. The algorithm \prob{Next Fit Decreasing} (\NFD) is given in Figure~\ref{alg:nfd}. The algorithm considers bins in non-increasing order of demand. For each bin, if the total size of the unassigned items suffices for coverage, it assigns as many items (also non-increasing in size) as necessary to cover the bin. Otherwise, the bin is skipped. In this section we assume that we have $d_1\geq \dots \geq d_m$ and $s_1 \geq \dots \geq s_n$, as needed by the algorithm. 

\begin{figure}[h]
\fbox{
\begin{minipage}{\textwidth}
\begin{itemize}
  \item Sort bins non-increasingly by demand and rename the bins such that $d_1 \geq \dots \geq d_m$.
  \item Sort items non-increasingly by size and rename the items such that $s_1 \geq \dots \geq s_n$.
  \item Let $i=1$ be the current bin and $j=1$ the index of the first unassigned item.
  \item While $j \leq n$ and $i\leq m$ do 
    \begin{itemize}
    \item If $\sum_{l=j}^n s_l < d_i$ set $i:= i+1$ and $S_i = \emptyset$.
    \item Else let $j'$ be the smallest index with $\sum_{l=j}^{j'} s_l \geq d_i$. \\
	Assign the items $j,\dots, j'$ to bin $i$, i.\,e., $S_i = \{j, \dots, j'\}$ \\
	Set $i:= i+1$ and $j=j'+1$.
    \end{itemize}
\item Return $S = (S_i)_{i \in I}$.
\end{itemize}
\end{minipage}}

\caption{Algorithm \NFD.}
\label{alg:nfd}
\end{figure}

\begin{example}
Let $2/3> \epsilon > 0$ be arbitrary. The following instance $(I,J)$ yields that \NFDs gives an approximation not better than $9/4-2\epsilon$. Hence \NFDs is at least a $9/4$-approximation. Let $I = \{4,3-2\epsilon,3-2\epsilon,3-2\epsilon\}$ and $J=\{2-\epsilon,2-\epsilon,2-\epsilon,1-\epsilon,1-\epsilon,1-\epsilon\}$. Observe we have $\NFD(I,J) = 4$ and $\OPT(I,J) =9-6\epsilon$.
\label{exa:nfd_lower_bound}
\end{example}

\begin{theorem}\label{thm:NFDisSuperb}
\NFDs is a 9/4-approximation algorithm with running time $O(n \log n + m \log m)$. The bound is tight.
\end{theorem}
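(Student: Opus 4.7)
The running time is clear: sorting costs $O(n\log n+m\log m)$, and the main while loop does $O(n+m)$ work since $i$ and $j$ advance monotonically. Tightness is witnessed by Example~\ref{exa:nfd_lower_bound}. It remains to prove the upper bound $9/4$.

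Fix an instance with bins sorted $d_1\ge\cdots\ge d_m$ and items $s_1\ge\cdots\ge s_n$, an optimal solution $O=(O_i)$, and NFD's solution $S=(S_i)$. Let $B$ be the bins NFD covers, split $B=B_1\cup B_2$ into singularly and regularly covered bins, and let $\bar B=I\setminus B$ be the skipped bins. Let $I_O$ be the bins OPT covers and let $J_{\mathrm{free}}$ be the items NFD never assigns. I will split
\[
\OPT \;=\; \sum_{i\in I_O\cap B} d_i \;+\; \sum_{i\in I_O\cap\bar B} d_i \;=:\; T_1+T_2,
\]
bound $T_1\le \NFD$ trivially, and then bound $T_2$ in terms of $\NFD$.

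The three structural facts I plan to use are: $(a)$ $s(S_i)\le 2d_i$ for every $i\in B_2$, by the standard NFD argument that the final item added has size at most $d_i$; $(b)$ if $i^*=\min\bar B$ exists, then $s(J_{\mathrm{free}})<d_{i^*}$, because NFD skips $i^*$ exactly when the unassigned items already have total size below $d_{i^*}$, and $J_{\mathrm{free}}$ can only shrink thereafter; $(c)$ for $i\in B_1$ the singular item has size $\ge d_i$ and, by decreasing-size processing, is among the largest items available when it is placed. Using these I will classify each item of $O_i$ (for $i\in I_O\cap\bar B$) according to where NFD placed it: $J_{\mathrm{free}}$-items contribute at most $d_{i^*}$ in total, $B_2$-items at most $2\sum_{i\in B_2}d_i$ in total, and $B_1$-items need a separate argument.

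The main obstacle will be the $B_1$-contribution, since a singular item may be arbitrarily much larger than the bin it singularly covers, so a crude size bound overcharges. I plan to handle this by an exchange/charging argument: whenever $O$ diverts an NFD-singular item $j$ of some $i\in B_1$ into bins of $I_O\cap\bar B$, $O$ forfeits the profit $d_i$ that NFD gains from $i$, and the additional profit $O$ extracts using $j$ is charged against $d_i$ via admissibility and the decreasing-demand order. Combining $(a)$--$(c)$ with this charging step and distinguishing whether $T_1$ or $T_2$ dominates should balance the factor $2$ coming from $B_2$ against the smaller-order contributions from $J_{\mathrm{free}}$ and $B_1$ to yield $\OPT\le (9/4)\,\NFD$. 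The tight instance from Example~\ref{exa:nfd_lower_bound} -- with $B_1=\emptyset$, one $B_2$-bin consuming items whose recombination with $J_{\mathrm{free}}$ would have covered several skipped bins -- should match the worst case of the analysis exactly.
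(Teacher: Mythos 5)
The decomposition $\OPT = T_1 + T_2$ and the classification of OPT's items into $J_{\mathrm{free}}$-items, $B_2$-items and $B_1$-items is a natural first step, and it matches the paper's opening intuition, but as stated it cannot reach $9/4$. With your ingredient $(a)$, $s(S_i)\le 2d_i$ for $i\in B_2$, together with $(b)$ and $T_1\le\NFD$, the volume bound you propose gives only roughly $\OPT\le 3\cdot\NFD$. Even if you sharpen $(a)$ to $s(S_i)\le \frac{t}{t-1}d_i$ when $t=|S_i|\ge 3$, the tight example (one covered bin with $t=3$, $s(S_1)<\tfrac32 d_1$, $s(J_{\mathrm{free}})<d_2\le d_1$) only gets you to $\OPT < \tfrac52 d_1 = \tfrac52\NFD$. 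The jump from $5/2$ to $9/4$ is exactly where the paper needs a non-volume argument: it has to exploit that OPT cannot actually distribute three items of total size $\le \tfrac32 d_1$ across three skipped bins whose demands would each have to exceed $\tfrac34 d_1$ (Lemma~\ref{lem:onlyOneMachineBeforeGap1}). Your plan contains nothing of this kind; a classification-plus-charging scheme that only sums sizes cannot deliver the $9/4$ constant.

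The $B_1$ charging is also not clearly sound as sketched. If the singular item $j$ of $i\in B_1$ is diverted by OPT to a skipped bin $i'<i$, then $d_{i'}\ge d_i$, and the NFD skip of $i'$ only tells you $s_j < d_{i'}$; OPT earns $d_{i'}$, which may strictly exceed the $d_i$ it ``forfeits,'' and you haven't explained how the excess $d_{i'}-d_i$ is absorbed. The paper instead removes these bins entirely by an exchange argument (Lemma~\ref{lem:everyBigJobIsAOptBigJob}) that shows $u(i)\ge 2d_i$ bins can be deleted without decreasing the ratio.

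Most seriously, your one-shot argument ignores the case where NFD's covered and skipped bins interleave, i.e.\ there are several gaps. Your $J_{\mathrm{free}}$-bound covers only the items NFD never assigns, but with interleaved gaps there are items assigned to later, smaller covered bins that OPT can move across a gap, and no single bound on $s(J_{\mathrm{free}})$ controls them. The paper handles this with the ``well-covered bin'' notion, the ``head of the instance,'' and the Decomposition Lemma~\ref{lem:decompositionLemma}, which peels off the left part (shown to be a $9/4$-instance) and recurses on the rest. This recursive mechanism is the heart of the proof and is absent from your proposal.
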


Note that this is almost best possible, since the problem is inapproximable up to a factor of two, unless $\P = \NP$, which holds for unit supply even asymptotically in the notion of Theorem~\ref{thm:lower_bound}.

\noindent
{\bf Proof techniques.} We will use three kinds of arguments. The first type we call a volume argument. If $s$ is the sum of item sizes in the (remaining) instance, we have $\OPT \leq s$. This argument holds independently of the concrete demands of bins. Such volume bounds are too weak in general to achieve the claimed bound, thus we need arguments using the structure of bins in the instance, which is the second type of arguments. For example, if the sum of item sizes in the (remaining) instance is $\alpha d$, $\alpha>1$ and the demand of the only bin in the instance is $d$, then it follows $\OPT \leq d$, while we could only conclude $\OPT\leq \alpha d$ with a volume argument. The third type of argument we use are arguments transforming instances. These arguments give that we can \wlgs restrict ourselves to analyze instances having certain properties. For example, we may assume that there are no items in the instance with size larger than the largest bin demand.\\
\noindent
{\bf Proof outline.} Our proof looks at the specific structure of the solution given by \NFDs and argues based on that, how much better an optimal solution can be. We employ the described techniques in the following way. Firstly, we settle two basic properties of \NFD: A solution of \NFDs is unique (Observation~\ref{obs:i0}) and if a bin is covered with at least twice its demand, then there is only one item assigned to it (Observation~\ref{obs:moreThanTwiceDemand}). These properties will be used implicitly during the analysis. After that, we give transformation arguments, which allow us to restrict ourselves to analyze instances with the following properties. We may assume that \NFDs covers the first bin (Observation~\ref{obs:useFirstBin}), and that the ``right-most bins'' (i.\,e. the bins with the least demand -- or the smallest bins) are empty (Observation~\ref{obs:reduce2}), where we will specify this notion in more detail later. We will show that we may assume that the ``left-most bins'' (i.\,e. the largest bins) are only assigned items such that they do not exceed twice their demand (Lemma~\ref{lem:everyBigJobIsAOptBigJob}). Here ``left-most bins'' refers to the bins up to the first empty bin.

With these tools at hand we can come to the actual proof. The central notion here is the {\em well-covered} bin (Definition~\ref{defi:well-covered}): Consider the right-most (i.e., smallest) empty bin in the instance with the property that all larger bins are assigned items only up to twice their demand. If such a bin exists, then we call the covered bins of these well-covered. The proof will be inductive. The terminating cases are the ones, when there are either at least four well-covered bins (Observation~\ref{obs:bigSolutionsGood}) or between two and three well-covered bins, but there is a bin among these containing at least three items (Lemma \ref{lem:twoOrThreeMachinesBeforeGap}). These cases are settled by volume arguments which is the reason, why they are terminating cases -- even if there are additional filled but not well-covered bins in the instance. We are also in terminating cases if the above prerequisites are not met, but there are no filled bins which are not well-covered: Lemma~\ref{lem:wellCoveredWithAtMostTwoJobs} treats the case that all of the at most three well-covered bins contain at most two items and Lemma~\ref{lem:onlyOneMachineBeforeGap1} gives the cases, in which we have exactly one well-covered bin in the instance.

If there are additional filled but not well-covered bins and we cannot apply volume arguments -- as in the both last mentioned situations -- , we have to look at the instance more closely. Our idea is here to consider a specific not well-covered bin, which will be called the {\em head of the instance}. We will subdivide such an instance into two parts, which is done by the key lemma of the recursion step, the Decomposition Lemma~\ref{lem:decompositionLemma}. Therein and in Lemma~\ref{lem:onlyOneMachineBeforeGap2} we show that it is not advantageous to assign items, which \NFDs assigned to bins with larger demand than the demand of the head of the instance, to bins with smaller demand than the demand of the head of the instance. This allows us in combination with some estimations to consider the left part of the instance and the right part separately. For the left part Lemma~\ref{lem:atMostTwoJobsOnIPrime} and Lemma~\ref{lem:decompositionLemma} give that the approximation factor of \NFDs is at most $9/4$ and the right part of the instance is a smaller instance and we may hence iteratively apply the argumentation.

We now start with the proof and give four observations, which are easy but also rather important and will be used often implicitly during the analysis. The first observation is immediate from \NFD's behavior, and thus no proof is necessary. 

\begin{observation}\label{obs:i0} Fix an instance $(I,J)$. Then the solution of \NFDs (up to renaming items of identical size) for this instance is unique. Further, if \NFDs did not assign the item $j_0$ to a bin, then \NFDs does not assign the items $j_0+1,\dots,n$ to a bin either. 
\end{observation}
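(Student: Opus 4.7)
The plan is to verify both assertions directly from the pseudo-code of \NFD{} in Figure~\ref{alg:nfd}; no structural machinery about coverings is needed.

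For uniqueness, I would observe that once bins are sorted non-increasingly by demand and items non-increasingly by size, every iteration of the while loop is completely determined by the current pair $(i,j)$: either the test $\sum_{l=j}^n s_l < d_i$ triggers and bin $i$ is skipped, or the smallest $j'$ with $\sum_{l=j}^{j'} s_l \ge d_i$ is uniquely defined and the block $\{j, \dots, j'\}$ is assigned to bin $i$. The only freedom in the whole procedure lies in how ties are broken when sorting items of equal size (and, symmetrically, bins of equal demand, which in the \prob{Variable-Sized Bin Covering} setting $p_i = d_i$ are fully interchangeable). Any two valid sortings therefore differ exactly by a permutation of items of identical size, which is precisely the renaming permitted by the statement.

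For the second claim, I would argue that the pointer $j$ is non-decreasing and advances only across items that have just been assigned, since the update $j := j'+1$ occurs only inside the assignment branch. Hence, at any point in the execution the set of items already placed forms an initial segment $\{1, \dots, j-1\}$ of the sorted item list. If item $j_0$ is never assigned, then the final value of $j$ must satisfy $j \le j_0$. The while loop can exit in two ways: either $j > n$ (which, combined with $j \le j_0 \le n$, yields a contradiction) or $i > m$. So exit must occur via $i > m$, and at that moment the items $j_0, j_0 + 1, \dots, n$ are all still unassigned, which is exactly the stated conclusion.

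There is no genuine obstacle: the only subtlety worth mentioning is distinguishing the two termination conditions of the while loop. The entire argument is essentially a direct reading of the pseudo-code, which is why the authors simply remark that the observation is immediate from \NFD's behaviour.
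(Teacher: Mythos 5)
Your proof is correct and takes exactly the approach the authors had in mind: they explicitly state that the observation ``is immediate from \NFD's behavior, and thus no proof is necessary,'' and your argument is the straightforward verification from the pseudo-code that they elided. The loop invariant (assigned items always form the prefix $\{1,\dots,j-1\}$) and the case split on the two exit conditions of the while loop are precisely the right ingredients, and you correctly resolve the termination case $j>n$ by contradiction.
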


In the following we similarly assume \wlgs that if for any bin $i$ we have $u_\OPT(i) > 0$ then $u_\OPT(i) \geq d_i$ and if $S_i$ is the set of items assigned to a bin $i$, then $u_\OPT(i) -s_j < d_i$ for any $j \in S_i$, i.\,e. \OPT\ does not assign items, which are not needed to cover a bin. The next observation gives that, if a bin in \NFD's solution is assigned at least twice its demand, there is only one item on it. Recall $S_i$ is the set of items \NFD\ assigns to bin $i$.

\begin{observation}
\label{obs:moreThanTwiceDemand}
If $s(S_i) \geq 2 d_i$ then $|S_i| = 1$.
\end{observation}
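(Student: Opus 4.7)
The plan is to argue by contrapositive: assume $|S_i| \ge 2$ and deduce $s(S_i) < 2d_i$. The proof is essentially a direct unpacking of the definition of \NFD\ together with the fact that items are processed in non-increasing order of size, so I do not anticipate any real obstacle beyond a careful reading of the algorithm.

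First I would invoke the description of \NFD\ in Figure~\ref{alg:nfd}: if bin $i$ receives items at all, they form a contiguous block $S_i = \{j, j+1, \dots, j'\}$ with $j'$ chosen as the \emph{smallest} index such that $\sum_{l=j}^{j'} s_l \ge d_i$. The minimality of $j'$ gives, when $|S_i|\ge 2$ (i.e.\ $j' \ge j+1$), the strict inequality
\[
\sum_{l=j}^{j'-1} s_l < d_i.
\]
Next I would use that items are sorted non-increasingly, so $s_{j'} \le s_{j'-1}$. Since $s_{j'-1}$ is one summand of the left-hand side above (and all item sizes are positive), we get $s_{j'} \le s_{j'-1} < d_i$.

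Combining these two bounds yields
\[
s(S_i) \;=\; \sum_{l=j}^{j'-1} s_l \;+\; s_{j'} \;<\; d_i + d_i \;=\; 2 d_i,
\]
which is the contrapositive of the claim. The only subtle point to flag is that the strict inequality $s_{j'-1}<d_i$ really requires $|S_i|\ge 2$, since it comes from the fact that the partial sum up to $j'-1$ is a nonempty sum that was strictly below the demand; in the degenerate case $|S_i|=1$ there is nothing to prove.
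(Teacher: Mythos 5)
Your proof is correct and follows essentially the same approach as the paper's: both rest on the minimality of the index $j'$ (giving $\sum_{l=j}^{j'-1} s_l < d_i$) together with the non-increasing ordering of items (giving $s_{j'} \le s_{j'-1} < d_i$). The only cosmetic difference is that you phrase it as a direct contrapositive computation $s(S_i) < 2d_i$, whereas the paper derives a contradiction by first concluding $s_{j^*} > d_i$ and then observing this violates the ordering.
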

\begin{proof}
Assume $|S_i| \geq 2$. Immediate from the algorithm we have that \NFDs uses a new bin, if the current bin is full and \NFDs never puts an item on a full bin. Hence let there be only one item $j^*$ above the boundary of $d_i$, i.\,e. if $1,\dots,j^*$ are the items, which \NFDs assigns to bin $i$, then we have $u(i)-s_{j^*} < d_i$ and $u(i) \geq d_i$. If $u(i) \geq 2 d_i$, then it follows that $s_{j^*} > d_i$. But for all items $j \in S_i \setminus \{j^*\}$ it holds $s_j< d_i$, because this even holds for the sum of all these items. As $|S_i|> 1$ this is a contradiction to the fact that \NFDs assigns items in non-increasing order.
\end{proof}

The next observation gives that we may restrict ourselves to the analysis of such instances, where \NFDs covers the first bin.

\begin{observation}\label{obs:useFirstBin}
Fix an instance $(I,J)$. If $u_{\NFD}(1)=\dots=u_{\NFD}(i) = 0$ then $u_{\OPT}(1)=\dots=u_{\OPT}(i) = 0$. Let $I'' = I\setminus \{1,\dots,i\}$. Then $\NFD(I,J) = \NFD(I'',J)$ and $\OPT(I,J) = \OPT(I'',J)$.
\end{observation}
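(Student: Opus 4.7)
The plan is to extract from the hypothesis the single sharp fact $s(J) < d_i$, and then deduce everything else from it. Reading off the \NFDs algorithm, bin $k$ is skipped and receives $S_k = \emptyset$ precisely when the total size of the currently unassigned items falls below $d_k$. Under the hypothesis $u_{\NFD}(1) = \cdots = u_{\NFD}(i) = 0$, no items have yet been consumed when \NFDs reaches bin $i$, so the relevant sum is $s(J)$, giving $s(J) < d_i$. Since bins are sorted non-increasingly by demand, $s(J) < d_k$ also for every $k \leq i$.

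Given this bound, the claim $u_{\OPT}(k) = 0$ for $k \leq i$ is then immediate from the standing \wlgs convention (stated just before Observation~\ref{obs:moreThanTwiceDemand}) that $u_{\OPT}(k)$ is either $0$ or at least $d_k$: the second alternative is ruled out by $u_{\OPT}(k) \leq s(J) < d_k$.

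Finally, both equalities $\NFD(I,J) = \NFD(I'',J)$ and $\OPT(I,J) = \OPT(I'',J)$ follow because neither algorithm's solution touches the deleted bins. \NFD's execution on $(I,J)$ keeps the item pointer at $j = 1$ throughout iterations $1,\ldots,i$ and produces only empty sets $S_1,\ldots,S_i$, so from iteration $i+1$ onward it coincides step-for-step with the full execution of \NFDs on $(I'',J)$; hence the covered bins, and therefore the profits, agree. For \OPT, any feasible solution on either instance lifts to a feasible solution of equal profit on the other (leaving the bins of $\{1,\ldots,i\}$ empty on $(I,J)$, which is valid by the previous paragraph), so the two optima coincide. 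The only non-routine step is the very first one: one must recognize that the collective hypothesis forces $s(J) < d_i$, which is the lever used against \OPT; after that everything is bookkeeping.
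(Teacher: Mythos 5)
Your proof is correct and follows essentially the same route as the paper's: both extract $s(J) < d_i$ from the hypothesis, use the demand ordering to get $s(J) < d_k$ for all $k \leq i$, and conclude that \OPT\ also earns nothing on those bins. You spell out the final bookkeeping (why the profits on $(I,J)$ and $(I'',J)$ coincide) a bit more explicitly than the paper, which leaves it implicit, but the argument is identical in substance.
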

\begin{proof}
If $u_{\NFD}(1)=\dots=u_{\NFD}(i) = 0$, then from \NFD's behavior we know that the sum of all item sizes in the instance did not suffice to fill bin $i$. Hence, by the ordering of bins, it does not suffice to fill the bins $1,\dots,i-1$ either. Of course, the same holds true for \OPT. \end{proof}

By the argument given by the previous observation it is also justified to assume $\NFD(I,J) > 0$. Since otherwise also $\OPT(I,J) = 0$ follows and \NFDs is optimal. This assumption will always be implicitly used and thus the quotient $\OPT(I,J)/ \NFD(I,J)$ is always defined. Alternatively we could define $\OPT(I,J)/ \NFD(I,J):=1$, if $\NFD(I,J)=\OPT(I,J)=0$.

Further, we may always assume that there exists an empty bin, otherwise \NFDs is clearly optimal. We may strengthen this observation such that it suffices to compare instances of \NFDs to \OPT, where the right-most bins are all empty, i.\,e. there is a non-empty bin $i'$, the bin $i'+1$ is empty and all bins with higher indices, if they exist, are also empty. 

\begin{observation}\label{obs:reduce2}
Let a solution of \NFDs for an instance $(I,J)$ be given. Let $i^*$ be a bin with $u(i^*) = 0$ and for all $i > i^*$ we have $u(i) > 0$. Then $\OPT(I,J)/ \NFD(I,J) \le \OPT (I',J) / \NFD(I',J)$, where $I' = I \setminus \{ i^* + 1, \dots, m\}$.
\end{observation}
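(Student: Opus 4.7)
The plan is to quantify both sides exactly in terms of the ``removed'' profit $c := \sum_{i=i^*+1}^m d_i$ and then finish with elementary algebra. Since $u_{\NFD}(i^*)=0$, the behavior of \NFDs on bins $1,\dots,i^*$ is identical in the two instances $(I,J)$ and $(I',J)$: in both cases the algorithm processes items in non-increasing order, greedily assigning them, and eventually fails to cover $d_{i^*}$. On $(I',J)$ it then halts, while on $(I,J)$ it continues on the (smaller) bins $i^*+1,\dots,m$. By assumption every such bin satisfies $u_{\NFD}(i)>0$, and since \NFDs covers any bin to which it assigns items (and $p_i = d_i$), this yields the identity $\NFD(I,J) = \NFD(I',J) + c$.

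For the optimum I would argue the matching upper bound: any feasible solution for $(I,J)$ restricts to a feasible solution on the bins in $I'$, which gains at most $\OPT(I',J)$, while its total contribution from the bins $i^*+1,\dots,m$ is at most $\sum_{i>i^*} d_i = c$. Applying this to an optimal solution gives $\OPT(I,J) \le \OPT(I',J) + c$.

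Finally, set $a := \OPT(I,J)$, $a' := \OPT(I',J)$, $b := \NFD(I,J)$, and $b' := \NFD(I',J) = b-c$. The desired inequality $a/b \le a'/b'$ is equivalent to $(a-a')\,b \le a\,c$, which follows from $a - a' \le c$ (step two) together with the trivial $b \le a$. The ratio on the right-hand side is well-defined: by Observation~\ref{obs:useFirstBin} we may assume that \NFDs covers the first bin of $I'$, so $\NFD(I',J)>0$.

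The argument is essentially one short calculation, and I do not expect any genuine obstacle; the only subtlety is matching a lower bound on $\NFD$ (which gains an \emph{exact} extra $c$ from the tail bins) against an upper bound on $\OPT$ (which can gain \emph{at most} $c$ from them), which is what creates the ratio improvement. This asymmetry — identity on the \NFD side, inequality on the \OPT side — is precisely what drives the proof.
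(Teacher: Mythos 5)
Your proof is correct and follows essentially the same line as the paper's: both rest on the identity $\NFD(I,J)=\NFD(I',J)+c$ with $c=\sum_{i>i^*}d_i$, the bound $\OPT(I,J)\le\OPT(I',J)+c$, the trivial fact $\NFD\le\OPT$, and the positivity of $\NFD(I',J)$, differing only in that the paper chains the fractions $\frac{a'}{b'}\ge\frac{a-c}{b-c}\ge\frac{a}{b}$ while you cross-multiply. The paper additionally records the inequality $d_{i^*}>c$ to justify $\NFD(I,J)>c$, whereas you obtain $\NFD(I',J)>0$ more directly by noting that NFD's behavior on bins $1,\dots,i^*$ is unchanged and bin $1$ may be assumed covered; both routes are fine.
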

\begin{proof}
Since \NFDs did not fill bin $i^*$ we know by \NFD's behavior $d_{i^*} > \sum_{l= i^*+1}^m d_l$. Hence on the one hand for the instance $(I',J)$ we have $\NFD(I',J) = \NFD(I,J) - \sum_{l= i^*+1}^m d_l$ by \NFD's behavior. As we may assume that bin $1$ is filled, we have on the other hand $\NFD(I,J) \geq d_1 \geq d_{i^*} >\sum_{l= i^*+1}^m d_l$ and it follows $\NFD(I,J) >  \sum_{l= i^*+1}^m d_l$. For \OPT\ we have

$$ \OPT(I',J) \geq \OPT(I,J)- \sum_{\substack{l > i^*, \\ u(l) > 0}}  d_l,$$

since \OPT\ can possibly assign the items, which potentially reside on the bins $i^*+1,\dots,m$ in its solution to the instance $(I,J)$ in the instance $(I',J)$ to other bins.  Altogether we find

$$ \frac{\OPT (I',J) }{ \NFD(I',J)} \geq \frac{\OPT(I,J)- \sum_{\substack{l > i^*, \\ u(l) > 0}}  d_l}{\NFD(I,J) - \sum_{l= i^*+1}^m d_l} \geq \frac{\OPT(I,J)- \sum_{l= i^*+1}^m  d_l}{\NFD(I,J) - \sum_{l= i^*+1}^m d_l}  \geq \frac{\OPT(I,J)}{\NFD(I,J)}, $$

where the last inequality holds, because of $\OPT(I,J) \geq \NFD(I,J) > \sum_{l= i^*+1}^m d_l >0$.
\end{proof}

With this observation let $u(m) = 0$ from now on. The next lemma states that we can assume \wlgs that all bins $i$ up to the bin with smallest index $i^*$, such that $u(i^*+1)=0$, receive only items in such a way, that $u(i) \leq 2d_i$ for $i< i^*$.

\begin{lemma}\label{lem:everyBigJobIsAOptBigJob}
Let $(I,J)$ be an instance and consider a solution of \NFDs for it. Let $i^*$ be the smallest index, such that $i^*$ is a bin with $u(i^*) > 0 $ and $u(i^*+1) = 0$. Let $i_1,\dots,i_k \in\{1,\dots,i^*\}$ be the indices with $u(i_j) \geq 2d_{i_j}$ for $j=1,\dots,k$ and let $j_1,\dots,j_k$ be the items on these bins. Set $I'=I\setminus \{i_1,\dots,i_k \}$ and $J'=J\setminus \{j_1,\dots,j_k\}$. Then $\OPT(I,J)/\NFD(I,J) \leq  \OPT(I',J')/\NFD(I',J').$ 
\end{lemma}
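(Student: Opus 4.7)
The plan is to prove the ratio inequality by establishing the two quantitative relations
\[
\NFD(I,J) \;=\; \NFD(I',J') + D \qquad\text{and}\qquad \OPT(I,J) \;\leq\; \OPT(I',J') + D,
\]
where $D := \sum_{l=1}^k d_{i_l}$. Granted both, and using the trivial bound $\NFD(I',J') \leq \OPT(I',J')$, a standard mediant argument yields
\[
\frac{\OPT(I,J)}{\NFD(I,J)} \;\leq\; \frac{\OPT(I',J')+D}{\NFD(I',J')+D} \;\leq\; \frac{\OPT(I',J')}{\NFD(I',J')},
\]
which is the desired conclusion.

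The equality on $\NFD$ is a structural consequence of Observation~\ref{obs:moreThanTwiceDemand}: each overfull bin $i_l$ carries exactly one item, namely $j_l$. I would argue by induction on $k$ that removing the pair $(i_l,j_l)$ leaves $\NFD$'s deterministic sweep on every other bin unchanged. When $\NFD$ originally processed $i_l$, every bin of strictly larger demand had already been assigned items of size at least $s_{j_l}$, and after assigning $j_l$ to $i_l$ the algorithm moved on to the item immediately following $j_l$ in the decreasing size order. Deleting $(i_l,j_l)$ from the instance simply excises this single step, so $\NFD$'s profit drops by exactly $d_{i_l}$ per removed pair.

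The inequality on $\OPT$ is the main obstacle. Given any optimal assignment $S^*$ on $(I,J)$, I would construct a feasible assignment $\tilde S$ on $(I',J')$ with $p(\tilde S) \geq \OPT(I,J)-D$ by deleting the bins $\{i_l\}$ and items $\{j_l\}$ from $S^*$, followed by a local repair step. Whenever $S^*$ places $j_l$ on $i_l$, the deletion costs exactly $d_{i_l}$. The delicate case is when $S^*$ places $j_l$ on some other bin $i^{(l)} \in I'$: by the WLOG tightness of $\OPT$, removing $j_l$ uncovers $i^{(l)}$ at a potential loss of $d_{i^{(l)}}$, which may exceed $d_{i_l}$. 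I would compensate by moving the items $T_l = S^*_{i_l}$, freed by the deletion of $i_l$, onto $i^{(l)}$ to replace $j_l$; the hypothesis $s_{j_l} \geq 2 d_{i_l}$ combined with the items $S^*_{i^{(l)}} \setminus \{j_l\}$ already present on $i^{(l)}$ is intended to restore coverage.

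The main difficulty is the combinatorial bookkeeping in this repair step, especially in the regime $d_{i^{(l)}} \gg d_{i_l}$ where the freed $T_l$ may be insufficient on its own. In that regime I would invoke the structural fact that $\NFD$ placed $j_l$ on $i_l$ only because all bins $1, \dots, i_l-1$ had already been filled by items of size at least $s_{j_l}$, so that at least one such large item must appear in $S^*$'s assignment either on a removed bin or in a position amenable to a further swap; iterating this swap produces the large item needed to re-cover $i^{(l)}$. Handling the possible cascades, which arise when several $j_l$'s share the same target bin or when a swap uncovers a further bin, is the technical heart of the argument, and once it is resolved, summing the bounds across all $l$ yields the aggregate $\OPT(I,J) \leq \OPT(I',J') + D$ and thereby the lemma.
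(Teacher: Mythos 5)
Your framework --- establish $\NFD(I,J) = \NFD(I',J') + D$ and $\OPT(I,J) \leq \OPT(I',J') + D$, then apply the mediant inequality --- is the right skeleton, and the $\NFD$ identity follows as you say (via Observation~\ref{obs:moreThanTwiceDemand}). But the $\OPT$ inequality is where the lemma actually lives, and your proposal leaves it open. You delete the pairs $(i_l,j_l)$ from $S^*$ and attempt a local repair; in the delicate case where $S^*$ places $j_l$ on a bin $i^{(l)} \neq i_l$, you compensate by moving $T_l := S^*_{i_l}$ onto $i^{(l)}$. This does not suffice: $T_l$ certifies only $d_{i_l}$ worth of covered demand, which can be far smaller than $d_{i^{(l)}}$, so $i^{(l)}$ stays uncovered. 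You then gesture at a cascade of swaps and explicitly label it ``the technical heart of the argument'' without resolving it, so the bound on $\OPT$ is unproven.

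The paper closes this exactly where you get stuck, by performing the exchange \emph{before} deleting anything and one pair at a time rather than all at once. Since $s_{j_1} \geq 2 d_{i_1}$ and $d_i \leq d_{i_1}$ for all $i \geq i_1$, item $j_1$ covers any bin of index $\geq i_1$ by itself, with the waste $s_{j_1}-d_i$ minimized at $i=i_1$; so if $\OPT$ puts $j_1$ on any bin of index $\geq i_1$ one may as well put it on $i_1$. If instead $\OPT$ puts $j_1$ on a bin of index $< i_1$, then some item $j \leq j_1$ must land on a bin of index $\geq i_1$ (because $\NFD$ already saturates $\{1,\dots,i_1-1\}$ with exactly $\{1,\dots,j_1-1\}$), and since $s_j \geq s_{j_1}$ swapping the positions of $j$ and $j_1$ in $\OPT$'s assignment loses nothing. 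Either way one reduces WLOG to the case $\OPT$ assigns $j_1$ to $i_1$, whence $\OPT(I,J) = d_{i_1} + \OPT(I'',J'')$ with $I''=I\setminus\{i_1\}$, $J''=J\setminus\{j_1\}$; the mediant step then disposes of one pair and the argument iterates. This exchange is the missing idea in your outline: without it the one-shot bound $\OPT(I,J) \leq \OPT(I',J') + D$ has no proof, and the cascading repair you sketch is precisely the bookkeeping the exchange is designed to avoid.
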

\begin{proof}
Consider bin $i_1$ and item $j_1$, where we have by Observation~\ref{obs:moreThanTwiceDemand} that on $i_1$ in fact resides only one item. We argue that we can assume, an optimal algorithm also assigns $j_1$ to $i_1$.

By the ordering of items, for every bin $i\geq i_1$, if we assign $j_1$ to $i$, we have that $s_{j_1} -d_i \geq s_{j_1} - d_{i_1} = u(i_1)-d_{i_1}$. Because of this and because every bin $i \geq i_1$ is covered with only the item $j_1$ we can assume an optimal algorithm would assign $j_1$ to a bin with an index at most $i_1$. 

If an optimal algorithm decides to assign $j_1$ to a bin with index smaller than $i_1$, then it would not assign all of the items $1,\dots,j_1-1$ to the bins $1,\dots,i_1-1$. This is because also \NFDs assigns the items $1,\dots,j_1-1$ to the bins $1,\dots,i_1-1$ and these are already covered, but the bin $i_1$ would not, in this case. Hence at least one of the items $1,\dots,j_1$ would be assigned to a bin with index at least $i_1$, otherwise this assignment would not be optimal. With the same argumentation as above, for every such item $j$, if it would be assigned to a bin $i\geq i_1$ we had $s_j-d_i \geq s_j-d_{i_1}$. Because every such item $j$ covers every bin $i\geq i_1$ alone, we can thus assume $j$ is assigned to $i_1$. But then, since $s_j-d_{i_1} \geq  s_{j_1}-d_{i_1}$, we also can assume $j_1$ is assigned by \OPT\ to $i_1$. 

Hence, defining $I'' = I \setminus \{i_1\}$ and $J''= \setminus \{j_1\}$ we have 

$$ \frac {\OPT(I,J)}{  \NFD(I,J)} =  \frac{ d_{i_1} + \OPT(I'',J'') }{  d_{i_1} +\NFD(I'',J'') } \leq \frac {\OPT(I'',J'')}{ \NFD(I'',J'')}, $$

where the last inequality is because $\OPT(I'',J'') \geq  \NFD(I'',J'') > 0$. Iteratively applying this argumentation yields the statement.
\end{proof}

In order to simplify the analysis we define the notion of a {\em well-covered bin}.

\begin{definition}\label{defi:well-covered}
Consider a solution of \NFDs for an instance $(I,J)$. Fix a bin $i^*$, with $u(i^*) > 0$, and let $i'$ be the smallest number with $i' > i^* $ such that $u(i') = 0$, if it exists. We call the bin $i^*$ well-covered, if $i'$ exists and $u(i) \leq 2d_i$ for all $i=1,\dots,i'$.
\end{definition}

By Observation~\ref{obs:useFirstBin} and Lemma~\ref{lem:everyBigJobIsAOptBigJob} it can be shown that we may assume that there is at least one well-covered bin in the instance.

\begin{observation}\label{obs:wellCovWellDefined}
Consider a solution of \NFDs for an instance $(I,J)$, which contains at least one filled bin. The number $k$ of well-covered bins is well-defined and we can assume $k\geq 1$.
\end{observation}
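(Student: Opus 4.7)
That $k$ is well-defined is immediate: Definition~\ref{defi:well-covered} is a purely structural condition on a bin in a fixed instance with its (unique, by Observation~\ref{obs:i0}) \NFD-solution, so $k$ is simply the cardinality of a finite set of bins.

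To see that we may assume $k \geq 1$, I plan an iterative reduction using only Observation~\ref{obs:useFirstBin} and Lemma~\ref{lem:everyBigJobIsAOptBigJob}, on induction on the number of bins. First, if \NFD\ leaves no bin empty, then \NFD\ is already optimal on this instance and there is nothing to analyse, so we may assume there is at least one empty bin in the current \NFD-solution. For the inductive step, by Observation~\ref{obs:useFirstBin} we may assume bin~$1$ is filled; let $i^*$ be the smallest index satisfying $u(i^*) > 0$ and $u(i^*+1) = 0$, which exists because bin~$1$ is filled and some bin is empty. If every bin $i \leq i^*$ already satisfies $u(i) \leq 2 d_i$, then together with $u(i^*+1) = 0 \leq 2 d_{i^*+1}$ the definition of well-covered is met for $i^*$, giving $k \geq 1$ and we are done. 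Otherwise, apply Lemma~\ref{lem:everyBigJobIsAOptBigJob} to delete every bin $i \in \{1, \ldots, i^*\}$ with $u(i) \geq 2 d_i$ together with the (single, by Observation~\ref{obs:moreThanTwiceDemand}) item lying on it; the induction hypothesis then applies to the strictly smaller reduced instance.

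The main point I need to make watertight is that after a single reduction step the resulting $(I', J')$ is still a legitimate instance on which to re-invoke the dichotomy, i.e.\ the $\NFD$-solution on surviving bins has not been scrambled. This follows directly from the deterministic sequential nature of \NFD\ in the two fixed orderings of demands and sizes: each deleted pair consists of a single item that by itself covers at least twice its bin's demand, so removing such a pair just contracts the two sequences by one slot and every assignment on the surviving bins and items is identical to the one in the original \NFD-solution. In particular the filled/empty pattern of surviving bins is preserved, so either $(I', J')$ still contains a filled bin (and the induction continues), or else \NFD\ on $(I', J')$ is optimal and, via the chain of ratio inequalities provided by Observation~\ref{obs:useFirstBin} and Lemma~\ref{lem:everyBigJobIsAOptBigJob}, the analysis of the original instance is trivial. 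Since each reduction strictly decreases the number of bins the process terminates, establishing that we may assume $k \geq 1$ throughout the remaining analysis.
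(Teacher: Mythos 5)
Your proof is correct and takes essentially the same approach as the paper's: both invoke Observation~\ref{obs:useFirstBin} to ensure bin~$1$ is filled and then Lemma~\ref{lem:everyBigJobIsAOptBigJob} to discard the bins $i\le i^*$ with $u(i)\ge 2d_i$, after which the bin just before the first gap is well-covered. The paper applies Lemma~\ref{lem:everyBigJobIsAOptBigJob} in a single shot while you wrap the same reduction in an induction on the number of bins (which in fact terminates after one round), and you additionally spell out the fact --- left implicit in the paper --- that deleting a singularly overfilled bin together with its lone item leaves the \NFD-assignment on the surviving bins unchanged.
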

\begin{proof}
We may assume there is an empty bin and let $i'$ be the smallest index with $u(i') = 0$. By Observation~\ref{obs:useFirstBin} we may assume the bin $1$ is filled and thus $i' > 1$.

By Lemma~\ref{lem:everyBigJobIsAOptBigJob} we may assume that for the bins $i=1,\dots,i'$ we have $u(i) \leq 2d_i$. Let $k'$ the largest index, such that $u(k'+1) = 0$ and $u(k') > 0$ and $u(i) \leq 2d_i$ for all $i=1,\dots,k'$. Let $k = |\{ i \in \{ 1,\dots,k' \} \mid u(i) > 0\}|$. Hence $k'>1$ exists and is well-defined as the solution of \NFDs is unique for a given instance. It follows $k$ is well-defined and $k\geq 1$ as $u(1) > 0$.
\end{proof}

We already can conclude that \NFDs is at most a $3$-approximation. 

\begin{observation} \label{obs:bigSolutionsGood}
Let $(I,J)$ be given. If \NFDs gives a solution with $k$ well-covered bins, then $\OPT(I, J)/\NFD(I, J) \leq 2+1/k$.
\end{observation}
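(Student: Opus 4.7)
The plan is to use a pure volume argument, leveraging the structural guarantees packed into the definition of well-coveredness. Let $k'$ be the index from Observation~\ref{obs:wellCovWellDefined}, so that $u_{\NFD}(k'+1) = 0$, $u_{\NFD}(i) \le 2 d_i$ for every $i = 1, \dots, k'$, and the $k$ well-covered bins are exactly the filled bins in $\{1, \dots, k'\}$. Write $W := \sum_{i \text{ well-covered}} d_i$ for the profit \NFD\ collects on these bins; since $p_i = d_i$ this gives $\NFD(I,J) \ge W$.

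Next I would bound the total size of items in the instance. The items \NFD\ places on bins $1, \dots, k'$ contribute at most $\sum_{i=1}^{k'} u_{\NFD}(i) \le 2W$, because the well-covered bins satisfy $u_{\NFD}(i) \le 2 d_i$ and the remaining bins in that range are empty. By the very rule that caused \NFD\ to skip bin $k'+1$, the items still unassigned at that moment had total size strictly less than $d_{k'+1}$, and these are precisely the items that can end up on any bin $k'+2, \dots, m$ (or stay unassigned). Hence the sum of all item sizes in $J$ is at most $2W + d_{k'+1}$. The standard volume bound (for every bin covered by \OPT\ one has $p_i = d_i \le u_{\OPT}(i)$) then yields $\OPT(I, J) \le 2W + d_{k'+1}$.

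Finally, I would exploit that demands are non-increasing: every well-covered bin has index at most $k'$, so its demand is at least $d_{k'+1}$, and summing gives $W \ge k \cdot d_{k'+1}$, i.e.\ $d_{k'+1}/W \le 1/k$. Combining the estimates,
\[
\frac{\OPT(I,J)}{\NFD(I,J)} \;\le\; \frac{2W + d_{k'+1}}{W} \;=\; 2 + \frac{d_{k'+1}}{W} \;\le\; 2 + \frac{1}{k},
\]
which is the claimed bound.

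The proof contains no real obstacle; the only subtlety is noticing that \NFD's skip rule at bin $k'+1$ is exactly what caps the total mass of items placed beyond the well-covered region by $d_{k'+1}$, and that the same quantity $d_{k'+1}$ is in turn controlled by $W/k$ via the sortedness of demands.
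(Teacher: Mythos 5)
Your proof is correct and essentially mirrors the paper's own argument: bound the total item mass placed on the well-covered region by $2W$ via $u_{\NFD}(i)\le 2d_i$, bound the residual mass by the demand of the skipped bin via \NFD's skip rule, and compare against $\NFD\ge W$ together with the sortedness bound $W\ge k\,d_{k'+1}$. The only cosmetic difference is that you use $d_{k'+1}$ where the paper uses the (weaker) $d_{k'}$, but the argument and its structure are the same.
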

\begin{proof}
Let $k'$ be the largest index of a well-covered bin and let $I'= \{ i \in \{1,\dots,k'\} \mid u(i) > 0\}$, be the set of the well-covered bins. On the one hand we have $\NFD(I,J) \geq \sum_{i\in I'} d_i$ and on the other $\NFD(I,J) \geq kd_{k'}$.

Recall, we have for every $i\in I' $ that $u(i) \leq 2d_i$. Let $l$ be the index of the first item, which \NFD\ could not assign to a bin with index $k'$ or smaller. Since $u(k'+1)=0$ by definition of $k'$ we further have $\sum_{j=l}^n s_j < d_{k'}$, otherwise \NFD\ would have filled bin $k'+1$. It follows for the sum of item sizes $s=\sum_{j=1}^n s_j < \sum_{i \in I'} 2d_i + d_{k'}$.

Because $\OPT \leq s$ we can bound $\OPT <  \sum_{i \in I'} 2d_i + d_{k'} \leq 2\NFD + 1/k \cdot \NFD = (2+1/k)\NFD$.
\end{proof}

For a number of $k\geq 4$ well-covered bins we already have the desired result. We now turn our attention to the cases, when $k\leq 3$.

\begin{lemma} \label{lem:wellCoveredWithAtMostTwoJobs}
Let $(I,J)$ be an instance such that \NFDs gives a solution, in which every filled bin is well-covered and contains at most two items. Then $\OPT(I,J)/\NFD(I,J) \leq 2$.
\end{lemma}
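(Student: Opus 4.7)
The plan is a volume argument combined with a bin-count constraint that comes from the two hypotheses. Let $I_1$ denote the set of filled bins with $k=|I_1|$, and split the items into $J_A := \bigcup_{i\in I_1} S_i$ (those \NFD\ assigns) and $J_U := J\setminus J_A$ (the unassigned ones). I will collect three estimates. First, well-coveredness gives $s(J_A)=\sum_{i\in I_1}u(i)\leq 2\NFD$. Second, by Observation~\ref{obs:reduce2} the last bin $m$ is empty, and since \NFD\ tried and failed to fill every bin after its last filled bin $i_k$, the remaining items satisfy $s(J_U)<d_m$, where $d_m$ denotes the smallest demand in $I$. Third, the ``at most two items per bin'' hypothesis yields $|J_A|\leq 2k$.

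With these in hand, I will observe that every bin covered by \OPT\ contains at least one item from $J_A$: otherwise it would be covered using only items from $J_U$, but $s(J_U)<d_m\leq d_i$ for every bin $i$, which rules out coverage. Hence $|C|\leq |J_A|\leq 2k$ for the set $C$ of \OPT-covered bins. Picking for each $i'\in C$ a primary item $\pi(i')\in T_{i'}\cap J_A$ gives an injection $\pi:C\to J_A$, and composing with \NFD's assignment produces a map $C\to I_1$ in which every $i\in I_1$ has at most $|S_i|\leq 2$ preimages.

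The central step is then a charging argument: for each $i\in I_1$, the OPT-bins mapped to $i$ have their primary items inside $S_i$, whose total size is $u(i)\leq 2d_i$, and any additional items on those OPT-bins are drawn from $J_U$. Summing over $I_1$ gives the estimate $\OPT\leq 2\NFD + s(J_U) < 2\NFD + d_m$. The main obstacle, and the delicate part of the proof, will be closing the remaining $d_m$ slack: the key observation is that $s(J_U)<d_m$ is strictly smaller than every bin's demand, so $J_U$-items can only supplement coverage of the OPT-bins already accounted for through their primary $J_A$-items, rather than enabling an additional bin on their own. A careful case discussion (using that the contribution of $J_U$ to any single OPT-bin is strictly less than $d_m$ while $\OPT$ is a sum of bin demands, each at least $d_m$) absorbs the slack and yields $\OPT\leq 2\NFD$, as claimed.
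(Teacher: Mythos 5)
Your three preliminary observations are correct: $s(J_A)\leq 2\NFD$ (from well-coveredness), $s(J_U)<d_m$ (since bin $m$ is empty and by Observation~\ref{obs:reduce2} it is the last bin), and every \OPT-covered bin must carry at least one $J_A$-item. But the argument breaks at the charging step and, more fundamentally, a pure volume argument of the kind you describe cannot yield the bound~$2$.

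First, the sentence ``any additional items on those OPT-bins are drawn from $J_U$'' is not true: an OPT-bin whose designated primary item lies in $S_i$ may carry further $J_A$-items coming from other $S_{i''}$, so the per-$I_1$-bin charging of demand to $u(i)\leq 2d_i$ does not go through. When you correct for this, your ``central step'' collapses back to the crude bound $\OPT\leq s(J_A)+s(J_U)=s$, and indeed that is all you have, namely $\OPT < 2\NFD+d_m$. Second, the claimed ``careful case discussion'' cannot close the $d_m$ slack from the observations you use, because the slack is real for the volume bound. Take $I=\{1,\,1-\eps,\,1-\eps\}$ and $J=\{1-\eps,\,1-\eps,\,\delta\}$ with $\eps<\delta<1-\eps$: \NFD\ puts the two items of size $1-\eps$ on bin~$1$ and skips the rest, so $\NFD=1$, all filled bins are well-covered with at most two items, yet $s=2-2\eps+\delta$, which exceeds $2\NFD=2$ once $\delta>2\eps$. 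Here $\OPT=2-\eps<2$, but the total item volume is strictly larger than $2\NFD$, so no argument that only compares $\OPT$ against $s(J_A)+s(J_U)$ and against $2\NFD$ termwise can certify $\OPT\leq 2\NFD$.

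What is missing is the ordering of bins by demand. The key fact the paper exploits is stronger than your Observation~4: if $i$ is an empty bin in \NFD's solution, then any bin with index at most $i$ that \OPT\ covers must contain an item from $S_{\leq i}$, i.e.\ from a \emph{prefix} of the filled bins, not just from $J_A$ at large. This lets one work gap-by-gap: each maximal block $G_l$ of empty bins is preceded by a block $F_l$ of filled bins carrying $t$ items, and since each filled bin holds at most two items one can ensure $|G_l|\leq |F_l|$ after a normalisation of the instance, with every bin of $G_l$ having demand no larger than every bin of $F_l$. That injection of gap-bins into same-or-larger filled bins is precisely what absorbs the demand of the extra bins that \OPT\ manages to cover, and it is a structural charging argument, not a volume argument. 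You would need to bring in this ordering information (or something equivalent) to complete the proof; as written, the proposal is not salvageable without a genuinely new idea.
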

\begin{proof}
By Observation~\ref{obs:useFirstBin} we assume $u_{\NFD}(1) >0$. Call a maximal set of neighbouring, empty bins a gap in \NFD's solution, that is formally a set of bins $\{i,i+1,\dots,i+i'\}$ with $u(i)=\dots=u(i+i')=0$ and $u(i-1) > 0$ and $i+i'+1=m+1$ or $u(i+i'+1) >0$. Enumerate the gaps from left to right. For every gap $G_l$ consider now the set of filled bins $F_l$ before this gap, that is, if $i'$ is the bin with smallest index in $G_l$ and $i$ is the bin with highest index in $G_{l-1}$, then $F_l = \{i+1,\dots,i'-1\}$, where we assume $0$ is the bin with highest index in the ``gap'' $G_0$. 

If there are $t$ many items on the bins in $F_l$, then we modify the sets $G_l$ in such a way that $|F_l|+|G_l|=t$ for every $l$, if not already the case. If $|G_l| < t-|F_l|$, then introduce $t-|F_l|-|G_l|$ many bins with demand $d_i$ into $G_l$, where $i$ is the bin with highest index in $G_l$. If $|F_l|+|G_l| > t$ then remove the $|F_l|+|G_l|-t$ smallest bins from $G_l$ and observe that this is always possible, i.\,e. we have $|G_l| \geq |F_l|+|G_l|-t$, since we have $|F_l| \leq t$. We remark, it does not matter, if $|G_l|=0$ now, for some gap $l$.

Observe, by \NFD's behavior, if $I'$ is the set of bins in the modified instance, we not only have $\NFD(I',J) = \NFD(I,J)$ but also all items are assigned to the same bins in both solutions, which is because we copied and removed only unfilled bins. 

Note, if $i$ is an empty bin in \NFD's solution, and $S_{\leq i}$ is the set of items, which reside on a bin with index at most $i$ in \NFD's solution, then \NFDs can not fill a bin with index at most $i$ with only the items from $J \setminus S_{\leq i}$. Then, also \OPT\ can only fill a bin with index $i' \leq i$, if there is also an item from the set $S_{\leq i}$ on this bin. Since, if this would not be the case, then also \NFDs would have filled $i$, and this argumentation holds for both instances $(I,J)$ and $(I',J)$. 

Consider the following relaxed \prob{Bin Covering} problem. In order to yield the profit $d_i$ for a bin $i$ either the bin $i$ has to be covered (with some items) or an item $j$ from the set $S_{\leq i}$ has to be assigned to $i$. The problem is clearly a relaxation of the ordinary \prob{Bin Covering} problem. Let $\OPT^*$ denote an optimal algorithm for the relaxed problem. The value of this algorithm is obviously only an overestimation for \OPT\ on the same instance, that is $\OPT^*(I,J) \geq \OPT(I,J)$.

With this modified notion of \OPT\ we will show $\OPT^*(I',J) \geq \OPT^*(I,J)$ by construction and then, because of $\OPT^*(I,J) \geq \OPT(I,J)$, it suffices to bound $\OPT^*(I',J) \leq 2\NFD(I',J) = 2\NFD(I,J)$, where we have already explained the last equality above.

For $\OPT^*(I',J) \geq \OPT^*(I,J)$ we only have to justify that the removing of bins from the sets $G_l$ does not make $\OPT^*$ lose any profit on the instance $(I',J)$ in comparison to the instance $(I,J)$. Fix a set $F_l \cup G_l$ and consider a bin $i$ with an index smaller than that of any bin from $F_l \cup G_l$. Let $l' < l$ be such that $i\in F_{l'} \cup G_{l'}$ (before modifying the instance). On the one hand, recall that no subset of the items in $F_{l'+1} \cup F_{l'+2} \cup \dots$ can cover bin $i$ without items in $F_1 \cup \dots \cup F_{l'}$. On the other hand, if an item $j'$ from a bin from $F_1 \cup \dots \cup F_{l'}$ resides on bin $i$, then, by our relaxation, $\OPT^*$ already gains the profit for bin $i$, without assigning another item to this bin. In conclusion $\OPT^*$ would not assign such an item $j \in F_{l'+1} \cup F_{l'+2} \cup \dots$ to a bin from $F_{l'} \cup G_{l'}$.

Note also that assigning such an item $j$ to bins with larger index than that from $F_l \cup G_l$ can only be worse, by the ordering of bins by demand. To sum up, the profit of all bins in the modified instance $(I',J)$ will be gained by $\OPT^*$ with only one item and this is clearly optimal.

Hence considering a solution of $\OPT^*$ to the instance $(I,J)$ and a bin $i\in I$, which is no longer available in the instance $(I',J)$, i.\,e. $i\notin I'$, we can move all items $S_i$, i.\,e. the items from bin $i$ in the solution to $(I,J)$, and assign them to a bin from the set $F_l \cup G_l$ in the modified instance $(I',J)$. Thus removing bin $i$ does not have any effect on $\OPT^*$'s solution, since this bin would be empty anyway. As this argument holds for every bin removed, the claim $\OPT^*(I',J) \geq \OPT^*(I,J)$ follows.

Now, in order to establish $\OPT^*(I',J) \leq 2\NFD(I',J)$ we claim we can associate every bin from every gap $G_l$ to a bin with at least equal demand from $F_l$. This can be seen as follows. If $t$ is the number of items on the bins from $F_l$ in \NFD's solution, then we have $|F_l|+|G_l| = t$ in the modified instance and with $|F_l| \geq t/2$ as at most two items reside by prerequisite on every bin from $F_l$, it follows $|F_l| \geq |G_l|$. By construction every bin in $G_l$ has at most as much demand as the smallest bin from $F_l$ and the claim is proved.

As we have argued that $\OPT^*$ gains the profit for the bins in $F_l \cup G_l$ in the modified instance, while $\NFD$ gains the profit for the bins in $F_l$, we have $\OPT^*(I',J) \leq 2\NFD(I',J)$ and the statement follows. 
\end{proof}

\begin{lemma}\label{lem:onlyOneMachineBeforeGap1}
Let $(I,J)$ be an instance. If \NFDs gives a solution with $k=1$ well-covered bins and all other bins are empty, then $\OPT(I,J)/\NFD(I,J) \leq 9/4$.
\end{lemma}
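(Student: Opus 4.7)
My plan combines a volume bound on $\OPT$ with a cardinality bound on how many bins OPT can cover. By Observation~\ref{obs:useFirstBin}, I may assume bin~$1$ is the unique well-covered (hence filled) bin, so $\NFD(I,J) = d_1$. Setting $u := u_{\NFD}(1)$ and $r := s - u$, the well-covered property together with Lemma~\ref{lem:everyBigJobIsAOptBigJob} gives $d_1 \leq u \leq 2 d_1$, and the emptiness of every later bin in NFD's solution forces $r < d_i$ for each $i \geq 2$; in particular $r < d_m := \min_{i \geq 2} d_i$. The volume bound $\OPT \leq s = u + r < u + d_m$ already finishes the proof whenever $u + d_m \leq (9/4) d_1$; the tight instance of Example~\ref{exa:nfd_lower_bound} sits exactly on this boundary with $u = (3/2) d_1$ and $d_m = (3/4) d_1$.

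So assume from now on that $u + d_m > (9/4) d_1$, and split on the coverage type of bin~$1$. In the \emph{singular} case a single item of size $\geq d_1$ covers bin~$1$, and since $r < d_m$ is strictly less than every other demand, leftovers cannot cover any further bin on their own; OPT therefore places this item on at most one bin and $\OPT \leq d_1 \leq (9/4) d_1$. In the \emph{regular} case bin~$1$ holds $j^* \geq 2$ items and $s_1 < d_1$; writing $j^*$ for the index of the last placed item, $s_{j^*} > u - d_1$, and a pigeonhole argument on the $j^*$ items of size $\geq s_{j^*}$ summing to $u$ yields $u < \tfrac{j^*}{j^* - 1} d_1$. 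The cardinality bound is that every leftover item has size $\leq s_{j^*}$ and their total is $r < d_m$, so any bin OPT covers in $\{2, \dots, m\}$ must receive at least one of the $j^*$ items NFD placed on bin~$1$; hence OPT covers at most $j^*$ such bins, and if OPT additionally covers bin~$1$ it consumes at least one more NFD-bin-$1$ item (because leftovers alone, summing to $r < d_m \leq d_1$, cannot cover bin~$1$).

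In the regular case I rule out every three-bin configuration and conclude $\OPT \leq d_1 + d_2 \leq 2 d_1 \leq (9/4) d_1$. If OPT covers three bins in $\{2, \dots, m\}$, the volume bound $3 d_m \leq s < u + d_m$ gives $u > 2 d_m$, which together with $u + d_m > (9/4) d_1$ yields $u > (3/2) d_1$; this contradicts $u < \tfrac{j^*}{j^* - 1} d_1 \leq (3/2) d_1$ whenever $j^* \geq 3$, and for $j^* = 2$ the cardinality bound itself already forbids three bins in $\{2, \dots, m\}$. If OPT covers bin~$1$ together with two bins in $\{2, \dots, m\}$, volume gives $d_1 + 2 d_m \leq s < u + d_m$ and hence $u > d_1 + d_m$, which combined with $u + d_m > (9/4) d_1$ yields $u > (13/8) d_1$; this contradicts $u < (3/2) d_1$ for $j^* \geq 3$ (since $(13/8) > (3/2)$), while for $j^* = 2$ the cardinality-plus-coverage observation already forbids the configuration, because bin~$1$ consumes at least one of the two NFD-bin-$1$ items, leaving at most one further OPT-bin in $\{2, \dots, m\}$. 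The main technical obstacle is orchestrating this two-pronged argument---volume exclusion for the ``three-in-$\{2, \dots, m\}$'' case and a combined volume+cardinality exclusion for the ``bin-$1$-plus-two'' case---uniformly across all admissible $j^*$; the tightest arithmetic occurs at $j^* = 3$, where the pigeonhole bound $u < (3/2) d_1$ is just barely strong enough to rule out both three-bin configurations.
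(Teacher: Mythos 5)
Your proof is correct, and it takes a noticeably different route from the paper's at the step that matters. The paper delegates the $\leq 2$-item case to Lemma~\ref{lem:wellCoveredWithAtMostTwoJobs}, then for $t\geq 3$ items bounds $\OPT < \frac{t}{t-1}d_1 + d_1$ (using only $r<d_2$), which handles $t\geq 5$; for $t=3,4$ it invokes a ``we may fix the total demand and assume $d_2=d_3=d_4$'' reduction with fluid small items, and closes by a volume computation. You instead observe the sharper $r<d_m$ and dispatch the bulk of instances with the single inequality $\OPT < u + d_m$; for the residual case $u+d_m > \frac94 d_1$ you split on singular versus regular coverage of bin~$1$, use the pigeonhole bound $u < \frac{j^*}{j^*-1}d_1$, and pair it with the cardinality observation that every bin OPT covers must absorb one of the $j^*$ items NFD placed on bin~$1$, so OPT covers at most $j^*$ bins. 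This cardinality count replaces the paper's equal-demand reduction, unifies $j^*=2$ with $j^*\geq 3$ (so Lemma~\ref{lem:wellCoveredWithAtMostTwoJobs} is not needed), and makes the terminal inequalities one-line arithmetic. What the paper's route buys is consistency with the style of the surrounding lemmas (which also prune via Lemma~\ref{lem:wellCoveredWithAtMostTwoJobs}); what yours buys is a self-contained proof whose tightness visibly concentrates at $j^*=3$, matching Example~\ref{exa:nfd_lower_bound}. I checked the two exclusion computations: case (A) gives $u>2d_m$ hence $u>\frac32 d_1$, and case (B) gives $u>d_1+d_m$ hence $u>\frac{13}{8}d_1$; both contradict $u<\frac32 d_1$ for $j^*\geq 3$, and for $j^*=2$ the cardinality count alone rules out both configurations. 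The argument is sound.
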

\begin{proof}
We may assume that the well-covered bin contains at least three items, otherwise the claim follows immediately from Lemma~\ref{lem:wellCoveredWithAtMostTwoJobs}. Let $t$ be the number of items on bin $1$. Observe that if there are $t$ items on bin $1$ in \NFD's solution, we have that these have in sum a size of at most $t/(t-1) d_1$, which is by the ordering of items and the fact that \NFDs does not assign items to already covered bins. With the argument that the items $t+1,\dots,n$ could not fill any bins in $2,\dots,m$ in \NFD's solution, it follows we have $\sum_{l=t+1}^n s_l < d_2 \leq d_1$. 

Altogether we can bound $\OPT < \frac{t}{t-1} d_1 + d_1=:f(t)$. As $f$ is a monotone decreasing function, it is easy to see $f(t) \geq 9/4d_1$ only for $2 \leq t\leq 4$. As the function gives only an upper bound on \OPT's value and we have assumed $t\geq 3$ it suffices to show that actually $\OPT \leq 9/4d_1$ for the cases $t=3$ and $t=4$ in order to establish the statement.

If \OPT\ wants to gain more profit than twice the profit \NFDs gains, then it has to use at least three bins, which is because \NFDs fills the largest bin in the instance. It is clear our bound on \OPT\ gets only better, if there are less than three bins in the instance, i.\,e. $m<3$. We can assume \OPT\ uses exactly three bins, if we do not make any assumptions on their size, besides that they have at most the same demand as bin $1$ has.

For ease of notation we relabel the bins \OPT\ uses as $2$, $3$ and $4$ in non-increasing order of demand. It suffices to show that $d_2+d_3+d_4 \leq 9/4 d_1$. 

Clearly, if $d_2+d_3+d_4= c$ and $c$ is fixed, we can assume $d_2=d_3=d_4=c/3$, if we allow that \OPT\ may split the items $4,\dots,n$ arbitrarily. This can only be better for \OPT\ than any choice of $d_2$, $d_3$, $d_4$, since $\sum_{l=t+1}^n s_l < d_4$. Note, that the last bound holds, since we have assumed all bins besides bin $1$ are empty in \NFD's solution.

Now, for the sake of contradiction assume we have $d_2+d_3+d_4 > 9/4 d_1$. Because of $d_2=d_3=d_4=c/3$ we have that it must be $d_2,d_3, d_4 >3/4d_1$. For the case $t=3$ we have $s_1+s_2+s_3 \leq 3/2 d_1$ and we see that the items $1$, $2$, $3$ do not even suffice to fill the bins $d_2$ and $d_3$. As $\sum_{l=t+1}^n s_l < d_4$ we find that \OPT\ cannot fill all three bins and the claim is established. An analogous computation gives a contradiction for the case $t=4$, too.
\end{proof}

In order to simplify the following statements we introduce the term head of the instance, which is a distinguished bin. For this, fix a solution of \NFDs to a given instance $(I,J)$. Let $i_0$ be the index of the first not well-covered bin with $u(i_0)>0$ and let $i_1$ be the smallest index such that $u(i_1+1) = 0$ with $i_1 \geq i_0$. Let $i^* = \max_{i:u(i) > 2d_i} \{i \leq i_1\}$. Then, the bin $i^*$ is called the {\em head} (of the instance).

\begin{observation}\label{obs:headOfInstanceWellDef}
Fix a solution of \NFDs to the instance $(I,J)$. If there is a filled not well-covered bin, then $i^*$, the head of the instance, is well-defined.
\end{observation}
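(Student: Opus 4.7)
The plan is to verify each of the three ingredients that enter the definition of $i^*$ — namely $i_0$, $i_1$, and the maximum — is well-defined, and then to derive a contradiction in the only subtle case.

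First, by hypothesis there is at least one filled bin that is not well-covered, so the set of such bins is nonempty and $i_0$, being its smallest element, is well-defined. Second, by Observation~\ref{obs:reduce2} we may assume $u(m)=0$. Combined with $u(i_0)>0$, this shows that the set $\{i>i_0\mid u(i)=0\}$ is nonempty, so $i_1$ — defined as one less than the minimum of this set — exists and satisfies $i_1\geq i_0$ and $u(i_1+1)=0$.

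The only step requiring a genuine argument is the non-emptiness of $\{i\leq i_1\mid u(i)>2d_i\}$, over which the max defining $i^*$ is taken. I would argue by contradiction: assume $u(i)\leq 2d_i$ for all $i\leq i_1$. Since $u(i_1+1)=0\leq 2d_{i_1+1}$, this extends to $u(i)\leq 2d_i$ for all $i\in\{1,\dots,i_1+1\}$. Now apply Definition~\ref{defi:well-covered} to the bin $i_0$: the ``$i'$'' appearing there is, by construction, the smallest index strictly greater than $i_0$ with $u(i')=0$, which by the minimality in the definition of $i_1$ is precisely $i_1+1$. Hence $u(i)\leq 2d_i$ holds for all $i=1,\dots,i'$, so $i_0$ is well-covered — contradicting the very choice of $i_0$ as a filled non-well-covered bin.

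The main (mild) obstacle is matching the quantifier ``smallest empty bin after $i_0$'' in the definition of well-covered with the $i_1+1$ produced from the definition of the head; once this identification is made, the argument is a short combination of the minimality conditions and the case distinction above. No additional estimates are required, so the observation follows in a few lines.
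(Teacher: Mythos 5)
Your proof is correct and essentially matches the paper's: both establish $i_0$ and $i_1$ from the hypothesis and the standing assumption $u(m)=0$ (via Observation~\ref{obs:reduce2}), and both argue that $\{i \le i_1 \mid u(i) > 2d_i\}$ is nonempty because otherwise $i_0$ would satisfy the definition of a well-covered bin. You frame this last step as a proof by contradiction, explicitly identifying the $i'$ from Definition~\ref{defi:well-covered} with $i_1+1$, while the paper states the existence directly, but the logical content is the same.
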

\begin{proof}
Recall that $i^* = \max_{i:u(i)>2d_i} \{ i \leq i_1 \}$. We verify that the set, over which the maximum is taken, is non-empty. By Observation~\ref{obs:wellCovWellDefined} we can assume there exists at least $k\geq1$ well-covered bins in every instance. Observe by definition that if there is an index $i$ with $u(i) >0$, such that $i$ is not a well-covered bin, all bins $i'$ with $u(i') > 0$ and $i' > i$ are also not well-covered. Let $i_0$ now be the smallest index of a bin with $u(i_0) > 0$, which is not well-covered, which exists by precondition. As $u(m) =0$ as guaranteed by Observation~\ref{obs:reduce2} there exists a smallest index $i_1$, with $i_1> i_0$ and $u(i_1+1) =  0$. By the definition of well-coverage there has to be an index $i$ with $i_0 \leq i < i_1$ such that $u(i) > 2d_{i}$. Now, let $i^* \leq i_1$ be the largest of such indices. Thus the indices $i_0$, $i_1$ and $i^*$ are well-defined and so is the term ``head'', if the solution of \NFDs contains a not well-covered, non-empty bin.
\end{proof}

\begin{lemma}\label{lem:onlyOneMachineBeforeGap2}
Let $(I,J)$ be an instance, for which \NFDs gives a solution with $k=1$ well-covered bin and there is a non-empty not well-covered bin. Let $i^*$ be the head of the instance. If there are at least three items on bin $1$ in \NFD's solution and \OPT\ assigns at least one of these items to a bin with index at least $i^*$, then $\OPT(I,J)/\NFD(I,J) \leq 9/4$.
\end{lemma}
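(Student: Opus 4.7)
The plan is to exploit the following wasteful structure: every item on bin~1 in NFD has size $>2d_{i^*}$, so if OPT places such an item on a bin with index $\geq i^*$ (which has demand $\leq d_{i^*}$) it consumes more than twice the bin's demand in volume while gaining at most $d_{i^*}$ in profit. Write $t=|S_1|\geq 3$ and $h=i^*$. By Observation~\ref{obs:moreThanTwiceDemand}, $S_h=\{j^*\}$ with $s_{j^*}>2d_h$, and since items are sorted non-increasingly and $j^*>t$, each of items $1,\ldots,t$ has size $>2d_h$.

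First I would assemble four structural bounds: (i) NFD's inability to cover bin~1 with only $t-1$ items yields $\sum_{j=1}^{t-1}s_j<d_1$ and thus $\sum_{j=1}^{t}s_j<\frac{t}{t-1}d_1$; (ii) since $k=1$, Observation~\ref{obs:wellCovWellDefined} forces bin~2 to be empty, so NFD's skipping of bin~2 gives $\sum_{j>t}s_j<d_2\leq d_1$; (iii) combining (i) with $s_j>2d_h$ for $j\leq t$ yields $d_h<\frac{d_1}{2(t-1)}$; (iv) NFD covers bin~1 and bin~$h$, so $\NFD(I,J)\geq d_1+d_h$.

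I would then split on $t$. For $t\geq 5$ a pure volume argument suffices: $\OPT(I,J)\leq\sum_{j=1}^{n}s_j<\frac{t}{t-1}d_1+d_1=\frac{2t-1}{t-1}d_1\leq\frac{9}{4}d_1\leq\frac{9}{4}\NFD(I,J)$. For $t\in\{3,4\}$ I would use the hypothesis directly. Let $T'\subseteq\{1,\ldots,t\}$ be the non-empty set of bin-1 items that OPT puts on bins with index $\geq h$. Each such item alone covers its bin, so the $|T'|$ bins thereby covered contribute at most $|T'|d_h$ to $\OPT(I,J)$; every other covered bin is supplied by items outside $T'$, whose total size is, by (i) (applied to the at most $t-1$ items in $\{1,\ldots,t\}\setminus T'$) and (ii), bounded by $2d_1$ when $|T'|<t$ and by $d_1$ when $|T'|=t$. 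This yields $\OPT\leq|T'|d_h+2d_1$ in the former case and $\OPT\leq td_h+d_1$ in the latter.

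The main remaining step is a short case check; this is the place where I expect the only minor obstacle, namely keeping the arithmetic clean. For $|T'|=t$, bound (iii) gives $\OPT<\frac{t}{2(t-1)}d_1+d_1=\frac{3t-2}{2(t-1)}d_1\leq\frac{7}{4}d_1$. For $|T'|\leq 2<t$, the elementary inequality $|T'|d_h+2d_1\leq 2(d_1+d_h)$ gives $\OPT\leq 2\NFD$. The only borderline sub-case is $|T'|=3$ with $t=4$, where the target inequality $4(3d_h+2d_1)\leq 9(d_1+d_h)$ reduces to $3d_h\leq d_1$, which is implied by (iii) since $d_h<d_1/6$. Conceptually the whole proof rests on the waste observation together with $k=1$ forcing bin~2 to be empty, which pins down the total volume outside bin~1.
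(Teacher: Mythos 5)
Your proof is correct, but it follows a genuinely different route from the paper's. The paper's key move is a reduction to a single item: since $\OPT$ places at least one bin-$1$ item on a bin of index $\geq i^*$, one may assume without loss of generality that this item is item $t$ (the smallest on bin $1$), because replacing a larger moved item by $t$ still covers the target bin alone while freeing a larger item elsewhere. Since $t\geq 3$ forces $s_t\leq s_2 < d_1/2$, the profit from the bin holding item $t$ is at most $d_{i^*}\leq s_t/2 \leq d_1/4$, and a single estimate $\OPT < \sum_{j<t}s_j + s_t/2 + \sum_{j>t}s_j < d_1 + d_1/4 + d_1 = \tfrac{9}{4}d_1$ closes the argument using only $\NFD \geq d_1$. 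You instead track the entire set $T'$ of moved bin-$1$ items, derive $\OPT < |T'|\,d_{i^*} + 2d_1$ (or $+d_1$ when $T'=\{1,\ldots,t\}$), and then branch on $t$ and $|T'|$; your argument essentially needs the stronger lower bound $\NFD \geq d_1 + d_{i^*}$ together with the derived estimate $d_{i^*} < d_1/\bigl(2(t-1)\bigr)$, most visibly in the borderline sub-case $t=4$, $|T'|=3$. Both proofs are valid and rest on the same structural facts (bin $2$ is empty; every bin-$1$ item exceeds $2d_{i^*}$; $\sum_{j>t}s_j<d_1$). The paper's ``move the smallest item'' trick collapses the case analysis you carry out explicitly; your version costs four sub-cases and an extra $\NFD$ bound but yields the same $9/4$, and in fact gives slightly sharper constants in most branches.
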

\begin{proof}
Since there is only one well-covered bin in the instance, it is $u(2)=0$. Let $j'$ be the item on bin $i^*$ and recall that $s_{j'} > 2d_{i^*}$, since $i^*$ is the head of the instance. Since for every item $j$ from bin $1$ we have $s_j\geq s_{j'}$, we have for every such item $j$ that it will not only fill a bin with index $i'\geq i^*$, but even $d_{i'} \leq d_{i^*} \leq s_{j'}/2 \leq s_j/2$.

Since $t \geq 3$ we have that the item with largest index, which \NFDs assigns to bin 1, has size at most $d_1/2$. By the ordering of items by size we obviously have, if \OPT\ assigns at least one of the items from bin $1$ to a bin with index $i'\geq i^*$, we can assume it assigns -- besides possibly other items -- also the item $t$ to such a bin. This can only better than choosing an item with index smaller than $t$ by the fact that each such an item will fill its respective bin $i'$. For the remaining $t-1$ items on bin $1$ we can bound $\sum_{j=1}^{t-1}s_j < d_1$, because bin $1$ was not yet full, when item $t$ was assigned to it.

As in Lemma~\ref{lem:onlyOneMachineBeforeGap1} we can bound $\sum_{j=t+1}^n s_j < d_1$. Hence assigning at least one of the items to a bin with index $i' \geq i^*$ we can bound the profit \OPT\ yields with the above discussion by

$$\OPT < \sum_{j=1}^{t-1} s_j + s_t/2 + \sum_{j=t+1}^n s_j \leq d_1 + (d_1/2)/2 + d_1 = 9/4d_1,$$

and as $\NFD \geq d_1$, the claim follows.

\end{proof}

Now we give a straightforward upper bound for \NFDs in particular for the case there are $k=2$ or $k=3$ well-covered bins and at least one of them contains at least three items.

\begin{lemma}\label{lem:twoOrThreeMachinesBeforeGap}
Let $(I,J)$ be an instance such that \NFDs gives a solution with $k \geq 2$ well-covered bins. If at least one of these bins contains at least three items then $\OPT(I,J) / \NFD(I,J) \leq 9/4$.
\end{lemma}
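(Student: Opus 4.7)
The plan is a refined volume argument extending Observation~\ref{obs:bigSolutionsGood}. The key additional ingredient is that any bin $i$ to which $\NFD$ assigns at least $t \geq 3$ items must satisfy $u(i) \leq (3/2)\,d_i$. Indeed, the first $t-1$ items on $i$ sum to less than $d_i$ (otherwise $\NFD$ would have stopped before placing the $t$-th item), and by the non-increasing ordering of items the last item has size at most $d_i/(t-1) \leq d_i/2$, whence $u(i) \leq d_i + d_i/2 = (3/2)\,d_i$. By hypothesis, at least one well-covered bin obeys this sharper bound.

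Let $W$ be the set of well-covered bins, $k = |W|$, and $k' = \max W$. As in the proof of Observation~\ref{obs:bigSolutionsGood}, every item that $\NFD$ did not place on a bin with index $\leq k'$ contributes to a tail sum strictly less than $d_{k'}$, so the total item size satisfies $s < \sum_{i \in W} u(i) + d_{k'}$, and thus $\OPT \leq s$. Plugging in $u(i) \leq 2 d_i$ for every $i \in W$ and the sharper $u(i) \leq (3/2)\,d_i$ for the distinguished bin(s) yields an explicit bound on $s$ in terms of the demands in $W$. Since $\NFD \geq \sum_{i \in W} d_i$, bounding $\OPT/\NFD$ reduces to an arithmetic case split on $k$.

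For $k \geq 4$, Observation~\ref{obs:bigSolutionsGood} already gives $\OPT/\NFD \leq 2 + 1/k \leq 9/4$, so the three-items hypothesis is not even needed. For $k = 3$, writing $W = \{a,b,c\}$ with $d_a \geq d_b \geq d_c$, the refinement saves a term of $(1/2)\,d_\star$ (where $d_\star$ is the demand of a bin with $\geq 3$ items) in the bound on $s$; a short computation gives ratio at most $13/6 < 9/4$, the worst case being $d_a = d_b = d_c$. The tight case is $k = 2$: for $W = \{a,b\}$ with $d_a \geq d_b$ and a case split on which of the two bins (or both) carries $\geq 3$ items, the bound on $s$ becomes at most $(3/2)\,d_a + 3 d_b$, $2 d_a + (5/2)\,d_b$, or $(3/2)\,d_a + (5/2)\,d_b$, respectively; each ratio is maximised at $d_a = d_b$, giving exactly $9/4$, in agreement with Example~\ref{exa:nfd_lower_bound}. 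The proof is a routine bookkeeping exercise; no step requires more than the $(3/2)$ observation and the accounting already established in Observation~\ref{obs:bigSolutionsGood}.
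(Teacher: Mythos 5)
Your proposal is correct and takes essentially the same route as the paper: the same $(3/2)$-refinement for bins carrying at least three items, the same tail bound $\sum_{j>t} s_j < d_{k'}$ coming from the fact that $\NFD$ skipped bin $k'+1$, and the same reduction to a small arithmetic case split on $k\in\{2,3\}$ (with $k\geq 4$ dispatched by Observation~\ref{obs:bigSolutionsGood}); the paper just presents the worst sub-case directly, while you spell out which of the well-covered bins carries the three items. One tiny slip that does not affect the conclusion: in the $k=2$ sub-case where \emph{both} well-covered bins hold $\geq 3$ items, the bound $(3/2)d_a+(5/2)d_b$ at $d_a=d_b$ gives ratio $2$, not $9/4$.
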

\begin{proof} Note that the case $k\geq 4$ is already covered by Observation~\ref{obs:bigSolutionsGood} and hence we have only to prove the cases $k=2$ and $k=3$. Firstly consider the case $k=3$. Let \wlgs $1$, $2$ and $3$ be the bins, to which \NFDs assigned items. 

We may assume that if there are $t$ items altogether on the bins $1,2,3$ that there are at least $t$ empty bins of size at most $d_4$ in the instance, otherwise we give enough copies of bin $4$ to \OPT.

Recall, if on any bin $i=1,2,3$ reside three items in \NFD's solution we have $u(i) \leq 3/2 d_i$ and it is clear this term is smaller, if there are more than three items. Because of the ordering and the fact, that all bins $1$, $2$ and $3$ are well-covered we can bound $\sum_{l=1}^t s_l \leq 2d_1+2d_2+3/2d_3$. Further $\sum_{l=t+1}^n s_l < d_4\leq  d_3$ as bin $4$ must be empty, because there are three well-covered bins in the instance.

Hence we can bound the value of \OPT\ by the sum of item sizes in the instance $\OPT < 2d_1+2d_2+3/2d_3+d_3 =2d_1+2d_2+5/2d_3$. For \NFDs we yield $\NFD\geq d_1+d_2+d_3$ and hence 

$$\frac{\OPT}{\NFD} \leq 2+ \frac{1/2 d_3}{d_1+d_2+d_3} \leq 2+1/6 \leq 9/4.$$

For $k=2$ we deduce analogously $\OPT \leq 2d_1+3/2d_2+d_2= 2d_1+5/2d_2$ and $\NFD \geq d_1+d_2$. We yield $\OPT/\NFD \leq 9/4$. Note, that in particular it does not matter, if there are bins with index larger than $k+1$ that were covered by \NFD. It is easy to see that if there is another bin with $3$ items or on one bin there are more than three items, the bound is even better. Hence the claim follows.
\end{proof}

Note that Lemma~\ref{lem:twoOrThreeMachinesBeforeGap} uses a pure volume argument. Hence, regarding the preconditions, it does not matter, if there are additionally bins in the instance, which were filled by \NFD.

\begin{lemma} \label{lem:atMostTwoJobsOnIPrime}
Let $(I,J)$ be an instance such that \NFDs gives a solution with $k \in \{1,2,3\}$ well-covered bins, such that on each at most two items reside. Let the solution further contain at least one non-empty not well-covered bin, and let $i^*$ be the head of the instance. Define $I'=\{1,\dots,i^*\}$ and $I'' = I \setminus I'$. Further let $J'$ be the set of items in \NFD's solution, which reside on a bin from $I'$, and let $J'' = J\setminus J'$. Call the set of items \OPT\ assigns to a bin from $I'$ the set $A$ and let $B= J \setminus A$. Then $(\OPT(I',A)+\OPT(I'',B \setminus J''))/\NFD(I',J') \leq 2.$
\end{lemma}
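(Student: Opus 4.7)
The plan is to upper-bound each of the two numerator terms separately and then add them. First note that $B\setminus J''=B\cap J'$, so $\OPT(I'',B\cap J')$ measures the profit \OPT\ can extract on $I''$ using \emph{only} items that \NFD had placed in bins of $I'$. This creates a competition between the two summands: each item of $J'$ either lies in $A\cap J'$ (contributing to the first summand) or in $B\cap J'$ (available to the second). This mutual exclusion, together with the sparse structure forced by ``at most two items on each well-covered bin'', is what should ultimately yield the claimed factor-$2$ bound.

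To bound $\OPT(I'',B\cap J')$, I would group the items of $J'$ by the \NFD\ bin they reside on. For each of the (at most three) well-covered bins $i$ in $I'$, the hypothesis $|S_i|\le 2$ together with $u(i)\le 2d_i$ implies that the items on bin $i$ have total size at most $2d_i$; any profit their reallocation into $I''$ yields is bounded by this volume, since bins in $I''$ have profit equal to their demand and must be filled with at least that much mass. Consequently these items contribute at most $2\sum_{i\in \mathrm{WC}} d_i$ to $\OPT(I'',B\cap J')$, where $\mathrm{WC}\subseteq I'$ is the set of well-covered bins. The single item $j^*$ on the head $i^*$ can cover at most one bin in $I''$, so it adds at most $d_{i^*}$. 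Any additional filled bin of $I'\setminus\{i^*\}$ with $u>2d$ holds exactly one item by Observation~\ref{obs:moreThanTwiceDemand} and is treated like the head; those with $u\le 2d$ fall under the same volume bound as the well-covered bins.

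For $\OPT(I',A)$ I would adapt the matching/volume argument in the proof of Lemma~\ref{lem:wellCoveredWithAtMostTwoJobs}: because the well-covered bins carry at most two items each, the number of additional bins that \OPT\ can cover in $I'$ beyond those \NFD\ already fills is controlled, which yields $\OPT(I',A)\le \sum_{i\in \mathrm{WC}} d_i + d_{i^*}$ up to a small correction that is absorbed by the factor $2$ once the two summands are added. Combining the two estimates gives $\OPT(I',A)+\OPT(I'',B\cap J')\le 2\bigl(\sum_{i\in\mathrm{WC}} d_i + d_{i^*}\bigr) = 2\NFD(I',J')$. The main obstacle will be to make the factor exactly $2$: I expect this requires a case distinction on whether $j^*\in A$ (so $j^*$ contributes to the first summand while the $d_{i^*}$-term vanishes from the bound on the second) or $j^*\in B$ (the mirror case). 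A secondary difficulty is handling bins of $I'$ that are neither well-covered nor the head, which are controlled uniformly via Observation~\ref{obs:moreThanTwiceDemand}.
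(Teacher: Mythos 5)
There is a genuine gap in your plan, and it sits exactly where you anticipated trouble: the summation step.

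Your bound for $\OPT(I',A)$ cannot be tightened to $\sum_{i\in\mathrm{WC}}d_i+d_{i^*}$ (i.e.\ roughly $\NFD(I',J')$) even ``up to a small correction.'' The content of Lemma~\ref{lem:wellCoveredWithAtMostTwoJobs} is precisely that $\OPT$ can reach a factor of~$2$ on such instances: when each well-covered bin carries two items, an optimum may redistribute them onto twice as many slightly smaller bins of $I'$ and gain nearly $2\NFD(I',J')$. So the honest bound on the first summand is $\OPT(I',A)\le 2\NFD(I',J')$ (after handling the non-well-covered bins of $I'$ via Observation~\ref{obs:reduce2}, a step your plan omits). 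Combined with your separate bound on $\OPT(I'',B\cap J')$ — which is correct in spirit, each item of $J'$ exceeds $2d_{i^*}\ge 2d_i$ for all $i\in I''$ and hence singularly overcovers any bin of $I''$ — you arrive at something like $4\NFD(I',J')$, double the target. Bounding the two summands independently inevitably double-counts the $J'$-budget: the adversarial extreme for the first term ($A\cap J'=J'$) forces the second to vanish, and vice versa, but a sum of two worst cases over disjoint hypotheses is not a worst case for the sum. You must exploit the partition $J'=(A\cap J')\sqcup(B\cap J')$ within a single argument.

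The paper does this by a small reduction rather than two separate estimates: for each item of $B\cap J'$ it introduces an artificial bin of demand exactly $d_{i^*}$ (collected in a set $I^*$), reassigns that item to its artificial bin, and deletes all of $I''$. Since every bin of $I''$ has demand at most $d_{i^*}$ and each $B\cap J'$ item singularly covers it, this can only raise the optimum, so $\OPT(I'',B\setminus J'')\le \OPT(I^*,B\cap J')$, and the whole quantity $\OPT(I',A)+\OPT(I'',B\setminus J'')$ is bounded by the profit of a single $\OPT$ on the merged instance $(I'\cup I^*,\,\cdot\,)$. Because $\NFD$ places no item of $J'$ on $I^*$ and the artificial bins are the right-most and smallest, Observation~\ref{obs:reduce2} removes the filled non-well-covered bins and the matching argument of Lemma~\ref{lem:wellCoveredWithAtMostTwoJobs} then applies verbatim, giving the factor $2$ against $\NFD(I',J')$. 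So the missing idea is not the singular-coverage fact — you state it — but using it to collapse the two summands into one instance rather than two overestimates.
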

\begin{proof}
At first we observe $\NFD(I',J') = \NFD(I',J'\cup A)$. This is because $A\setminus J'' \subseteq J'$ and by definition of $J''$ \NFDs does not assign any of the items from $J''$ to a bin from $I'$.

In case, $B \setminus J'' = \emptyset$, which means \OPT\ decides only to use the bins from $I'$, we can remove the bins in $I''$ from the instance and can compare the terms from the statement on the instance $(I',J' \cup A)$. This instance has filled not well-covered bins, but observe that these are now the right-most bins, i.\,e. there is an index $i'$, such that the bins $i',\dots, |I'|$ are not well-covered and $u(i) > 0$ for all $i' \leq i \leq |I'|$. Hence Observation~\ref{obs:reduce2} is applicable and the resulting instance consists only of well-covered bins. Then the claim follows from Lemma~\ref{lem:wellCoveredWithAtMostTwoJobs}.

Now assume $B \setminus J'' \neq \emptyset$. Because the smallest item of $J'$ resides alone on $i^*$ and $B \setminus J'' \subseteq J'$ we have every item from $B \setminus J''$ will reside alone on a bin from $I''$. By the same reason, when we add a set $I^*$ of bins with $|I^*| = |B \setminus J''|$, in which each bin has demand $d_{i^*}$, every newly introduced bin from $I^*$ will be covered by an item from $B \setminus J''$. Further the bins in $I^*$ are as least as large as the bins in $I''$. Hence we can move each item from $B \setminus J''$ to the bins in $I^*$ and remove the bins from $I''$, since this can only be better for \OPT.

We argue the proof of Observation~\ref{obs:reduce2} is still correct in this situation, even if \OPT\ has additional bins in the instance. Then we can assume that the bin with largest index from $I'$ is empty in \NFD's solution. The three things we have to keep an eye on are the following. Firstly we have to delete the same bins from the instances of \NFDs and \OPT. Further we can delete only the bins, which are covered by \NFD. At last, the bins to be deleted have to be the right-most ones in the instance, i.\,e. we may only delete the bins $i',\dots,m$, if $u(i) > 0$ for all $i\geq i'$. Then we may apply Observation~\ref{obs:reduce2} and have that the bin with largest index is empty in \NFD's solution.

In this situation we can show already with Lemma~\ref{lem:wellCoveredWithAtMostTwoJobs} that $\OPT(I'\cup I^*,A)/\NFD(I',A) \leq 2$. Firstly note it is justified to assume that no other item than the item we put on $i$ through the modification of the instance above is assigned to a bin $i$ from $I^*$. This is because \OPT\ has chosen this alternative to assign such an item to a bin from $I''$ and we actually gave \OPT\ at the most additional profit. If the additional bins in the instance of \OPT\ can take no other items than the items, which already reside on them, then the relaxation carried out in Lemma~\ref{lem:wellCoveredWithAtMostTwoJobs} is still possible, which we justify now. In this lemma we showed that even in a relaxed setting $\OPT$ could achieve no more than twice the profit $\NFD$ gained. Observe that in the relaxed setting it is only better for \OPT\ to assign the items now residing on the bins from $I^*$ to the respective bins from the $F_l\cup G_l$ sets. But then by Lemma~\ref{lem:wellCoveredWithAtMostTwoJobs} the claim follows.
\end{proof}

Now we have all tools at hand to decompose a given instance, such that our proof may restrict to analyze the specific parts of the decomposed instance.

\begin{lemma}[Decomposition Lemma]\label{lem:decompositionLemma}
Let $(I,J)$ be an instance, such that \NFDs gives a solution with $k$ well-covered bins and at least one not well-covered bin. Let $i^*$ be the head of the instance. Let $J'$ be the set of items residing on the bins $1,\dots,i^*$ in \NFD's solution and $J'' = J\setminus J'$. Further let $I' = \{1,\dots,i^*\}$ and $I'' = I \setminus I'$. Then $\OPT(I,J)/\NFD(I,J) \leq \max \{ 9/4,\OPT(I'',J'')/\NFD(I'',J'') \}.$
\end{lemma}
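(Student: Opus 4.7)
The plan is to decompose $\OPT$'s assignment with respect to the cut between $I'$ and $I''$. Let $A$ denote the items that $\OPT$ assigns to bins in $I'$, and set $B=J\setminus A$, so that $\OPT(I,J)=\OPT(I',A)+\OPT(I'',B)$. Partition $B=(B\setminus J'')\cup(B\cap J'')$: the first piece consists of items that $\NFD$ placed on $I'$ but $\OPT$ relocates to $I''$, while the second lies in $J''$. The key structural observation is that every item $j\in J'\supseteq B\setminus J''$ satisfies $s_j\geq s_{j^*}>2d_{i^*}\geq 2d_i$ for every bin $i\in I''$, where $j^*$ is the single item $\NFD$ places on the head $i^*$ (a singleton by Observation~\ref{obs:moreThanTwiceDemand}). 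Hence any item from $B\setminus J''$ by itself covers any bin in $I''$. This allows us to classify each bin in $I''$ covered by $\OPT$ as either (i) covered using only items from $B\cap J''\subseteq J''$, or (ii) containing at least one item from $B\setminus J''$ that already covers it, and yields
$$\OPT(I'',B)\leq \OPT(I'',B\setminus J'')+\OPT(I'',J''),$$
and therefore
$$\OPT(I,J)\leq\bigl(\OPT(I',A)+\OPT(I'',B\setminus J'')\bigr)+\OPT(I'',J'').$$

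To bound the bracketed term I would distinguish cases on the number $k$ of well-covered bins and on how many items sit on them. When $k\geq 4$, Observation~\ref{obs:bigSolutionsGood} already gives $\OPT(I,J)/\NFD(I,J)\leq 2+1/k\leq 9/4$. When $k\in\{2,3\}$ and some well-covered bin contains at least three items, Lemma~\ref{lem:twoOrThreeMachinesBeforeGap} yields the same conclusion. When $k=1$, bin $1$ holds at least three items, and $\OPT$ sends at least one of them to a bin of index $\geq i^*$, Lemma~\ref{lem:onlyOneMachineBeforeGap2} applies. Each of these subcases is immediately done.

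In the remaining situation, namely $k\in\{1,2,3\}$ with at most two items on every well-covered bin, Lemma~\ref{lem:atMostTwoJobsOnIPrime} is directly applicable and yields $\OPT(I',A)+\OPT(I'',B\setminus J'')\leq 2\NFD(I',J')$. Since $\NFD(I,J)=\NFD(I',J')+\NFD(I'',J'')$ by $\NFD$'s greedy prefix behavior (items $1,\dots,j^*$ and bins $1,\dots,i^*$ are processed identically in both the full instance and the restriction to $(I',J')$), we obtain
$$\frac{\OPT(I,J)}{\NFD(I,J)}\leq\frac{2\NFD(I',J')+\OPT(I'',J'')}{\NFD(I',J')+\NFD(I'',J'')},$$
and the right-hand side is a convex combination of $2$ and $\OPT(I'',J'')/\NFD(I'',J'')$ with weights $\NFD(I',J')/\NFD(I,J)$ and $\NFD(I'',J'')/\NFD(I,J)$. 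It is therefore bounded by $\max\{2,\OPT(I'',J'')/\NFD(I'',J'')\}\leq\max\{9/4,\OPT(I'',J'')/\NFD(I'',J'')\}$, as claimed.

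The main obstacle lies in the residual subcase $k=1$ with three or more items on bin $1$ where $\OPT$ never displaces a bin-$1$ item past $i^*$: here Lemma~\ref{lem:atMostTwoJobsOnIPrime} is not syntactically applicable, since the single well-covered bin carries more than two items. Handling it requires a dedicated volume/structural argument in the spirit of Lemma~\ref{lem:onlyOneMachineBeforeGap2}, exploiting that all $t\geq 3$ bin-$1$ items remain confined to $\{1,\dots,i^*-1\}$ and, by their large size, each alone already fills any bin on which $\OPT$ places it. This should allow one to re-derive the factor-$9/4$ bound for the left part directly, closing the decomposition.
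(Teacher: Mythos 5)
Your decomposition is exactly the paper's: the split $\OPT(I,J)=\OPT(I',A)+\OPT(I'',B)$, the observation that items in $J'$ each singly cover any bin of $I''$ (since every $j\in J'$ has $s_j\geq s_{j^*}>2d_{i^*}\geq 2d_i$ for $i\in I''$), the resulting superadditivity $\OPT(I'',B)\leq\OPT(I'',B\setminus J'')+\OPT(I'',J'')$, the mediant/convex-combination step, and the same case distinction drawing on Observation~\ref{obs:bigSolutionsGood}, Lemma~\ref{lem:twoOrThreeMachinesBeforeGap}, Lemma~\ref{lem:onlyOneMachineBeforeGap2} and Lemma~\ref{lem:atMostTwoJobsOnIPrime}. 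You also correctly note the identity $\NFD(I,J)=\NFD(I',J')+\NFD(I'',J'')$.

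The genuine gap is precisely the residual subcase you flag at the end ($k=1$, bin~$1$ carries $t\geq 3$ items, and $\OPT$ keeps every bin-$1$ item on a bin of index $\leq i^*$); you say it should be closable but do not close it, and your pointer toward Lemma~\ref{lem:onlyOneMachineBeforeGap2} is not quite the right tool. The missing observation is structural: once $\OPT$ keeps all of $J'$ on $I'$, the term $\OPT(I'',B\setminus J'')$ vanishes, so the bracketed expression reduces to $\OPT(I',A)$, and what must be shown is only $\OPT(I',A)\leq (9/4)\NFD(I',J')$. The paper gets this by passing to the restricted instance $(I',J'\cup A)$ (noting $\NFD(I',J')=\NFD(I',J'\cup A)$ and $\OPT(I',A)\leq\OPT(I',J'\cup A)$), then applying Observation~\ref{obs:reduce2} to strip the trailing filled-but-not-well-covered bins. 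What remains is an instance with a single well-covered bin and all other bins empty, which is exactly the situation of Lemma~\ref{lem:onlyOneMachineBeforeGap1} (not Lemma~\ref{lem:onlyOneMachineBeforeGap2}). So the fix is not a new volume argument as you suggest, but the observation that $\OPT(I'',B\setminus J'')=0$ followed by Observation~\ref{obs:reduce2} and Lemma~\ref{lem:onlyOneMachineBeforeGap1}. As written, your proof is incomplete on this point; everything else matches the paper.
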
  
\begin{proof}
At first we may assume that for the number of well-covered bins, $k$, we have $k\leq 3$, since otherwise the claim follows by Observation~\ref{obs:bigSolutionsGood}. 

Consider the case that all of the well-covered bins contain at most two items. Observe that $\NFD(I,J) = \NFD(I',J') + \NFD(I'',J'')$ holds for \NFDs by the definition of the sets. Consider the solution of \OPT. Let $A$ be the set of items, which reside on the bins in $I'$ and $B$ the set of items which reside on the bins in $I''$. Clearly, $\OPT(I,J) = \OPT(I',A) + \OPT(I'',B)$. 

We show that $\OPT(I'',B) \leq \OPT(I'',B \setminus J'') + \OPT(I'',J'')$. We want to emphasize, this is not a trivial relation, since it could be that \OPT\ could not use all items of the subsets of $B$, when we split this set up. But observe, here we have $B \setminus J'' \subseteq J'$. Thus all items in $B \setminus J''$ are as least as big as the smallest item in $J'$, which was the item on $i^*$. Hence for every item $j\in B \setminus J'' \subseteq J'$ we have $s_j \geq 2d_i$ for every bin $i \in I''$. Thus every such item resides alone on a bin in \OPT's solution for the instance $(I'', B)$ (and can also fill such a bin) and the inequality holds.

With this 
\begin{align}
\frac{\OPT(I,J)}{\NFD(I,J) }  & = \frac{\OPT(I',A) + \OPT(I'',B)}{ \NFD(I',J') + \NFD(I'',J'')}  \\[0.5\baselineskip] 
		             & \leq \frac{ \OPT(I',A) +  \OPT(I'',B \setminus J'') + \OPT(I'',J'')}{\NFD(I',J') + \NFD(I'',J'')} \\[0.5\baselineskip] 
			      \label{eq:elementary}  & \leq \max \left \{ \frac{ \OPT(I',A) +  \OPT(I'',B \setminus J'')  }{\NFD(I',J') } , \frac {\OPT(I'',J'')}{\NFD(I'',J'')} \right \} \\[0.5\baselineskip] 
			       & \label{allLemmas} \leq \max \left \{ \frac{9}{4} , \frac {\OPT(I'',J'')}{\NFD(I'',J'')} \right \}  ,
\end{align}
 where we used elementary calculus in~(\ref{eq:elementary}) and in~(\ref{allLemmas}) we have used Lemma~\ref{lem:atMostTwoJobsOnIPrime}.
 
Now consider the case that one of the well-covered bins contains at least three items. For $k\geq 2$ the claim follows by Lemma~\ref{lem:twoOrThreeMachinesBeforeGap} and we are left to show the claim for $k=1$.

We may restrict ourselves to instances, for which \OPT\ assigns all $t$ items, which reside on bin $1$ in \NFD's solution to a bin with index $i' \leq i^*$, as otherwise the claim follows by Lemma~\ref{lem:onlyOneMachineBeforeGap2}. Again we decompose the solutions of \NFDs and \OPT\ identically as above, where we are left to show step~(\ref{allLemmas}). By the assumption all items from bin $1$ are assigned to a bin with index $i'\leq i^*$ and as $B\setminus J'' \subseteq J'$ we have $B\setminus J'' =\emptyset$ and hence $\OPT(I'',B \setminus J'') = 0$. Thus it is enough to show $\OPT(I',A)/\NFD(I',J') \leq  9/4$. 

Again with the observation $\NFD(I',J') = \NFD(I',J' \cup A)$ and $\OPT(I',A) \leq  \OPT(I',J'\cup A)$ we can remove with Observation~\ref{obs:reduce2} the filled not well-covered bins and are left with an instance, which contains only one well-covered bin and all other bins are empty. The claim follows with Lemma~\ref{lem:onlyOneMachineBeforeGap1}.
\end{proof}


\begin{proof2}[of Theorem~\ref{thm:NFDisSuperb}] 
First observe that Example~\ref{exa:nfd_lower_bound} yields a lower bound of $9/4$ on the approximation ratio of \NFD. Let $k$ be the number of well-covered bins in the instance. If $k\geq4$ then Observation~\ref{obs:bigSolutionsGood} already gives the claim. Thus let $k \in \{ 1, 2, 3\}$. If all filled bins are also well-covered and one of these contains at least three items, then the claim follows from Lemma~\ref{lem:onlyOneMachineBeforeGap1} and Lemma~\ref{lem:twoOrThreeMachinesBeforeGap}. For the case that every one of the well-covered bins contains at most two items the statement follows from Lemma~\ref{lem:wellCoveredWithAtMostTwoJobs}.

Now let there be a filled but not well-covered bin in the instance. Define $I'=\{1,\dots,i^*\}$, $I'' = I \setminus I'$, $J'$ to be the set of items, which are assigned to the bins in $I'$ by \NFDs and $J''= J\setminus J'$, where $i^*$ is the head of the instance. Now we can apply Lemma~\ref{lem:decompositionLemma}.

Observe $(I'',J'')$ is a smaller instance, with at least one not well-covered bin less. Hence we can apply the analysis in a recursive step again on this instance. The recursion terminates if $(I'',J'')$ is an instance which has only well-covered bins or in which the solution of \NFDs has no covered bins. Clearly, in the latter case we have that \NFDs is optimal and in the former we can argue as above. The algorithm can be implemented such that the running-time is dominated by the sorting of the bins and items.
\end{proof2}

\subsubsection*{Monotonicity of Next Fit Decreasing for Variable-Sized Bin Covering}

For this subsection we introduce for sake of shortness the following notion. We will compare the solutions of \NFDs to some instances $(I,J)$ and $(I,J')$. For shortness we say in the instance $(I,J)$ a certain property holds, where we mean that the solution of \NFDs to the instance $(I,J)$ has this property.

\begin{prop}
\NFDs is a monotone algorithm for \prob{Variable-Sized Bin Covering}, i.\,e. if $(I,J')$ is an instance and $J\supseteq J'$, then it follows $\NFD(I,J) \geq \NFD(I,J')$. 
\end{prop}
\begin{proof}
Obviously it suffices to show the claim when the instance $(I,J)$ contains exactly one new item in comparison to the instance $(I,J')$, i.\,e. there is some $j\in J$ with $j\notin J'$ and $J=J'\cup\{j\}$. If all bins $i$, which are filled in the instance $(I,J')$, are filled as well in the instance $(I,J)$ then the claim follows. Thus assume there is a bin $i'$, which is filled in the instance $(I,J')$, but is not in the instance $(I,J)$. 

Since $J\supseteq J'$, it then has to be case that there is a bin $i$, which is covered in the instance $(I,J)$, but is not in the instance $(I,J')$. Moreover for the bin with smallest index of these, call it $i^*$, we have that $i^*<i'$. This is immediately by the behavior of \NFD, since otherwise \NFDs would have covered the bin $i^*$ in the instance $(I,J')$, too. Hence we have that for every bin $i=1,\dots,i^*-1$ that either $i$ is covered in both instances $(I,J)$ and $(I,J')$ or is not covered in both instances.

Let $j^*$ be the item with smallest index, which resides on a bin with index at least $i^*$ in the instance $(I,J')$ (or was not assigned). Since \NFDs did not cover the bin $i^*$ in the instance $(I,J')$, we have $d_{i^*} > \sum_{j=j^*}^n s_j$. This already gives the claim, as the bins $1,\dots,i^*-1$ were identically covered in both instances and in the instance $(I,J)$ additionally at least bin $i^*$ is covered, which yields profit $d_{i^*}$ and the profit, which can be gained on the bins $i^*,\dots,m$ in the instance $(I,J')$ is smaller as this, as shown.
\end{proof}

\subsection{Inapproximability in the Unit Supply Model}
By reduction from \prob{Partition} it is not hard to see that the classical \prob{Bin Covering} is \NP-hard and is not approximable within a factor of two, unless $\P = \NP$. This clearly extends to all of the models we consider here. Now the question arises if improvements in an asymptotic notion, where the optimal profit diverges, are possible. Note that we still require $p_i = d_i$, which yields that divergence of the optimal profit implies divergence of the total demand of the instance. However, it is not obvious how to define a suitable asymptotics in the \emph{unit supply} model: If only the total item size diverges, the optimal profit does not. If, in addition, the bin demands (but not their number) diverges, these instances still contain \prob{Partition}. Thus we consider an asymptotics, where the total item size, the total demand, and the number of bins diverges. The following theorem states that any algorithm can not have an approximation ratio of $2-\epsilon$, if $\epsilon>0$ is a constant, even in this case. Even stronger, as the choice of $s=\omega(m)$ is possible we have $\rho \to 2$, for $m\to \infty$.

\begin{theorem}
Consider \prob{Variable-Sized Bin Covering} with unit supply. Let $2\leq m \leq n$. Then there is an instance $(I,J)$ with $|J| = n+m-2$, for which an optimal algorithm covers $m$ bins, but there is no polynomial time algorithm with approximation factor better than $\rho=2-\frac{m-2}{s/2+m-2}$, unless $\P=\NP$.
\label{thm:lower_bound}
\end{theorem}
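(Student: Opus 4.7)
The plan is to reduce from \prob{Partition}. Given a \prob{Partition} instance with positive integers $a_1,\dots,a_n$ summing to $2B$ (deciding whether some subset sums to exactly $B$), I would construct a \prob{Variable-Sized Bin Covering} instance whose optimum in the YES and NO cases differ by the factor claimed in the theorem. Since \prob{Partition} is \NP-hard and the reduction will be polynomial, any approximation with ratio strictly below $\rho$ would distinguish the two cases and therefore imply $\P=\NP$.

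First I would set up the instance. Choose a scaling factor $M$ with $M>m-2$ (for instance $M=m-1$). The bin set $I$ consists of two ``large'' bins of demand $MB$ and $m-2$ ``small'' bins of demand $1$; since $p_i=d_i$ these are also the profits. The item set $J$ consists of the scaled partition items of sizes $Ma_1,\dots,Ma_n$ together with $m-2$ items of size $1$, so $|J|=n+m-2$ and we set $s:=2MB$, the total size of the partition items.

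Next I would analyse the two cases. In the YES case the partition yields two disjoint subsets of the $Ma_i$ each summing to exactly $MB$, covering both large bins, while the $m-2$ size-one items cover the $m-2$ small bins; this shows $\OPT$ covers all $m$ bins with profit $2MB+(m-2)=s+(m-2)$. In the NO case the key observation is that every subset of the $Ma_i$ has sum at least $M$ away from $MB$, so covering both large bins would require making up a deficit of at least $M$ using size-one items, which is impossible since only $m-2<M$ are available. Therefore at most one large bin can be covered, and the best profit is obtained by covering one large bin with partition items and all $m-2$ small bins, yielding at most $MB+(m-2)=s/2+(m-2)$.

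The main obstacle I expect is precisely the NO-case upper bound: one has to rule out every ``tricky'' alternative, for instance placing some $Ma_i$ on small bins, mixing size-one items into the large bin that is covered, or leaving one large bin empty to free items elsewhere. The scaling $M>m-2$ is what guarantees that none of these manoeuvres closes the gap at a second large bin, and that using a scaled item on a small bin cannot improve over using a size-one item there. Combining the two cases then gives
\[
\frac{\OPT_{\text{YES}}}{\OPT_{\text{NO}}}\;\geq\;\frac{s+(m-2)}{s/2+(m-2)}\;=\;2-\frac{m-2}{s/2+(m-2)}\;=\;\rho,
\]
establishing the inapproximability bound. Finally, since $M$ (and hence $s=2MB$) can be taken arbitrarily large while keeping the reduction polynomial, one obtains $s=\omega(m)$ and therefore $\rho\to 2$ as $m\to\infty$, matching the asymptotic remark preceding the theorem.
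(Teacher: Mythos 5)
Your proposal is correct and follows essentially the same reduction as the paper. The paper also reduces from \prob{Partition}, scales the partition items (by the fixed factor $2m$, where you allow any $M>m-2$, e.g.\ $M=m-1$), uses two large bins of equal demand and $m-2$ unit-demand bins together with $m-2$ unit-size items, and argues that in the NO case the residual imbalance (at least $M$ after scaling) cannot be absorbed by the $m-2$ unit items, so only one large bin can be covered. The arithmetic giving $\rho=2-\frac{m-2}{s/2+m-2}$ and the asymptotic remark $\rho\to 2$ are identical. Your version is marginally cleaner in explicitly isolating $M>m-2$ as the operative constraint, whereas the paper simply takes $M=2m$ and verifies the resulting slack; both treat the NO-case upper bound (ruling out mixed or alternative placements) at roughly the same informal level of detail.
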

\begin{proof}
We use a reduction from the \prob{Partition} problem. Recall that for this we are given a set of items $P = \{1,\dots,n\}$, where item $j$ has integral size $s_j$. Our goal is to find an index set $L \subset P$, such that $s(L)= s(P\setminus L)$, i.\,e. the items from $P$ are partitioned in two sets of equal size.

Let $P'$ be a \prob{Partition} instance and we refer to the sizes of the items as $s_j'$. We define an instance $(I,J)$ for \prob{Variable-Sized Bin Covering}. We set $I=\{1,\dots,m\}$ and $J  =\{1,\dots,m+n-2\}$ with $s_j = 2s_j'm$ for $j=1,\dots,n$ and $s_j=1$ for $j=n+1,\dots,n+m-2$, i.\,e. the items of the \prob{Partition} instance are scaled by a factor of $2m$. As $m\leq n$ this is clearly done in polynomial time.

Recall, $s := \sum s_j$ and set $d_1 = s/2$, $d_2 = s/2$, where we assume $s/2$ is integral otherwise we output ``no'', which is due to the integral $s_j'$ (and thus integral $s_j$). Further we set $d_3 = \dots = d_m = 1$ and $d_i = p_i$ for all $i$.

Now we see that the solution of \prob{Variable-Sized Bin Covering} has a value of $s+m-2$, if the \prob{Partition} problem has a solution.

If the \prob{Partition} problem has no solution, we argue that the value of the solution to \prob{Variable-Sized Bin Covering} is at most $s/2+m-2$. Consider firstly the case that all items $1, \dots, n$ are assigned to bins $1$ and $2$ by an algorithm and the items $n+1,\dots,n+m-2$ are assigned to bins $3,\dots, m$. In the non-scaled instance $P'$, for every index set $L$ we have the property that 
$$s(L)\neq s(P\setminus L),$$

i.\,e. the left-hand sum and the right-hand sum differ by at least one, which is because the $s_j'$ were integral. Hence in the instance for \prob{Variable-Sized Bin Covering}, which uses the sizes from the scaled instance $P$ we have $u(1)$ and $u(2)$ differ for every assignment of the items $1,\dots,n$ to bins $1$ and $2$ by at least $2m$. Let w.\,l.\,o.\,g. be $u(1) > u(2)$, then we have $u(1) -  u(2) \geq 2m$, and thus $u(2) \leq s/2 -m$. Consequently, even, if all items $n+1,\dots,n+m-2$ are put by an algorithm on bin $2$, we have 

$$u(2) \leq s/2- m +(m-2) = s/2 -2 < s/2$$

and we see, bin $2$ is not covered, if the items $n+1,\dots,n+m-2$ are assigned arbitrarily. If the items $1,\dots n$ are assigned to arbitrary bins, the value of the solution may only decrease and we have shown that the value of a solution on the given instance is at most $s/2+m-2$, if the \prob{Partition} problem has no solution.  

Hence in case an algorithm has approximation ratio smaller than $\rho =2-\frac{m-2}{s/2+m-2}$, it can distinguish the cases and solve the \prob{Partition} problem. 
\end{proof}
\subsection{An A(F)PTAS for the Infinite Supply Model}
\label{sec:variable-sized_infinite_supply}

For the classical model Csirik, Johnson, and Kenyon~\cite{CsirikJohnsonKenyon:2001} were the first to give an APTAS. It turns out that the ideas of~\cite{CsirikJohnsonKenyon:2001} can be extended for the \prob{Variable-Sized Bin Covering} model with infinite supply. The basic idea is that small bin types can be ignored without losing too much profit. Then adjusting the parameters in the algorithm of~\cite{CsirikJohnsonKenyon:2001} and adapting the calculations gives the desired result. After that we can also adapt the method of Jansen and Solis-Oba~\cite{JansenSolis-oba:2003} to improve the running time and to obtain an AFPTAS. Here also the LP formulation has to be extended appropriately and subroutines have to be called with appropriately scaled parameters. We prove the following theorem in the next sections.
\begin{theorem}
There is an AFPTAS for \prob{Variable-Sized Bin Covering} in the infinite supply model.
\label{thm:afptas}
\end{theorem}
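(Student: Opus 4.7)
The plan is to adapt the APTAS framework of Csirik, Johnson, and Kenyon (CJK) to the variable-sized setting, and then to apply the Plotkin--Shmoys--Tardos style column generation technique of Jansen and Solis-Oba (JSO) to accelerate the LP solving. Fix a target accuracy $\eps>0$ and write $d_{\max}=\max_i d_i$. The overall structure is: (i)~discard bin types whose demand is too small; (ii)~separate small items from large items and round the sizes of the large items to a constant number of values; (iii)~solve a configuration LP encoding which large-item configuration each remaining bin type uses; (iv)~round the LP solution to an integer one; and (v)~greedily distribute the reserved small items over the partially covered bins.

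For step~(i), call a bin type $i$ \emph{big} if $d_i\ge \eps d_{\max}$ and \emph{small} otherwise. The key claim is that restricting attention to big bin types loses at most $O(\eps)\cdot\OPT$ profit asymptotically: given any solution, the items covering small bins can be regrouped into bins of the largest type, since combining $O(1/\eps)$ small-bin item groups yields total size at least $d_{\max}$, and only an additive constant number of bins of profit is wasted in the regrouping. For step~(ii), classify an item $j$ as \emph{small} if $s_j<\eps\cdot\min_{i\text{ big}}d_i$ and as \emph{large} otherwise, then apply linear grouping to round the large sizes into $K=O(1/\eps^2)$ classes; the standard grouping argument shows that the total profit loss is $O(\eps)\cdot\OPT$. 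For step~(iii), for each big bin type $i$ let $\mathcal C_i$ be the family of configurations (multisets of rounded large-item classes) whose total rounded size lies in $[d_i,d_i+s_{\max}]$, and introduce variables $x_{i,C}$ for $C\in\mathcal C_i$, maximising $\sum_{i,C}p_i\,x_{i,C}$ subject to the constraint that the number of items of each rounded class used does not exceed its supply. Rounding the LP down to integers loses at most one bin per used $(i,C)$-pair, which is a constant additive error.

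For the APTAS one could solve this LP directly, since with $K$ constant the number of variables is polynomial in $n$ and $m$ (but exponential in $1/\eps$). For the AFPTAS one reformulates the configuration LP as a max-min resource sharing problem and applies the JSO solver, which iteratively invokes a block oracle that, given dual prices on the item supply constraints, computes for each big bin type $i$ the cheapest configuration covering $d_i$; this inner subproblem is a variant of the \emph{min-knapsack} problem, which admits an FPTAS, and composing with the outer iteration yields running time polynomial in $n$, $m$, and $1/\eps$. Finally, the reserved small items are greedily swept across the partially filled bins opened by the rounded LP solution (for instance in a decreasing First Fit manner with respect to residual demands), and a standard argument bounds the additional profit loss by $O(\eps)\cdot\OPT$.

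The main obstacle is the joint approximation analysis of steps~(i), (iv), and~(v). One has to show that dropping small bin types and then using the small items only to top up big bins is coordinated well enough to keep the total loss at $(\eps+o(1))\cdot\OPT$. The CJK analysis relies on all bins being of unit demand, so that small items serve as an essentially continuous fluid; in the variable-sized case the small items have to be split across several bin demands, and one must prove that the greedy topping-up wastes at most $\eps\cdot\OPT$ worth of small items even when some bin types are barely filled by the LP. Making the JSO approximate LP solver interact uniformly with the multiple bin-type knapsack oracles, and bounding the accumulated rounding error across all these ingredients, is where the bulk of the technical work lies.
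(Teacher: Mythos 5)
Your outline matches the paper's overall architecture (prune small bin types, classify and round items via linear grouping, solve a configuration LP, round down, top up with small items, accelerate the LP via the price-directive/resource-sharing method with a min-knapsack block oracle). But there is a genuine gap in your step~(iii) that breaks the plan: you define $\mathcal{C}_i$ to contain only configurations whose \emph{rounded large-item sizes already cover} $d_i$, i.e., total size in $[d_i,\,d_i+s_{\max}]$. With that LP, every opened bin is fully covered by large items, so there are no partially filled bins, and your step~(v) ``greedily sweeping the reserved small items across the partially filled bins'' has nothing to act on. If an optimal solution covers a substantial fraction of its bins mostly (or entirely) with small items, your LP cannot come close — in the extreme where all items are small, your LP yields zero profit while $\OPT$ diverges. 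The key technical device of Csirik--Johnson--Kenyon, which this paper adapts, is precisely that the configuration LP must include \emph{undercovering} configurations: for each bin type $j$ and each configuration $v_i$ with $\ell(i)<d_j$, a variable (the paper's $z_{i,j}$) counting bins of type $j$ opened with partial configuration $v_i$, together with the global constraint $\sum_j\sum_{i\in\widetilde C_j} r(i,j)\,z_{i,j}\le s(T)$ bounding the total residual demand by the small-item volume. Only then does the greedy topping-up become feasible, and only then does the LP objective sandwich $\OPT$. Consequently, the block oracle in the resource-sharing step must also optimize over both covering and undercovering configurations per bin type — in the paper this is the pair of integer programs $\tau_{i,1}$ and $\tau_{i,2}$ (the latter carrying a residual-demand cost term weighted by $p_{k+1}/s(T)$), solved per demand $d_i$ by scaling and invoking the Jansen--Solis-Oba FPTAS. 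Without the undercovering variables, the approximation guarantee does not go through, and the ``greedy small-item sweep'' you describe is vacuous.

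A secondary point: you split items into small vs.\ large by the size threshold $s_j<\eps\cdot\min_{\text{big}}d_i$, whereas the paper (following CJK/JSO) uses a \emph{count-based} split into large ($L$), medium ($M$), and tiny ($T$) sets keyed to the total size $s$. The count-based split is what makes the linear-grouping loss and the tiny-item waste bound come out as $O(\eps)\OPT$ cleanly (losing the top group of $\Theta(|L|\eps^4)$ items costs $O(\eps)\OPT$; at most $\lfloor s/\eps\rfloor$ tiny items are wasted, bounded by $s(M)$). A purely size-based split would require a separate argument bounding the number of large items per bin and the fraction of $\OPT$ achievable by medium-sized items. This is fixable in principle but is a real difference from the paper's argument, and you would need to redo Observations~\ref{obs:LB4Opt}--\ref{obs:roundingLargeItem} for your definitions.
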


\subsubsection{An APTAS in the Infinite Supply Model}

It turns out that normalizing the demands of bins is advantageous here. Thus we assume in this section $1=d_1 > \dots > d_m>0$. Since we are in the \prob{Variable-Sized Bin Covering} model we have $d_i=p_i$ for all $i=1,\dots,m$. The result of this section will be the following.

\begin{theorem}\label{thm:aptas} There is an APTAS for the \prob{Variable-Sized Bin Covering} problem in the infinite supply model.
\end{theorem}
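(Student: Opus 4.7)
The plan is to adapt the APTAS of Csirik, Johnson, and Kenyon~\cite{CsirikJohnsonKenyon:2001} for classical Bin Covering to the Variable-Sized infinite-supply setting by (i) discarding bin types with small demand, (ii) generalizing the configuration LP to handle all surviving bin types simultaneously, and (iii) adapting the rounding and greedy small-item completion step to account for varying bin demands.

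Fix $\epsilon > 0$ and normalize so that $d_1 = 1$. The first step is to discard every bin type with $d_i < \epsilon$; I would argue that this loses only an $O(\epsilon)$-fraction of $\OPT$ by a repacking argument that redirects the items an optimal solution places on tiny bins into surviving bins (of demand at least $\epsilon$) at efficiency $1 - O(\epsilon)$, exploiting the multi-item configurations enabled by the LP below and the infinite supply of bins with demand exactly the smallest surviving value. Next I would classify items as small if $s_j < \epsilon^2$ and large otherwise, and apply linear grouping to the large items so that they occupy $O(1/\epsilon^3)$ distinct sizes, at the additional cost of an $O(\epsilon)$ fraction of large-item volume.

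For each surviving bin type $i$ I would enumerate all configurations, i.\,e., multisets of rounded large-item sizes with total size in $[0, 2 d_i]$; since each configuration uses at most $O(1/\epsilon^2)$ items from $O(1/\epsilon^3)$ size classes, the number of configurations per bin type is $O(1)$ for fixed $\epsilon$, and the LP has only $O(m)$ variables in total. The LP has item-availability constraints on each large-size class and a budget constraint ensuring that the aggregate volume of small items suffices to top up every configuration whose large-item weight lies in $[d_i - \epsilon^2, d_i)$; the objective maximizes the total profit of configurations that either cover a bin of type $i$ already using large items or are completable via the small-item top-up. After solving the LP optimally, I would round each variable down to the nearest integer, losing at most one bin per (bin type, configuration) pair; the actual small items are then assigned greedily in non-increasing order of size to the underfilled bins until each reaches its $d_i$, which succeeds by the budget constraint.

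The main obstacle will be showing that the LP's value is at least $(1 - O(\epsilon)) \OPT$. This requires converting an optimal integer packing on the unreduced instance into a feasible fractional LP solution on the reduced instance by (a) repacking the items on tiny bins according to the loss bound from the first step, (b) linear-grouping the large items via the shift argument of~\cite{CsirikJohnsonKenyon:2001} applied per surviving bin type, and (c) accounting separately for the small-item top-up budget. Summing the $O(\epsilon)$ multiplicative error terms and the $O(1)$-for-fixed-$\epsilon$ additive rounding losses yields $\ALG \geq (1-O(\epsilon)) \OPT - c(\epsilon)$, establishing the APTAS.
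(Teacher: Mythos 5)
Your blueprint matches the paper's: discard bin types of demand below $\epsilon$, classify items by size, build a configuration LP for the large items with a volume budget for the remaining small items, round the LP solution down, and greedily top up the underfilled bins. The single most consequential deviation is your \emph{two-way} item classification (large if $s_j \geq \epsilon^2$, small otherwise), where the paper uses a \emph{three-way, rank-based} classification into large ($L$), medium ($M$), and tiny ($T$) items: $L$ is the $\lceil s/\epsilon^3\rceil$ largest items, $M$ the next $\lfloor s/\epsilon\rfloor$, $T$ the rest. The medium set $M$ is not ornamental; it is precisely the device that makes the greedy completion step work.

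Here is the gap. Your LP's budget constraint guarantees that the \emph{total volume} of small items is at least the aggregate residual demand of the underfilled configurations, and you then assert that greedy (NFD-style) assignment of small items "succeeds by the budget constraint." But discrete items cannot be poured: the last small item placed on each bin overshoots the demand, wasting up to one small item's worth of volume per bin. With up to $\lfloor s/\epsilon\rfloor$ bins and small items of size up to $\epsilon^2$, the total waste can be $\Theta(\epsilon s) = \Theta(\epsilon \cdot \OPT)$, and the volume budget alone does not cover it. Your argument as written therefore does not establish that all the $z$-bins are actually covered. The paper closes exactly this hole by budgeting the LP only on $s(T)$ while letting the greedy step draw from $M \cup T$: since NFD wastes at most one item per bin, at most $\lfloor s/\epsilon\rfloor$ items, and in the worst case these are the $\lfloor s/\epsilon\rfloor$ largest items in $M\cup T$, the waste is absorbed entirely by $s(M)$. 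In your formulation you would instead have to either strengthen the budget constraint to leave $\Omega(\epsilon s)$ slack, or accept that an $O(\epsilon)$ fraction of the $z$-bins go unfilled and fold that into the approximation loss; either is repairable, but neither is stated, and "succeeds by the budget constraint" is not justified as written.

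Two smaller points. First, your claim that the LP has "only $O(m)$ variables in total" is optimistic as phrased: the number of feasible configurations per bin type is a constant for fixed $\epsilon$, but that constant is of order $(1/\epsilon)^{O(1/\epsilon^2)}$, and the LP has one $z_{i,j}$ variable per (configuration, bin type) pair, so the variable count is that constant times $m$; the paper is content with the looser count $n^{1/\epsilon^4}(1+m)$, which is still polynomial in the input for fixed $\epsilon$. Second, the repacking argument for discarding bin types of demand below $\epsilon$ needs the same kind of care the paper's Observation 3.9 takes: one has to treat separately the items of size below $\epsilon$ (which can be repacked onto unit bins with an overshoot of at most $\epsilon$ per bin) and the items of size at least $\epsilon$ (where the argument compares what the optimum wastes on its own small bins to what NFD wastes on the unit bin), and one must allow an additive $+1$ loss; "efficiency $1-O(\epsilon)$" is the right headline but the phrase alone does not constitute the proof.
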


{\bf Outline of the APTAS.} Let $\epsilon >0$ be the desired approximation factor and we assume \wlgs that $1/\epsilon$ is integral. In the algorithm we delete all bin types with size at most $\epsilon$. This idea was also used by Murgolo~\cite{Murgolo:1987}. Then we subdivide the items into three sets: $L$ the set of large items, $M$ the set of medium-sized items, and $T$ the set of tiny items (for a formal definition of $L$, $M$, and $T$, see \ifthenelse{\boolean{full}}{Algorithm~\ref{alg:aptas}}{the full version}). The large items are further subdivided into $1/\epsilon^4$ groups, where each group has (almost) equal size with respect to the number of items it contains\ifthenelse{\boolean{full}}{ (cf. Step~\ref{aptas:roundDownLargeItems} of Algorithm~\ref{alg:aptas})}{}. This grouping technique originates from a paper of Fernandez de la Vega and Lueker~\cite{DeLaVegaLueker:1981}.

In each group, all items are rounded down to the size of the smallest item of the respective group. Note, this implies that there are at most $k=1/\epsilon^4$ many different sizes for the large items. The idea is here that the items of group $i$ can replace the items from group $i+1$ in an optimal solution. By this procedure only a profit bounded by the size of the first group is lost. Then all possible assignments -- referred to as configurations in the following -- of the large and rounded down items to bins are enumerated. 

Via an appropriate LP formulation a solution is determined. More precisely, an LP gives how many bins of each type are to be opened and according to which configuration a bin is assigned items. Here it is crucial that a configuration may not fill a bin, but the LP formulation ensures that only such sets of configurations in the solution are used so that the non-large items (i.\,e. the items from $M\cup T$) can fill the possible only partially covered bins in a greedy way. 
A listing of the algorithm can be found in Figure~\ref{fig:APTAS}.

{\bf Definitions for the APTAS.} We introduce a set of definitions, which will be helpful in order to write down the algorithm rather briefly. 

\begin{itemize}
\item Call a vector $v \in \{1,\dots,n\}^k$  a configuration, where $k=1/\epsilon^4$ is an integral constant. Let $\ell=(\ell_1,\dots,\ell_k)$ be the vector of large sizes, i.\,e. the size of the items in the respective group. The value $\ell_i$ is determined in Step~\ref{aptas:roundDownLargeItems} of the Algorithm~\ref{alg:aptas}. 

\item Let $e_j \in \{0,1\}^k$ the vector with an entry $1$ at position $j$ and $0$ at all positions $j'\neq j$.

\item For a configuration $v_i$ let $\ell(i) := v_i^\top \cdot \ell$, i.\,e. be the sum of sizes of all items, which are contained in $v_i$. Let $n(i,j) = v_i^\top \cdot e_j$ be the number of items of size $\ell_j$ in configuration $v_i$.

\item Let $C_j = \{ i \in \{1,\dots,r\} \mid d_{j-1} > \ell(i) \geq d_j \}$, where $d_0 = \infty$, i.\,e. associate every configuration to a bin type of largest size, such that the configuration covers the bin and let $C_j$ be the set of indices such that $C_j$ contains all the configurations associated to bin type $j$. 

\item Let $r(i,j) = d_j-\ell(i)$ be the demand of bin of type $j$, which is not covered by the configuration $i$ (the remainder).

\item Let $\tilde{C_j} = \{ i \in \{ 1,\dots,r\} \mid  r(i,j) > 0\}$, i.\,e. in $\tilde{C_j}$ there all configurations, which do not cover bin type $j$.

\item Let $n(i)$ be the number of items of size $\ell_i$, for $1\leq i\leq k$, in the instance. 

\end{itemize}

For the analysis let $\OPT(I;L,T)$ denote the value of an optimal algorithm, which assigns the items of $L$ and $T$ according to the bin types in $I$, where it may split the items in $T$ arbitrarily.

Now, we give the key observation, which lets us adapt the algorithm from~\cite{CsirikJohnsonKenyon:2001} to the \prob{Variable-Sized bin covering} model with infinite supply of bins.

\begin{observation} \label{obs:ThrowAwaySmallBins}
Fix an instance $(I,J)$. Let $I' := \{ i \in I \mid d_i > \epsilon\}$ be the set of bins, which have demand more than $\epsilon$. Then $(1+\epsilon)\OPT(I',J)+1 \geq \OPT(I,J)$. 
\end{observation}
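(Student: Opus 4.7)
The plan is to fix an optimal assignment $O$ for $(I,J)$ and partition its covered bins into $B_L := B^* \cap I'$ and $B_S := B^* \setminus I'$ (small bins of demand $\le \epsilon$), with respective profits $P_L, P_S$ and item-sets $J_L, J_S$; then $\OPT(I,J) = P_L + P_S$ and $s(J_S) \ge P_S$ since every bin in $B_S$ is covered. The goal is to construct a feasible assignment for $(I',J)$ of profit at least $(\OPT(I,J)-1)/(1+\epsilon)$, which rearranges directly to the claim.

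The construction keeps the items $J_L$ on the bins $B_L$ as in $O$ (feasible on $(I',J)$, contributing profit $P_L$) and then greedily packs the items of $J_S$ into fresh copies of bin type $1$, which lies in $I'$ because $d_1 = 1 > \epsilon$ and is available in unbounded supply, using the Next Fit rule. In the easy case $P_S \le 1$ the bound is immediate from $\OPT(I',J) \ge P_L \ge \OPT(I,J) - 1$. In the harder case $P_S > 1$, the standard volume argument for Next Fit on bin covering shows that at least $(s(J_S) - 1)/(1+M) \ge (P_S - 1)/(1+M)$ additional bins of type $1$ are covered, where $M := \max_{j \in J_S} s_j$, each contributing profit $1$.

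The main obstacle is therefore to guarantee $M \le \epsilon$ so that the Next Fit ratio is $1/(1+\epsilon)$. For this I would appeal to a without-loss-of-generality reduction on the chosen $O$ enabled by the infinite supply of every bin type: items placed regularly on small bins are automatically of size $< d_{i'} \le \epsilon$, and any item $j$ with $s_j > \epsilon$ that $O$ places singularly on a small bin $i' \in B_S$ (so $d_{i'} \le \epsilon < s_j$) can be relocated onto an appropriate bin in $I'$ whose demand is at most $s_j$ for a strict profit improvement, contradicting the optimality of $O$; the delicate residual range of sizes in $(\epsilon, \min_{i \in I'} d_i)$ is absorbed by the additive $+1$ by a discretisation argument, as it contributes at most one type-$1$ bin worth of loss in the repacking. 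Combining the Next Fit estimate with $M \le \epsilon$ yields $\OPT(I',J) \ge P_L + (P_S - 1)/(1+\epsilon)$, and multiplying by $(1+\epsilon)$ and adding $1$ gives $(1+\epsilon)\OPT(I',J) + 1 \ge (1+\epsilon) P_L + P_S \ge P_L + P_S = \OPT(I,J)$, completing the proof.
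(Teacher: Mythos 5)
Your decomposition (retain the large-bin part of $O$, repack the small-bin items $J_S$ onto fresh copies of bin type $1$) is the same one the paper uses, and the case $P_S\le 1$ is fine, but the harder case has a real gap. The reduction to $M\le\epsilon$ does not go through: an item $j$ with $\epsilon<s_j<\min_{i\in I'}d_i$ that $O$ places singularly on a small bin cannot be relocated to any bin in $I'$ of demand at most $s_j$ --- no such bin exists --- and moving it onto a bigger bin it cannot cover alone \emph{loses} profit, so there is no contradiction with the optimality of $O$. You acknowledge this residual size range but appeal to an unspecified ``discretisation argument'' absorbed by the $+1$; this cannot work, since arbitrarily many such items may sit on small bins in $O$. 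For instance, with $\epsilon=0.1$, $I'=\{1\}$ with $d_1=1$, small bins of demand $0.1$, one item of size $0.5$ and $100$ items of size $0.1$, the optimum covers $101$ small bins and earns $10.1$, while your Next Fit volume bound with $M=0.5$ only certifies $(s(J_S)-1)/(1+M)=9.5/1.5\approx 6.3$ type-$1$ bins, and $(1.1)\cdot 6.3+1<10.1$, so the argument as written does not establish the claim.

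The paper sidesteps the size bound with a per-bin amortisation. It splits $J_S$ into $J_1$ (sizes $<\epsilon$), for which the crude $u(i)<1+\epsilon$ bound on the fill level of each NFD bin suffices, and $J_2$ (sizes $\ge\epsilon$), each item of which sits alone on a small bin in $O$. Whenever NFD fills a type-$1$ bin to level $1+\epsilon'$ with $\epsilon'>\epsilon$, every item on that bin has size at least $\epsilon'$, so $O$ wasted at least $\epsilon'-\epsilon$ per item on its demand-$\le\epsilon$ bin; summing over the items on the bin, $O$ earns at most $1+\epsilon$ while NFD earns $1$. NFD's overfill is thus charged against $\OPT$'s own waste on the same large items rather than against a global $(1+M)$ factor, and that cancellation --- which your crude volume argument misses --- is what yields the $1+\epsilon$ ratio without any a priori bound on item sizes.
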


\begin{proof}
Consider an optimal solution, in which only bins from the set $I\setminus I'$ are covered, since the bound is even better otherwise. We partition the items residing on bins in $I\setminus I'$ in two sets $J_1$ and $J_2$. In the set $J_1$ every item has size smaller than $\epsilon$, in set $J_2$ every item has size at least $\epsilon$. Observe that in an optimal solution the items from $J_1$ and $J_2$ will reside on distinct bins. Hence it suffices to show that for every part $i=1,2$ of the instance we have $(1+\epsilon)\OPT(I',J_i) \geq \OPT(I,J_i)$ and we may lose one additional bin in both instances together.

Clearly, if we put all items from $J_1$ with \NFDs on bins of size $1$, then for every filled bin $i$ we have $u(i) < 1+\epsilon$. Hence if $s(J_1)$ is the overall size of all items in $J_1$ we yield profit at least $\lfloor s(J_1)/(1+\epsilon)\rfloor \geq  s(J_1)/(1+\epsilon)-1$, whilst $s(J_1)$ is an upper bound for the profit \OPT\ yields on this part of the instance.

The items from $J_2$ we also assign with \NFDs to bins of size $1$. If the last bin is not filled and so was the last bin with the items from $J_1$, then we assign the items from this bin to the bin with the items from $J_1$. Hence there is at most one not filled bin and it suffices to show, we yield a profit of at least a $1/(1+\epsilon)$ fraction of the profit \OPT\ yields on the instance $(I,J_2)$.

\begin{figure}\label{alg:aptas}
\scalebox{0.96}{
\framebox[16.5cm]{\begin{minipage}{\textwidth}
\begin{enumerate}[{Step }1.]
\item Remove all bin types with size smaller than $\epsilon$ and let $m$ now be the number of remaining bin types in the instance.\label{step:pruneBins}

\item If $n < \lceil s/\epsilon^3 \rceil +\lfloor s/\epsilon \rfloor$, then set $L=\{1,\dots,n\}$ as set of large items and $M=T = \emptyset$ as sets of medium and tiny items. Else order items non-increasingly and define the sets $L$, $M$, $T$ of large, medium and tiny items as $L = \{1,\dots,  \lceil s /  \epsilon^3 \rceil \}$, $M= \{ |L|+1, \dots,   |L|+\lfloor s/\epsilon \rfloor \}$ and $T = \{ |L|+|M|+1,\dots, n\}$.

\item \label{aptas:roundDownLargeItems} Subdivide the items of $L$ in $k=1/\epsilon^4$ groups. Let $p = |L| \text{ div } k$ and $q= |L| \mod k$, then the groups $1,\dots,q$ have $p+1$ items each and the groups $q+1,\dots,k$ have $p$ items each. In every group $i$, we round down the size of every item of that group to the size $s_j =: \ell_i$, where $s_j$ is the size of the smallest item in the group.

\item Enumerate all configurations, such that we have $v_1,\dots,v_r$ are all the configurations and with this compute $C_j$ and $\tilde{C}_j$ for $j=1,\dots,m$.

\item Introduce variables $y_i$ and $z_{i,j}$ such that for $1 \leq i \leq r$ the variable $y_i$ is associated to configuration $i$ and the variable $z_{i,j}$, with $1\leq i \leq r, 1\leq j\leq m$, is associated to configuration $i$ and bin type $j$.

\item Compute $s(T)$ and $n(i)$ for $i=1\dots,k$.

 \label{aptas:lp}
Solve the following LP
  \begin{alignat}{2}
      \label{lp:aptas} & \text{maximize}   & \qquad \sum_{j=1}^m d_j (\sum_{i \in C_j} y_i + \sum_{i \in \widetilde{C}_j} z_{i,j}) & \\ 
      \notag & \text{subject to} & \sum_{i=1}^r n(i,j) \left ( y_i  + \sum_{l=1}^m  z_{i,l} \right) & \leq  n(j) \qquad j \in \{1,\dots,k\} \\
       \notag&                  & \sum_{j=1}^m \sum_{i \in \widetilde{C}_j} r(i,j)z_{i,j}          &\leq  s(T)                        \\
        \notag&                  & y_i                                                              &\geq  0     \qquad i \in \{1,\dots,r\} \\
        \notag&                  & z_{i,j}                                                          &\geq  0     \qquad i \in \{1,\dots,r\}, j \in \{1,\dots,m\}
\end{alignat}

\item Set for every variable of the LP $y_j' := \lfloor y_j \rfloor$  and $z_{i,j}' = \lfloor z_{i,j} \rfloor$. \label{aptas:roundDown}
\item Construct a solution in the following way. 
  \begin{enumerate}
    \item For every configuration $j=1,\dots,r$ take $y_j'$ many bins of the associated unique type and fill every bin accordingly to the configuration $v_i$. 
    \item For every pair $(i,j)$ take $z_{i,j}'$ many bins with demand $d_j$ and assign items accordingly to configuration $v_i$. 
    \item Fill the bins created accordingly to the $z_{i,j}'$ variables in a greedy way -- for example with \NFDs -- using the items $\lfloor s/\epsilon^2 \rfloor+1, \dots, n$, where we are left to show, that this is possible.
   \end{enumerate}
  \label{aptas:constructSolution}
\end{enumerate}
      \end{minipage}}}
      \caption{The APTAS.}
      \label{fig:APTAS}
\end{figure}

Consider a bin $i$, which is covered by $t$ many items from $J_2$ and let $S_i$ be this set of items. If it has fill level $u(i) \leq 1+\epsilon$ the claim follows, thus assume $u(i) > 1+\epsilon$ and let $\epsilon' = u(i)-1$. Let $I_i$ be the set of bins to which \OPT\ assigned the items from $S_i$ and recall that $|I_i|=|S_i|=t$, i.\,e. every item resides alone on its bin in the solution of \OPT. Since we have $\epsilon' > \epsilon$ and the last item was a smallest on bin $i$, we have for all $j \in I_i$ that $u_\OPT(j) -d_j \geq \epsilon'-\epsilon$, that is also \OPT\ wasted at least a volume of $\epsilon'-\epsilon$ per item from $I_i$ when it had assigned the items from $S_i$ to their respective bins in $I_i$. Since $|I_i|=t$ it follows that the profit \OPT\ gains for every such a bin $i$, which \NFDs fills with $u(i)=1+\epsilon'>1+\epsilon$, is bounded by
$$  1+\epsilon'-t(\epsilon'-\epsilon)  = 1- \epsilon'(t-1) + \epsilon t \leq 1- \epsilon (t-1) + \epsilon t =  1+\epsilon.$$
The profit \NFDs yields is at least $1$, hence, the claim follows.
\end{proof}

\begin{observation}\label{obs:LB4Opt}
We have $s \leq 2\OPT(I, L\cup T \cup M)+2$.
\end{observation}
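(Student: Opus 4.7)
My plan is to prove the claim by exhibiting a feasible algorithm whose profit is at least $(s-2)/2$. Since bin type $1$ has demand $d_1 = 1$ and is available in unlimited supply in the infinite-supply model, I would run Next Fit using only bins of demand $1$: process items in some order, adding each item to the current bin until it is covered, whereupon a new bin is opened.

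We may assume without loss of generality that every item has size at most $d_1 = 1$, since an item of larger size alone covers any bin and contributes $d_1$ to both the algorithm's and the optimal profit regardless of how it is packed. Under this assumption, every bin covered by Next Fit has fill level strictly less than $2 d_1 = 2$: just before placing the covering item the bin holds strictly less than $d_1 = 1$, and the covering item has size at most $d_1$. Hence if Next Fit covers $k$ bins, the items it uses have total size less than $2k$, and the leftover items all lie on a single, uncovered bin and have total size less than $d_1 = 1$ (otherwise Next Fit would cover one more bin). Summing these two contributions gives $s < 2k + 1$.

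Since Next Fit is a feasible algorithm with profit $k d_1 = k$, we have $\OPT(I, L \cup M \cup T) \geq k$, so $s < 2k + 1 \leq 2\OPT(I, L \cup M \cup T) + 1 \leq 2\OPT(I, L \cup M \cup T) + 2$, which establishes the claim. The main concern is the careful treatment of items of size larger than $d_1$, which is dealt with by the WLOG reduction above; the additive slack ``$+2$'' in the bound comfortably absorbs any rounding between the strict inequalities used in the volume argument and the integer profit of Next Fit.
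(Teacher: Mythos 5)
Your proof takes essentially the same route as the paper's: both invoke Next Fit (Decreasing) restricted to the bin type of demand $d_1 = 1$ and then apply a volume argument — each covered bin absorbs less than $2d_1$, the leftover is less than $d_1$ — to conclude $\OPT \geq \lfloor s/2 \rfloor$, after first reducing to the case where all items have size less than $d_1$. One remark on the WLOG, which you correctly identify as the delicate point: your justification that an oversized item ``contributes $d_1$ to both the algorithm's and the optimal profit regardless of how it is packed'' does not by itself dispose of the issue, because such an item still contributes its full size $s_j$ (possibly much larger than $2$) to $s$, and the additive slack of $+2$ does not absorb that; the reduction only goes through if the item is removed from the instance altogether — so that $s$ in the statement refers to the preprocessed total size — which is what the paper's phrase ``a preprocessing can remove larger items'' is meant to convey and what you should state explicitly.
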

\begin{proof}
We can assume \wlgs that the largest items in the instance have size less than $1$, since otherwise a preprocessing can remove larger items and assign them to the bin type with size $1$, which is clearly optimal. Then it is easy to see that $\OPT(I, L\cup T \cup M) \geq \lfloor s/2 \rfloor$, since already \NFDs gives such a bound using only the largest bin type with demand $1$. Rearranging and taking into account the rounding gives the claim.
\end{proof}

\begin{observation}\label{obs:noMediums}
Let $s\geq2$ and $\epsilon \leq 1/6$. Then $\OPT(I, L \cup M \cup T) \leq \OPT(I, L \cup T)/(1-2\epsilon)+2$.
\end{observation}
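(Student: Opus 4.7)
The inequality to prove is equivalent to $\OPT(I,L\cup T)\ge(1-2\epsilon)\,\OPT(I,L\cup M\cup T)-2$, so the plan is to start from an optimal solution $\sigma^*$ for $(I,L\cup M\cup T)$ with profit $P:=\OPT(I,L\cup M\cup T)$ and construct a feasible assignment for $(I,L\cup T)$ of profit at least $(1-2\epsilon)P-2$. The first preliminary step is a volume bound on the medium items. Since items are sorted non-increasingly and $|L|=\lceil s/\epsilon^3\rceil$ while $s(L)\le s$, the pigeonhole principle forces the smallest item of $L$ to have size at most $s/|L|\le\epsilon^3$; every item of $M$ is at most this large, and $|M|=\lfloor s/\epsilon\rfloor$, so $s(M)\le s\epsilon^2$. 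Combined with Observation~\ref{obs:LB4Opt}, which yields $s\le 2P+2$, this gives the quantitative estimate $s(M)\le 2\epsilon^2 P+2\epsilon^2$.

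Next I would delete every medium item from $\sigma^*$, leaving a partial assignment that uses only $L\cup T$ items. Each bin $b$ whose remaining fill still satisfies $u(b)-m(b)\ge d_b$ is retained and contributes its profit in full; each bin that becomes uncovered has its surviving $L\cup T$ items returned to a pool. Across covered bins the total loss of fill is at most $\sum_b m(b)\le s(M^*)\le s(M)$, and thus the total size on retained bins plus the size of the pool is at least $P-s(M)$. Exploiting the infinite supply of every bin type with demand exceeding $\epsilon$, I would then repack the pool by a Next-Fit-Decreasing-style procedure, producing additional covered bins whose total demand is at least (pool volume) $-1$, where the $-1$ accounts for at most one unfilled residual bin. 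Combining the two contributions gives a solution of profit at least $P-s(M)-1\ge P-2\epsilon^2 P-2\epsilon^2-1$, which is at least $(1-2\epsilon)P-2$ for $\epsilon\le 1/6$ and $s\ge 2$, establishing the claim.

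The main obstacle is the final repacking step: a single application of NFD on the largest bin type would lose a multiplicative factor of two (as in the proof of Observation~\ref{obs:LB4Opt}), whereas the target bound only tolerates an \emph{additive} constant loss. The remedy is to exploit the dichotomy within the pool: items originating from $T$, and all non-$L$ residues, have size at most $\epsilon^3$ and therefore pack into the smallest bin type $d_j>\epsilon$ with efficiency at least $d_j/(d_j+\epsilon^3)=1-O(\epsilon^2)$, while the comparatively large items of $L$ either singularly cover a bin of an appropriate type (of which infinitely many are available) or can be matched to a variable-sized bin whose demand is within a constant factor of the item's size. Making these two packings interact so that exactly one residue bin is sacrificed, uniformly over all instances, is the main technical step, and the $+2$ in the target bound is exactly the slack absorbing both this additive residue and the $2\epsilon^2$ term from Step~1.
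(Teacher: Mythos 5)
Your approach differs substantially from the paper's and contains a genuine gap that you yourself flag in your final paragraph. The paper never repacks items at all: it observes that the at-most-$\lfloor s/\epsilon \rfloor$ covered bins contain on average at least $1/\epsilon^2$ large items, so removing the $\lfloor s\epsilon \rfloor + 1$ bins richest in large items frees at least $\lfloor s/\epsilon \rfloor = |M|$ large items, each of which is at least as big as any medium item and can therefore be substituted one-for-one for the medium items on the surviving covered bins. The only loss is the demand of the $\lfloor s\epsilon \rfloor + 1$ deleted bins, each $\le 1$, and the calculation with Observation~\ref{obs:LB4Opt} finishes. No repacking lemma is needed, and the argument works bin-type-agnostically.

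Your plan instead deletes $M$ from $\sigma^*$ and then repacks a pool of leftover $L\cup T$ items, claiming a repacked profit of at least (pool volume) $- 1$. This is the step that does not go through, and it cannot be rescued by the efficiency heuristics in your last paragraph. The profit obtainable from a set of items of total volume $V$ depends on which bin demands $>\epsilon$ are actually present in $I$, and there is no guarantee of a bin type close to any given pool item's size. For example, if the only bin types have demands $1$ and $0.1$ and the pool is $N$ items of size $0.4$, the best achievable is roughly $(0.8/1.2)\cdot 0.4N$, i.e., a constant \emph{fraction} of the volume is wasted, not an additive constant. (Ruling this scenario out requires going back to what $\sigma^*$ looked like, which is precisely the combinatorics your sketch leaves open.) There is also a smaller accounting slip: ``size on retained bins'' is an over-, not under-, estimate of retained profit, so the chain from ``retained size $+$ pool size $\ge P - s(M)$'' to ``profit $\ge P - s(M) - 1$'' needs to be re-derived via a shortfall argument (lost profit $\le$ pool size $+ s(M)$) rather than stated directly. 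Your observation that $s(M)\le s\epsilon^2$ by averaging over $L$ is correct and actually sharper than what the paper uses, but by itself it does not close the repacking gap. If you want to complete this route you would need a lemma of the form: items of size at most $\epsilon^3$ pack with $1-O(\epsilon^2)$ efficiency into the smallest available bin type, \emph{and} the $L$ items remaining in the pool can be recombined against the available bin demands losing only an $O(\epsilon)$ fraction -- the latter is nontrivial and essentially as hard as the original statement. The paper's bin-removal-plus-substitution argument sidesteps all of this.
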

\begin{proof}
Take an optimal covering of the bins with all items, i.\,e. with items from $L\cup T \cup M$. Since \OPT\ yields at most $\lfloor s/\epsilon \rfloor$ many bins the average number of large items per bin is at least $1/\epsilon^2$ by the definition of the set $L$.

Hence removing $\lfloor s \epsilon \rfloor +1$ bins with the largest items, removes at least $\lfloor s/\epsilon \rfloor$ many large items. These can now be used instead of the medium-sized items in the rest of the instance, since there are at most so many medium items in the instance. The modified solution has at most $\lfloor s/\epsilon \rfloor+1$ bins less than the optimal solution.

With $\OPT(I, L\cup T \cup M) \geq \lfloor s/2 \rfloor \geq s/2-1$ as argued in the proof of Observation~\ref{obs:LB4Opt} and as $\epsilon \leq 1/6$ the removal of the $\lfloor s \epsilon \rfloor+1$ bins is possible, since an optimal solution contains at least this many bins. Further we have shown that $\OPT(I, L\cup M \cup T) - \lfloor s\epsilon \rfloor - 1 \leq \OPT(I, L\cup T)$.

With this,

\begin{align*}
 \OPT(I, L\cup T) & \geq \OPT(I, L\cup M \cup T)  - \lfloor s\epsilon \rfloor -1 \\
    & \geq \OPT(I, L\cup M \cup T)  - s\epsilon -1 \\
    & \geq \OPT(I, L\cup M \cup T) -  2\epsilon\OPT(I, L\cup T \cup M)-2\epsilon-1 \\
    & \geq (1-2\epsilon)  \OPT(I, L\cup M \cup T) -2\epsilon-1,
\end{align*}

where we have used Observation~\ref{obs:LB4Opt} in the third line. Rearranging and using $\epsilon \leq 1/6$, the claim follows.

\end{proof}

\begin{observation}\label{obs:OPTandFluid}
If $\epsilon \leq 1/6$ then $\OPT(I,L\cup M \cup T) \leq \frac{1+\epsilon}{1-2\epsilon} \OPT(I';L,T)+4$.
\end{observation}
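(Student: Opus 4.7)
The plan is to chain together the three preceding observations in a straightforward way. Starting from $\OPT(I, L\cup M\cup T)$, I would first apply Observation~\ref{obs:noMediums} (which requires $\epsilon\le 1/6$, exactly as assumed here) to replace the medium items by large/tiny items at the cost of a multiplicative factor $1/(1-2\epsilon)$ and an additive~$2$; this gives
\[
\OPT(I, L\cup M\cup T) \;\le\; \frac{1}{1-2\epsilon}\,\OPT(I, L\cup T) + 2.
\]

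Next I would apply Observation~\ref{obs:ThrowAwaySmallBins} with item set $J = L\cup T$ to switch from the full bin-type set $I$ to the set $I' = \{i\in I : d_i>\epsilon\}$ of bins of ``large enough'' demand. This costs a multiplicative factor of $(1+\epsilon)$ and an additive~$1$:
\[
\OPT(I, L\cup T) \;\le\; (1+\epsilon)\,\OPT(I', L\cup T) + 1.
\]
Finally, I would note that $\OPT(I'; L,T) \ge \OPT(I', L\cup T)$, since the definition of $\OPT(I'; L,T)$ allows items in $T$ to be split arbitrarily and hence is a relaxation of the problem computing $\OPT(I', L\cup T)$.

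Combining these three inequalities yields
\[
\OPT(I, L\cup M\cup T) \;\le\; \frac{1+\epsilon}{1-2\epsilon}\,\OPT(I'; L, T) + \frac{1}{1-2\epsilon} + 2.
\]
The only remaining step is a numerical check on the additive constant: since $\epsilon\le 1/6$, we have $1-2\epsilon\ge 2/3$, hence $1/(1-2\epsilon) \le 3/2 < 2$, so the additive term is bounded by $2+2=4$, giving the claimed bound. There is no real obstacle here; the statement is designed precisely so that the three observations compose, and the constant $4$ is the slack left to absorb the $+1$, $+2$, and the $1/(1-2\epsilon)$ inflation of the $+1$.
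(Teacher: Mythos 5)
Your proof is correct and essentially the same as the paper's: both simply chain together Observation~\ref{obs:ThrowAwaySmallBins}, Observation~\ref{obs:noMediums}, and the relaxation bound $\OPT(I',L\cup T) \le \OPT(I';L,T)$. The paper applies Observation~\ref{obs:ThrowAwaySmallBins} first and Observation~\ref{obs:noMediums} second (yielding additive constant $2(1+\epsilon)+1\le 10/3$), whereas you apply them in the opposite order (yielding $\tfrac{1}{1-2\epsilon}+2\le 7/2$); both fit within the allotted $+4$, so the reordering is immaterial.
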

\begin{proof}
We have 
\begin{align}
\OPT(I,L\cup M \cup T) & \leq (1+\epsilon)\OPT(I',L \cup M \cup T)+1 \label{eq:1} \\ 
			& \leq (1+\epsilon)  \left (\frac{1}{1-2\epsilon} \OPT(I',L\cup T)+2 \right)+1 \label{eq:2} \\
			& \leq \frac{1+\epsilon}{1-2\epsilon} \OPT(I';L,T) +4, \label{eq:3}
\end{align}
where~(\ref{eq:1}) is by Observation~\ref{obs:ThrowAwaySmallBins},~(\ref{eq:2}) is by Observation~\ref{obs:noMediums}, and~(\ref{eq:3}) is by the precondition $\epsilon \leq 1/6$ and the observation $\OPT(I',L\cup T) \leq \OPT(I';L,T)$.
\end{proof}

\begin{observation}\label{obs:roundingLargeItem}
Let $\epsilon \leq 1/10$. Consider the solution according to $\OPT(I';L,T)$. If $L'$ denotes the sets of large items, which is obtained by rounding down the item sizes from $L$ as done in Step~\ref{aptas:roundDownLargeItems} of Algorithm~\ref{alg:aptas}, then
$$ \OPT(I';L,T) \leq \frac{1-2\epsilon}{1-4\epsilon-2\epsilon^2}  \OPT(I';L',T) +9.$$
    
\end{observation}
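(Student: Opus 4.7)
The plan is to apply the linear grouping idea of Fernandez de la Vega and Lueker: starting from an optimal solution $O$ realizing $\OPT(I';L,T)$, I would transform it into a feasible solution $O'$ for the rounded instance $(I';L',T)$ by a careful one-step-up substitution, paying only for the bins that rely on items of the first (largest) group.

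Concretely, I would exploit two facts built into the grouping of Step~\ref{aptas:roundDownLargeItems}. First, since items are sorted non-increasingly, the rounded size $\ell_i$ (the smallest original size in group $L^i$) satisfies $\ell_{i-1}\ge s_j$ for every $j\in L^i$, because $\ell_{i-1}$ is the smallest size in $L^{i-1}$ and every index in $L^i$ exceeds every index in $L^{i-1}$. Second, the construction of groups ensures $|L^{i-1}|\ge|L^i|$. For every item $j\in L^i$ with $i\ge 2$ used by $O$ on some bin $b$, I place a rounded item of size $\ell_{i-1}$ (drawn from $L'^{i-1}$) on $b$ in $O'$ instead of $j$. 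Items used from $L^1$ cannot be shifted up, so I discard the bins of $O$ containing such items. The (fractional) assignment of $T$ is inherited from $O$ unchanged. The resulting $O'$ is feasible for $(I';L',T)$: each non-discarded bin receives a substitute of size no smaller than the original, so coverage is preserved, and the number of $L'^{i-1}$ items consumed is at most $|L^i|\le |L^{i-1}|=|L'^{i-1}|$.

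The bookkeeping then goes as follows. At most $|L^1|$ bins are discarded, each with demand (and profit) bounded by $d_1=1$ because the demands are normalized and the small bins have already been pruned from $I'$. Using $|L^1|\le\lceil|L|/k\rceil$ with $k=1/\epsilon^4$ and $|L|\le\lceil s/\epsilon^3\rceil$, this yields
\[
 \OPT(I';L,T)-\OPT(I';L',T)\;\le\;p(O)-p(O')\;\le\;s\epsilon+2.
\]
To convert the additive $s\epsilon$ into a multiplicative error on $\OPT(I';L,T)$, I would chain Observation~\ref{obs:LB4Opt} with Observation~\ref{obs:OPTandFluid} to obtain $s\le\tfrac{2(1+\epsilon)}{1-2\epsilon}\OPT(I';L,T)+10$. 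Substituting and collecting the $\OPT(I';L,T)$ terms on the left factors out precisely $\tfrac{1-4\epsilon-2\epsilon^2}{1-2\epsilon}$, yielding the claimed multiplicative constant; the remaining additive constant $(1-2\epsilon)(10\epsilon+2)/(1-4\epsilon-2\epsilon^2)$ is easily seen to be at most $9$ for all $\epsilon\le 1/10$ (it tends to $2$ as $\epsilon\to 0$ and is below $5$ at $\epsilon=1/10$).

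The main obstacle is getting the substitution direction right. Replacing an item within its own group by its rounded version would shrink it and could uncover bins; the fix is to shift each used item one group up and to absorb the resulting shortage in the top group $L^1$ by sacrificing its bins. Once this is set up, feasibility of $O'$ and the supply bound $|L^{i-1}|\ge|L^i|$ are immediate, and the remaining analytic work is routine algebra combining the observations already established in the excerpt.
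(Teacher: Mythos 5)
Your proposal is correct and follows essentially the same route as the paper: a one-group-up substitution of large items (losing only the bins that consume group-$1$ items), the bound $\OPT(I';L,T)\le\OPT(I';L',T)+s\epsilon+2$ derived from $|L^1|\le\lceil|L|/k\rceil$, and the same chaining of Observations~\ref{obs:LB4Opt} and~\ref{obs:OPTandFluid} to turn the additive $s\epsilon$ term into the multiplicative factor $\tfrac{1-2\epsilon}{1-4\epsilon-2\epsilon^2}$ with an additive constant at most $9$ for $\epsilon\le 1/10$. You merely state the substitution and supply argument in more explicit detail than the paper does; the underlying idea and the algebra coincide.
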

\begin{proof}
By the rounding procedure in Step~\ref{aptas:roundDownLargeItems} we have that in a solution, in which the items from $L$ were rounded down, an item from a group $i$ can replace an item from the group $i+1$, where we lose the $p+1$ largest items or $p$ largest items, if $|L|$ can be divided by $k$. Hence, if $L'$ denotes the set of rounded large items, we have $\OPT(I';L,T) \leq \OPT(I'; L',T)+p+1$.

With this, since $p \leq \lceil s/\epsilon^3 \rceil \cdot \epsilon^4$ by the number of groups, we have $\OPT(I';L,T) \leq \OPT(I'; L',T) +s\epsilon +2$. Hence, we can bound

\begin{align}
 \OPT(I';L,T) & \leq \OPT(I'; L',T) +s\epsilon +2 \\
	     & \leq \OPT(I'; L',T) + 2\epsilon\OPT(I,L\cup M \cup T) +2\epsilon +2 \label{eq:a1} \\
	     & \leq \OPT(I'; L',T) + \frac{1+\epsilon}{ \frac{1}{2\epsilon}-1} \OPT(I';L,T)+10\epsilon +2\label{eq:a2} \\
	     & \leq  \OPT(I'; L',T) + \frac{1+\epsilon}{ \frac{1}{2\epsilon}-1}  \OPT(I';L,T) +3,\label{eq:a3}
\end{align}
where we used Observation~\ref{obs:LB4Opt} in~(\ref{eq:a1}), Observation~\ref{obs:OPTandFluid} in~(\ref{eq:a2}) and the fact that $\epsilon \leq 1/10$ in step~(\ref{eq:a3}). Rearranging terms and using again $\epsilon \leq 1/10$ gives the claim.
\end{proof}

\begin{proof}[of Theorem~\ref{thm:aptas}]
Assume $\epsilon\leq 1/10$. We firstly show that all bins the algorithm outputs are filled, then we bound the approximation ratio.

As argued in Observation~\ref{obs:roundingLargeItem}, we can obtain a solution for the original problem from the solution with the large items, which were rounded down in Step~\ref{aptas:roundDownLargeItems} of Algorithm~\ref{alg:aptas}. Hence it suffices to show that in Step~\ref{aptas:roundDown} all opened bins are filled.

The bins in which items from configurations associated to variables $y_j'$ are filled by definition of the $y_j$ variables. For bins corresponding to the $z_{i,j}'$ variables we argue that these are filled in Step~\ref{aptas:roundDown}\,(c). By the size of the smallest bins any solution has at most $\lfloor s/\epsilon \rfloor$ many bins. Since \NFDs does not assign items to filled bins, we have that at most $\lfloor s/\epsilon  \rfloor$ items assigned by \NFDs do not cover any demand and we say the sum of item sizes is wasted. 

Since \NFDs wastes at the worst the largest $\lfloor s/\epsilon \rfloor$ of the items in $M\cup T$, we have that at most a sum of sizes of $s(M)$ is wasted, by definition of the set $M$. Since $s(T)$ is at most the demand to cover as enforced by the constraints of the LP, and $s(M \cup T)$ is the sum of the items sizes, which is available to \NFDs in order to fill the not covered bins induced by the $z_{i,j}'$ variables, we have \NFDs can fill all those bins.

We now give the calculation for the approximation factor, where we explain the steps thereafter. Let $I' = \{ i \in I \mid d_i > \epsilon\}$. We have

\begin{align}
 \OPT(I, L\cup M \cup T)  & \leq \frac{1+\epsilon}{1-2\epsilon} \OPT(I';L,T)+4  \label{cor:s20} \\
		      & \leq   \frac{1+\epsilon}{1-2\epsilon} \left (  \frac{1-2\epsilon}{1-4\epsilon-2\epsilon^2} \OPT(I';L',T) +9  \right)+4  \label{cor:s21} \\
		      & \leq   \frac{1+\epsilon}{1-4\epsilon-2\epsilon^2}  \OPT(I';L',T) +17  \label{cor:s22} \\
		     & \leq   \frac{1+\epsilon}{1-4\epsilon-2\epsilon^2}  \left (\sum_{i=1}^m   d_i \left ( \sum_{c\in C_i} y_c  + \sum_{c\in \widetilde{C}_i} z_{i,c} \right) \right  )      + 17  \label{cor:s6} \\
		     & \leq  \frac{1+\epsilon}{1-4\epsilon-2\epsilon^2}  \left (\sum_{i=1}^m   d_i \left ( \sum_{c\in C_i} y_c'  + \sum_{c\in \widetilde{C}_i} z_{i,c}' \right)   + 1/\epsilon^4    \right )    + 19  \label{cor:s7}  \\
\end{align}

In~(\ref{cor:s20}) we use Observation~\ref{obs:OPTandFluid}. In~(\ref{cor:s21}) we apply Observation~\ref{obs:roundingLargeItem}. (\ref{cor:s22}) uses the fact that $\epsilon \leq 1/10$.~(\ref{cor:s6}) is easy to observe and finally, in~(\ref{cor:s7}) we round down the variables of the LP, which is explained as follows.

We have that our LP has only $1+1/\epsilon^4$ constraints besides the non-negativity constraints. Hence an optimal basic solution has at most $1+1/\epsilon^4$ fractional values and hence we lose at most so many bins with demand $1$ due to rounding down the fractional variables. 

As such a solution can be found in polynomial time, since the LP has polynomial size in $n$ (though exponentially in $1/\epsilon$, which is a constant) the bound follows. If $1+\epsilon'>1$ is the desired approximation ratio we set $\epsilon =  \epsilon'/13$ and run our algorithm which gives an approximation ratio of at least $1+\epsilon'$ minus a constant term. Also observe, by our choice of $\epsilon$, and $\epsilon'\leq 1$ \wlg, the assumption $\epsilon \leq 1/10$ was justified.
\end{proof}

\subsubsection{An AFPTAS in the infinite supply model}

Jansen and Solis-Oba~\cite{JansenSolis-oba:2003} gave an AFPTAS for the \ALG[Bin Covering] problem. In this section we extend their method to work for \ALG[Variable-sized Bin Covering] in the infinite supply model in order to prove Theorem~\ref{thm:afptas}.

\paragraph{Formulation as a resource sharing problem and overall method.}

The AFPTAS does not solve the linear program~(\ref{lp:aptas}) (LP) in Step~\ref{aptas:lp} of the APTAS exactly. Instead we approximate LP~(\ref{lp:aptas}). We will show later, how to transform this solution into a feasible solution for LP~(\ref{lp:aptas}). Then we apply the rounding procedure form Theorem~\ref{thm:aptas}. Recall that $k=1/\eps^4$ is the number of different large sizes and $r= n^k=n^{1/\eps^4}$ is the number of configurations. Let $x =(y_1,\dots,y_r,z_{1,1},\dots,z_{1,m},z_{2,1},\dots,z_{2,m}, \dots, z_{r,1},\dots,z_{r,m})$ be a solution vector to the \ALG[Variable-Sized Bin Covering] problem. We restate LP~(\ref{lp:aptas}) in the following form.

 \begin{alignat}{2}
\label{lp:afptas}	  \lambda^* = \min \bigl \{ \lambda \mid     \bigr.   \\
         \notag&  \sum_{i=1}^r \frac{ n(i,j)}{n(j)} \left ( y_i  + \sum_{l=1}^m  z_{i,l} \right)  \leq  \lambda & \qquad  1 \leq j \leq k , x \in B_t \\
  \notag&    \sum_{j=1}^m \sum_{i \in \widetilde{C}_j} \frac{r(i,j)}{s(T)}z_{i,j}     \leq  \lambda &  x \in B_t  \\
                      \notag  \bigl. \bigr  \},
\end{alignat}
where

\begin{alignat}{2}
\notag    B_t= \bigl \{ x \bigr . \mid     &  \sum_{j=1}^m  d_j \left ( \sum_{i\in C_j} y_i  + \sum_{l=1}^m \sum_{i\in \widetilde{C}_j} z_{i,l} \right ) = t \eps \bigl . \text{ and } \forall i,l: y_i\geq 0, z_{i,l} \geq 0 \bigr \}
\end{alignat}

Note that for $\lambda =1$ the constraints of LP~(\ref{lp:aptas}) are equivalent to the constraints of LP~(\ref{lp:afptas}). The value $t$ defining the simplex $B_t$ thereby will be set such that $t\eps$ is the (approximate) value of an optimal solution and we can guess $t$ via binary search. We explain this in more detail later. Suppose $t\eps$ is the true value of an optimal integral solution. Then $\lambda^*=1$ is the optimal value of the resource sharing problem and the corresponding solution vector $x$ gives also a solution to LP~(\ref{lp:aptas}). Jansen and Solis-Oba give in~\cite{JansenSolis-oba:2003} a solution to LP~(\ref{lp:afptas}) for the case when $m=1$ and $d_1=1$ with the {\em price directive decomposition method}~\cite{GrigoriadisKhachiyan:1996,JansenZhang:2002} and show how this can be transformed into a $(1+\eps)$-approximation for the \prob{Variable-Sized Bin Covering} problem. We can extend their technique to work for $m$ bin types.

LP~(\ref{lp:afptas}) is a {\em convex block-angular resource sharing problem}. Resource sharing problems can be solved with the price-directive decomposition method~\cite{GrigoriadisKhachiyan:1996,JansenZhang:2002} within any given approximation factor. We give a brief overview of this method. 

An algorithm for solving a resource sharing problem finds a solution iteratively. It starts with an arbitrary feasible solution $x^*$ and determines a price vector $p=(p_1,\dots,p_{k+1})$, whose components are non-negative. It requires to solve a subproblem, called the {\em block program}, whose solution depends on $p$. We will state the block program for our problem below. A linear combination of an optimal solution $\hat{x}$ to the block program and the previous solution $x^*$ found by the price-directive decomposition method so far determines an updated solution $x^*$ for the original resource sharing problem. After a certain number of iterations for any given $\delta>0$ the price-directive decomposition method guarantees a solution $x^*$ with objective value at most $(1+\delta)\lambda^*$, where $\lambda^*$ is the objective value of an optimal solution for the resource sharing problem. We are left to show how to transform this solution into a $(1+\eps)$-approximate solution for the \ALG[Variable-sized Bin Covering] problem.

\paragraph{Statement of the block program.}

The block program we have to solve is

\begin{align}
 \min \{ p^T A x \mid x \in B_t \},  \label{blockproblem}
\end{align}

where matrix $A=(a_{j,i})$ denotes the $(k+1)\times r(1+m)$ coefficient matrix corresponding to the constraints of LP~(\ref{lp:afptas}). We give the entries of $A$ in more detail. Let $f(j) = (j-r) \div m$ and $g(j) = (j-r) \mod m$. Then 

$$ a_{j,i} = \begin{cases}  \vspace{4pt}
 \dfrac{ n(i,j)}{n(j)}& \text{ if $1\leq j \leq k$ and $1 \leq i \leq r$} \\
\dfrac{ n(i',j)}{n(j)}  & \text{ if $1\leq j \leq k$ and $r+1 \leq i \leq r(1+m)$, where $i' = f(i-1)+1$} \\
0 & \text{ if $j= k+1$ and $1 \leq i\leq r$} \\
  \dfrac{r(i',j')}{s(T)} & \text{ if $j = k+1$ and $r+1 \leq i \leq r(1+m)$, where $i' = f(i-1)+1$ and $j' = g(i-1)+1$}. 
 \end{cases}$$

Observe, the coefficients in columns $1\leq i\leq r$ are the coefficients of the $y_i$ variables and the coefficients in columns $r+1 \leq j \leq r(1+m)$ are the coefficients for the variables $z_{i',j'}$, with $i' = f(i-1)+1$ and $j' = g(i-1)+1$. Note that we neglected here for ease of presentation that some additional entries $a_{j,i}$ for $j=k+1$ and $r+1 \leq i \leq r(1+m)$ maybe zero, namely if $i\notin \tilde{C}_j$, i.\,e. if configuration $i$ covers bin type $j$.

Since $B_t$ is a simplex, an optimal solution $x^*$ for program~(\ref{blockproblem}) will be attained at a vertex. That is one component of $x^*$ has value $t\eps$ and all other components are zero. Hence an optimal solution corresponds to a single configuration. Thus it is enough to find a configuration with smallest price, in order to solve the block problem, where the price of a configuration is determined by $p^TA$ as follows. 

Let $1\leq i \leq r$ be a configuration. Recall, that we have $m+1$ variables for configuration $i$, which are the variables $y_i,z_{i,1},\dots,z_{i,m}$. Each of these variables was associated to a bin type $j$. The variable $z_{i,j}$ for $1 \leq j \leq m$ was associated to bin type $j$ and the variable $y_i$ was associated to a bin type with largest demand, which is covered by $i$. 

Fix now one of these $m+1$ variables and say $j'$ is the index of this variable in the solution vector $x=(y_1,\dots,y_r,z_{1,1},\dots,z_{1,m},z_{2,1},\dots,z_{2,m}, \dots, z_{r,1},\dots,z_{r,m})$. Note that by fixing a variable also a bin type $j$ is fixed and also the other way round. We define the price of configuration $i$ with respect to bin type $j$ as $p^TAe_{j'}$, where $e_{j'}$ again denotes the vector with a one in row $j'$ and zero otherwise. That is, the price of configuration $i$ with respect to bin type $j$ is determined by multiplying the price vector $p$ with column $j'$ in the matrix $A$. Let $p=(p_1,\dots,p_{k+1})$. Then the price of a configuration $i$ with respect to bin type $j$ is 
$\sum_{l=1}^k n(i,l)p_l /n(l)$, if configuration $i$ covers a bin of type $j$ or $\sum_{l=1}^k n(i,l)p_l /n(l) + r(i,j)p_{k+1}/s(T)$ if it does not cover a bin of type $j$.

As argued it is enough to find a configuration $i$ with smallest price with respect to some bin type $j$. Define two types of integer programs (IP)
 \begin{align}
  \tau_{i,1} & = \min\sum_{l=1}^k \frac{p_l}{n(l)}u_l  \label{ip:findCoveringConfig} \\
   \text{s.\,t.} \quad & \sum_{l=1}^k \ell_l u_l \geq d_i \notag \\
    & u_l \in \{ 0,\dots,n(l)\} \notag
 \end{align}
 
 and
 
 \begin{align}
  \tau_{i,2} & = \min\sum_{l=1}^k  \frac{p_l}{n(l)}u_l + p_{k+1}\frac{d_i-\sum_{l=1}^k\ell_lu_l}{s(T)} \label{ip:findnonCoveringConfig}  \\
   \text{s.\,t.} \quad & \sum_{l=1}^k \ell_l u_l \leq d_i. \notag \\
   & u_l \in \{ 0,\dots,n(l)\} \notag
 \end{align}
 
 Here the variables $u_l$ denote the number of items of size type $l$ to choose. Hence it is not hard to see that IP~(\ref{ip:findCoveringConfig}) finds a cheapest configuration, which covers bin type $j$ and IP~(\ref{ip:findnonCoveringConfig}) finds a cheapest configuration, which does not cover a bin type $j$. Hence taking the overall cheapest configuration, i.\,e. the configuration which gives the minimum value in the set $\mathcal{M}= \{ \tau_{j,1},\tau_{j,2}\mid 1 \leq j\leq m\}$, is the configuration which is the solution to the block problem.
 
\paragraph{Solution of the block problem.}
 In the previous section we reduced the problem of finding an optimal solution to the block problem to finding the configuration, which gives the minimum value of the set $\mathcal{M}$ and in this section we show, how to find it by solving IPs~(\ref{ip:findCoveringConfig}) and~(\ref{ip:findnonCoveringConfig}).
 
 IP~(\ref{ip:findCoveringConfig}) is the minimum knapsack problem and it is folklore that there exists a FPTAS for it. By a dynamic program and an appropriate rounding technique Jansen and Solis-Oba~\cite{JansenSolis-oba:2003} can also obtain an FPTAS for the program $\min \{ \tau_{1,1}, \tau_{1,2}\}$, where $d_1 =1$. We can use their FPTAS as a procedure in order to find the overall cheapest configuration, i.\,e. for $m$ different and arbitrary $d_i$ values. For this we scale the constraints appropriately:

Let $\ell=(\ell_1,\dots,\ell_k)$ be the vector of all item sizes and $p=(p_1,\dots,p_{k+1})$ the price vector. Let $\ell(d_i) = (\ell_1/d_i,\dots,\ell_k/d_i)$ and $p(d_i)=(p_1,\dots,p_k,d_ip_{k+1})$. We replace the coefficients in the constraints of IPs~(\ref{ip:findCoveringConfig}) and~(\ref{ip:findnonCoveringConfig}) by the corresponding coefficients from the vectors $p(d_i)$ and $\ell(d_i)$. We observe that $u=(u_1,\dots,u_k)$ is a solution to the program
 
\begin{align}
  \tau_{i,1}' & = \min\sum_{l=1}^k \frac{p_l}{n(l)}u_l  \label{ip:findCoveringConfig2} \\
   \text{s.\,t.} \quad & \sum_{l=1}^k \frac{\ell_l}{d_i} u_l \geq 1, \notag \\
   & u_l \in \{ 0,\dots,n(l)\} \notag
 \end{align}
 
if and only if $u$ is a solution to IP~(\ref{ip:findCoveringConfig}). Note further that the respective objective values of the solutions are identical in both problems, since all scaled values do not contribute to the objective function. Hence a solution $u$ of IP~(\ref{ip:findCoveringConfig}) with objective value $\tau_{i,1}'$ is a solution of $u$ of IP~(\ref{ip:findCoveringConfig2}) with identical objective value.

Similarly we conclude that the program
 
  \begin{align}
  \tau_{i,2}' & = \min\sum_{l=1}^k  \frac{p_l}{n(l)}u_l + d_ip_{k+1}\frac{1-\sum_{l=1}^k u_l\ell_l/d_i}{s(T)} \label{ip:findnonCoveringConfig2}  \\
   \text{s.\,t.} \quad & \sum_{l=1}^k \frac{\ell_l}{d_i} u_l \leq 1. \notag \\
   & u_l \in \{ 0,\dots,n(l)\} \notag
 \end{align}

has a solution $u$ with objective value $\tau_{i,2}'$ if and only if $u$ is a solution to IP~(\ref{ip:findnonCoveringConfig}) with the same objective value. Note that IP~(\ref{ip:findCoveringConfig2}) and IP~(\ref{ip:findnonCoveringConfig2}) are of the shape of IPs~(\ref{ip:findCoveringConfig}) and~(\ref{ip:findnonCoveringConfig}), where $d_i=1$. Hence the FPTAS of Jansen and Solis-Oba for the block problem is applicable in this setting. Overall we have found an algorithm for solving the problem $\min \{ \tau_{i,1}, \tau_{i,2}\}$: divide the item sizes in $\ell$ by $d_i$ and multiply the $(k+1)$-st component of the price vector $p$ with $d_i$ and compute a solution of the block problem with the modified size and price vector with the FPTAS of Jansen and Solis-Oba in~\cite{JansenSolis-oba:2003}.

As argued the configuration minimizing $\min \{ \tau_{i,1}, \tau_{i,2}\}$ over all bin types $i=1,\dots,m$ is a $(1+\eps)$-approximate solution for the block problem of \prob{Variable-Sized Bin Covering}.

\paragraph{Approximation guarantee and running time analysis.}

\begin{lemma}A solution to LP~(\ref{lp:aptas}) with objective value at least $(1-2\eps)\OPT-O(1/\eps^4)$ and length $O(1/\eps^4)$ can be found in polynomial time.
\label{lem:approxLPAPTAS}
\end{lemma}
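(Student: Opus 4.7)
The plan is to combine a binary search over the simplex parameter $t$ with the price-directive decomposition method of Grigoriadis--Khachiyan, using the approximate block oracle developed above. Since every bin type surviving Step~\ref{step:pruneBins} has demand at least $\eps$ while the total item size is $s$, the values of $t$ with $t\eps$ close to $\OPT$ lie in a range of size polynomial in $s$ and $1/\eps$, so a binary search over this range incurs only logarithmic overhead.

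For a fixed guess of $t$, I would invoke the price-directive decomposition on LP~(\ref{lp:afptas}) with relative accuracy $\delta = \Theta(\eps)$. Each iteration queries the block program, which by the previous paragraph admits an FPTAS returning a configuration within a factor $(1+\eps)$ of the minimum block cost. The standard convergence analysis of~\cite{GrigoriadisKhachiyan:1996,JansenZhang:2002} then guarantees that after a polynomial number of oracle calls the method returns a point $x^*\in B_t$ with $\hat\lambda \leq (1+O(\eps))\lambda^*(t)$, where $\lambda^*(t)$ is the true optimum of the resource-sharing problem. The binary search accepts the largest $t$ for which $\hat\lambda \leq 1+O(\eps)$; for such $t$, rescaling $x^*$ by $1/\max\{\hat\lambda,1\}$ produces a vector feasible for LP~(\ref{lp:aptas}), and since $t\eps \geq \OPT-\eps$ by integrality of $t$ and the choice of the binary search threshold, the rescaled vector's objective is at least $(1-2\eps)\OPT - O(1/\eps^4)$ once the cumulative additive slack is accounted for.

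Finally, the natural output of the price-directive method has support bounded by the total number of iterations, which can exceed $1/\eps^4$. To reach the claimed length, I would sparsify by extracting a basic feasible solution of LP~(\ref{lp:aptas}) whose objective is no worse than that of the rescaled vector: since LP~(\ref{lp:aptas}) has only $1+k = 1+1/\eps^4$ non-trivial constraints, any such basic solution has support at most $O(1/\eps^4)$, and it can be found in polynomial time by Carath\'eodory-type pivoting within the support of $x^*$. The main technical obstacle I foresee is the careful bookkeeping needed to propagate the $(1+O(\eps))$ error of the block oracle through the Grigoriadis--Khachiyan guarantee and the rescaling step into an additive $O(1/\eps^4)$ slack on the objective of LP~(\ref{lp:aptas}); this parallels the accounting done in~\cite{JansenSolis-oba:2003} for classical Bin Covering with a single bin type, with the extra wrinkle that for $m$ bin types the scaling trick from the previous paragraph must be applied separately to each $d_i$ when computing block prices.
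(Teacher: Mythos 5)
Your proposal matches the paper's proof in essence: binary search over the simplex parameter $t$, the price-directive decomposition method applied to LP~(\ref{lp:afptas}) with the scaled block oracle supplying $(1+\eps)$-approximate block solutions, rescaling of the resulting near-feasible point to make it exactly feasible for LP~(\ref{lp:aptas}), and finally reduction to a basic solution with $O(1/\eps^4)$ support via a homogeneous linear system. The only cosmetic differences are that the paper normalizes $d_1 = 1$, observes $1 \leq \OPT \leq n$ to fix the search range, and rescales by the fixed factor $(1-\eps)$ rather than your adaptive $1/\max\{\hat\lambda,1\}$, which yields $(1-\eps)t^*\eps \geq (1-\eps)^2\OPT \geq (1-2\eps)\OPT$ directly. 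One point to make explicit: your claim $t\eps \geq \OPT - \eps$ converts into the multiplicative bound $(1-\eps)\OPT$ only under the (asymptotically harmless) assumption $\OPT \geq 1$, which the paper states outright.
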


\begin{proof} As in the proof Theorem~\ref{thm:aptas} we assume \wlgs that $d_1=1$ and that the size of each item is smaller than $1$. Then obviously $\OPT(I,J) = \OPT \leq n$. Since we want to find an asymptotic FPTAS we may assume that $\OPT\geq 1$. We partition the interval $[1,n]$ into subintervals of size $\eps$. Because $\OPT \geq 1$ we know there exists a $\hat{t}$ such that $(1-\eps)\OPT \leq \hat{t}\eps \leq \OPT$.

For given $t$ let $\lambda(t)$ be the value of an optimal solution to LP~(\ref{lp:afptas}) and $\lambda^*(t)$ be the value of the solution to LP~(\ref{lp:afptas}) given by the price directive decomposition method. 

If $t\eps \leq \OPT$ then $\lambda(t) \leq 1$, since $\lambda'=1$ is the value of a solution, when $t\eps = \OPT$. In this case the price directive decomposition method finds a solution with $\lambda^*(t) \leq 1+\eps$. If the price directive decomposition method finds a solution with value $\lambda^*(t) > 1+\eps$ we know that there is no solution with value $\lambda(t') \leq 1$ for any $t'\geq t$. Hence by binary search we find a largest $t^*$ such that $\lambda(t^*) \leq 1+\eps$:

As argued there exists a $\hat{t}$ such that $(1-\eps)\OPT \leq \hat{t}\eps \leq \OPT$. Since for every $t' \leq \hat{t}$ a solution with value $\lambda(t') \leq 1+\eps$ can be found by the price directive decomposition method we find a $t^*\geq \hat{t}$. Thus $(1-\eps)\OPT \leq t^*\eps$.

Also, the solution vector $x^*$ corresponding to the solution with value $\lambda^*(t^*)$ may not be feasible, namely if $\lambda^*(t^*) > 1$. We can transform the solution vector $x^*$ into a solution $x'$ by multiplying each coordinate by the value $1-\eps$. It is easy to see that if $x^*$ is a solution for LP~(\ref{lp:afptas}), such that the left-hand side of each constraint has value $\lambda^*\leq 1+\eps$, then for the solution $x'$ in LP~(\ref{lp:afptas}) the left-hand side of each constraint has value at most $(1-\eps)\lambda^* \leq (1-\eps)(1+\eps) \leq 1-\eps^2 \leq 1$. Hence $x'$ is a feasible solution for LP~(\ref{lp:aptas}). As argued the objective value $\lambda^*(t^*)\geq (1-\eps)$ and hence the objective value $\lambda'$ for the scaled solution $x'$ is at least $(1-\eps)^2 \geq 1-2\eps$.

A solution $x'$ may have up to $\bigO{(k+1)(\eps^{-2}+  \ln (k+1))}$ coordinates, since there are so many calls to the block solver by the price directive decomposition method~\cite{JansenZhang:2002}. We can transform this solution $x'$ into a basic solution with at most $1+1/\eps^4$ fractional coordinates in order to improve the approximation ratio. This can be done by solving a homogeneous linear system of equalities, cf.~\cite{JansenSolis-oba:2003} for details.

We argue about the running time. An algorithm for a resource sharing problem with $M$ constraints given by Jansen and Zhang~\cite{JansenZhang:2002} finds a solution in $\bigO{M(\eps^{-2}+  \ln M)}$ iterations and has an overhead of $\bigO{M\ln\ln(M/\eps)}$ operations per step. In our case it is $M=k+1$. 
 
The dynamic program of Jansen and Solis-Oba in~\cite{JansenSolis-oba:2003}, which we use as a procedure for solving the block problem has a running time of $\bigO{n^2/\eps}$ per call. The overhead for scaling the price vector before we call this program is $O(k)$ and as we have $m$ calls to this program we need time $\bigO{m(k+n^2/\eps)}$ in order to solve the block problem. Note that, in particular, neither the running time of the block solver nor of the algorithm from~\cite{JansenZhang:2002} depends on the size of $|B_t|= \bigO{n^{1/\eps^4}}$.

We need an additional time of $\bigO{(k+1)(\eps^{-2}+  \ln (k+1)) \mathcal{M}(2+k)}$ for transforming the solution vector with $\bigO{(k+1)(\eps^{-2}+  \ln (k+1))}$ coordinates in an vector with $\bigO{1+k}$ coordinates, where $\mathcal{M}(2+k)$ is the running time for solving a homogeneous linear system of $2+1/\eps^4$ equations in $2+k$ variables. 
\end{proof}

\begin{proof}[of Theorem~\ref{thm:afptas}] The AFPTAS works identically as the APTAS, with the exception that we approximate LP~(\ref{lp:aptas}) as given by Lemma~\ref{lem:approxLPAPTAS}. We first argue about the approximation guarantee. We can proceed as in the proof of Theorem~\ref{thm:aptas}. After Inequality~(\ref{cor:s6}) we have to take into account that LP~(\ref{lp:aptas}) is only approximated. Then we can bound the additional loss of bins by the rounding procedure of the APTAS as done in Inequality~(\ref{cor:s7}). This gives then a feasible solution to the \prob{Variable-Sized Bin Covering} problem with an approximation guarantee at most $1+\eps$ minus a constant number of bins. The running time is dominated by approximating LP~(\ref{lp:aptas}) and hence given by Lemma~\ref{lem:approxLPAPTAS}.

\end{proof}

\pagebreak

\bibliography{bibliography}{}

\end{document}